\numberwithin{equation}{section}
\newtheorem{theorem}{Theorem}
\newtheorem{lemma}{Lemma}
\newtheorem{proposition}{Proposition}
\renewcommand{\Re}{\operatorname{Re}}
\renewcommand{\Im}{\operatorname{Im}}
\begin{document}

\title{ Inverse scattering transform and multiple high-order pole solutions  for the Gerdjikov-Ivanov equation   under the zero/nonzero background }
\author{Zechuan ZHANG$^1$\thanks{\ Email address: 17110180013@fudan.edu.cn } and Engui FAN$^{1}$\thanks{\ Corresponding author and email address: faneg@fudan.edu.cn } }
\footnotetext[1]{ \  School of Mathematical Sciences  and Key Laboratory of Mathematics for Nonlinear Science, Fudan University, Shanghai 200433, P.R. China.}

\date{ }

\maketitle
\begin{abstract}
\baselineskip=17pt

In this article, the inverse scattering transform is considered for the Gerdjikov-Ivanov equation with zero and non-zero boundary conditions
  by a  matrix Riemann-Hilbert (RH)   method.
 The formula of the soliton solutions are established by Laurent expansion to the RH problem.
The method we used is different from computing solution with simple poles since the residue conditions here are hard to obtained.
The formula of multiple soliton solutions with one high-order pole and $N$ multiple high-order poles are obtained respectively.
The dynamical properties and   characteristic  for the  high-order pole solutions are further analyzed.

\noindent{\bf Keywords:}  Gerdjikov-Ivanov equation;  Riemann-Hilbert problem; multiple high-order poles;  soliton solution.

\noindent {\bf MSC2020:} 35Q51; 35C08; 37K15.

\end{abstract}

\newpage

 \tableofcontents

\baselineskip=17pt

\section {Introduction}

\quad

The inverse scattering transform method plays a significant role during the discovery process of the exact solutions of completely integrable systems \cite{RN19,RN16}.
As a new version of inverse scattering transform method,  the Riemann-Hilbert (RH)  approach   has become the preferred research technique to the researchers in   investigating the soliton solutions and the long-time asymptotics of integrable systems in recent years \cite{RN20,RN15}.
More recently,  the RH approach has  been widely used  to investigate the integrable systems with nonzero boundary  \cite{RN17,RN18,RN21,RN22,RN23,RN24}.
  The high-order   soliton solutions which have the same velocity and locate at the same position also have been studied   \cite{RN25,RN26,RN27}.
  The general method to obtain  the high-order solitons with the classical  inverse scattering transform (IST)  method involves
   some complicated   calculation,  especially for the case of multiple high-order poles \cite{RN28,RN29,RN30}.
In this case,   it is effective  to construct  high-order pole  soliton solutions  of integrable systems by Laurent expansion to the RH problem \cite{RN30,zy2019,zys2020}.

 It is well-known that the nonlinear Schr\"{o}dinger (NLS) equation \cite{RN1,RN2}
\begin{equation}
iq_t+q_{xx}+2|q|^2q=0  \label{NLS}
\end{equation}
  is one of the most important integrable systems, which  plays an important role and has applications in a wide variety of fields. Besides the NLS equation (\ref{NLS}), derivative NLS (DNLS) equations were also introduced to investigate the effects of high-order perturbations \cite{RN3,RN5,EN}. Among them, there are three  derivative NLS equations \cite{EN},  the first one is Kaup-Newell   equation  \cite{RN8}
 \begin{equation}
 iq_t+q_{xx}+i(|q|^2)_x=0.
 \end{equation}
The second type is the Chen-Lee-Liu equation   \cite{RN6}
 \begin{equation}
 iq_t+q_{xx}+i|q|^2q_x=0.
 \end{equation}
The third type  is  the Gerdjikov-Ivanov (GI) equation   which    takes the form   \cite{RN7}
\begin{equation}
iq_t+q_{xx}-iq^2q_x^\ast+\dfrac{1}{2}q^3q^{\ast2}=0,\label{GI}
\end{equation}
where the asterisk $*$ means the complex conjugation.   The  DNLS equations are regarded as models in a wide variety of fields such as weakly nonlinear dispersive water waves, nonlinear optical fibers, quantum field theory and plasmas \cite{APPLI1,APPLI2,APPLI4,APPLI5}.
   In plasma physics, the GI equation (\ref{GI})  is a model for Alfv$\acute{e}$n waves propagating parallel to the ambient magnetic field, where $q$ being the transverse magnetic field perturbation and $x$ and $t$ being space and time coordinates, respectively  \cite{APPLI6,APPLI7}.
 The GI equation has been studied through many methods. For instance, the Darboux transformation \cite{RN9},
the nonlinearization \cite{RN11,RN12}, the similarity reduction, the bifurcation theory and others   \cite{RN13,RN14}.
Especially,  RH method is used to construct  N-soliton of the   GI  equation  with zero boundary \cite{NZG}. Recently, we  used the RH method  to construct simple
pole solutions  of the   GI  equation
with  nonzero boundary conditions \cite{RN31}.

In this article, we further investigate the inverse scattering transform and    high-order   solutions  of GI equation (\ref{GI}) with zero boundary condition
\begin{equation}
q(x,t)\rightarrow 0,\hspace{0.5cm}x\rightarrow \pm\infty,\label{bc0}
\end{equation}
and the following nonzero boundary conditions
\begin{equation}
q(x,t) \sim  q_{\pm}e^{-\frac{3}{2}iq_0^4t+iq_0^2x},\hspace{0.5cm}x\to{\pm}\infty,\label{bc1}
\end{equation}
where $ \left|q_{\pm}\right|= q_0 > 0$, and $q_{\pm}$ are independent of $x, t$.
 The formula of multiple soliton  solutions of the GI equation are obtained, which correspond to multiple high-order poles of the RH problem.

This paper is organized as follows. In section 2, we construct the RH problem of GI equation (\ref{GI}) with zero boundary condition, and display the relationship between the solutions of the RH problem and GI equation, then we derive the formula of    single high-order   solutions  and multiple high-order   solutions  of  GI equation. In section 3, by the same method, we give the one single high-order soliton solution and multiple high-order soliton solutions of  GI equation (\ref{GI}) with nonzero boundary condition. We give the patterns for both zero and nonzero boundary conditions.

\section{IST with zero boundary and high-order pole}

\subsection{Spectral analysis}

 \subsubsection{ Eigenfunctions  and scattering matrix  }

It is well-known that the GI equation (\ref{GI}) admits the Lax pair \cite{RN9}
\begin{equation}
\psi_x = X\psi,\hspace{0.5cm}\psi_t = T\psi, \label{lax1}
\end{equation}
where
\begin{equation}
X=-ik^2\sigma_3+kQ-\frac{i}{2}Q^2\sigma_3,
\end{equation}
\begin{equation}
T=-2ik^4\sigma_3+2k^3Q-ik^2Q^2\sigma_3-ikQ_x\sigma_3+\frac{1}{2}(Q_xQ-QQ_x)+\frac{i}{4}Q^4\sigma_3,
\end{equation}
and
\begin{equation}
\sigma_3=\left(\begin{array}{cc}
                      1 & 0   \\
                      0 & -1
                      \end{array}\right),\hspace{0.5cm}Q=\left(\begin{array}{cc}
                                                                0 & q  \\
                                                                -q^{\ast} &0
                                                               \end{array}\right).
\end{equation}

With zero boundary (\ref{bc0}), asymptotic spectra problem of the Lax pair (\ref{lax1}) becomes
\begin{equation}
\psi_x=X_\pm\psi,\hspace{0.5cm}\psi_t=T_\pm\psi,
\end{equation}
where
\begin{equation}
X_\pm=-ik^2\sigma_3,\hspace{0.5cm}T_\pm=-2ik^3\sigma_3.
\end{equation}
We define the Jost eigenfunctions $\phi_\pm(x,t,k)$ as the simultaneous solutions of both parts of the Lax pair, so that
\begin{equation}
\phi(x,t,k)=\psi(x,t,k)e^{i\theta(k)\sigma_3},
\end{equation}
where $\theta(k)=k^2(x+2k^2t)$, then
\begin{equation}
\phi_\pm(x,t,k)\rightarrow I,\hspace{0.5cm}x\rightarrow \pm\infty.
\end{equation}
Meanwhile, $\phi_\pm$ acquire the equivalent Lax pair
\begin{subequations}
\begin{equation}
\phi_x(x,t,k)+ik^2[\sigma_3,\phi(x,t,k)]=\Delta X_\pm\phi(x,t,k);
\end{equation}
\begin{equation}
\phi_t(x,t,k)+2ik^4[\sigma_3,\phi(x,t,k)]=\Delta T_\pm\phi(x,t,k),
\end{equation}
\end{subequations}
where $\Delta X_\pm=X-X_\pm$ and $\Delta T_\pm=T-T_\pm$.

Since $\psi_\pm(x,t,k)$ are two fundamental matrix solutions, there exists a constant matrix $S(k)$   such that
\begin{equation}
\psi_+(x,t,k)=\psi_-(x,t,k)S(k), \label{scattering0}
\end{equation}
where $S(k)=(s_{ij}(k))_{2\times2}$ is referred to the scattering matrix and its entries as the scattering coefficients. It follows from (\ref{scattering0}) that $s_{ij}$ has the Wronskian representation:
\begin{subequations}
\begin{equation}
s_{11}(k)=\text{Wr}(\psi_{+,1},\psi_{-,2}),\hspace{0.5cm}s_{12}(k)=\text{Wr}(\psi_{+,2},\psi_{-,2}),
\end{equation}
\begin{equation}
s_{21}(k)=\text{Wr}(\psi_{-,1},\psi_{+,1}),\hspace{0.5cm}s_{22}(k)=\text{Wr}(\psi_{-,1},\psi_{+,2}).
\end{equation}
\end{subequations}

\subsubsection{ Analyticity  }

As a result, the Volterra integral equations are
\begin{equation}
\phi_\pm(x,t,k)=I+\int_{\pm\infty}^x e^{-ik^2(x-y)\sigma_3}\Delta X_\pm\phi_\pm(y,t,k)e^{ik^2(x-y)\sigma_3}dy.
\end{equation}
We define $D^+$, $D^-$ and $\Sigma$ as
\begin{equation}
D^+:=\{k\in\mathbb{C}:\text{Re}k\text{Im}k>0\},\hspace{0.2cm}D^-:=\{k\in\mathbb{C}:\text{Re}k\text{Im}k<0\},\hspace{0.2cm}\Sigma:=\mathbb{R}\cup i\mathbb{R}.
\end{equation}

\begin{proposition}
Suppose that $q(x,t)\in L^1(\mathbb{R})$ and $\phi_{\pm,j}(x,t,k)$ denotes the $j$th column of $\phi_\pm(x,t,k)$, then $\phi_\pm(x,t,k)$ have the following properties:

$\bullet$ $\phi_{-,1}$,$\phi_{+,2}$ and $s_{22}$  are analytic in $D^+$ and continuous in $D^+\cup\Sigma$.

$\bullet$ $\phi_{+,1}$,  $\phi_{-,2}$ and $s_{11}$ are analytic in $D^-$ and continuous in $D^-\cup\Sigma$.

$\bullet$  $s_{12}$ and $s_{21}$ are continuous on  $\Sigma$.

\end{proposition}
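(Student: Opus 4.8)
The plan is to read off the analyticity directly from the Volterra integral equations for $\phi_\pm$, tracking the single exponential that governs each column. First I would split the matrix equation column by column. Writing $\phi_\pm=(\phi_{\pm,1},\phi_{\pm,2})$ and using $e^{ik^2(x-y)\sigma_3}e_1=e^{ik^2(x-y)}e_1$ together with $e^{ik^2(x-y)\sigma_3}e_2=e^{-ik^2(x-y)}e_2$, the conjugation by $e^{\mp ik^2(x-y)\sigma_3}$ collapses so that the $j$th column obeys a closed Volterra equation whose kernel is $\mathrm{diag}(1,e^{2ik^2(x-y)})\,\Delta X_\pm$ for $j=1$ and $\mathrm{diag}(e^{-2ik^2(x-y)},1)\,\Delta X_\pm$ for $j=2$. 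The entire argument then rests on the scalar $e^{\pm 2ik^2(x-y)}$, whose modulus is $e^{\mp 2\,\mathrm{Im}(k^2)(x-y)}$ with $\mathrm{Im}(k^2)=2\,\mathrm{Re}(k)\,\mathrm{Im}(k)$.

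Second, I would match the sign of this exponent to the direction of integration. For $\phi_{-,1}$ the integral runs over $y\le x$ (so $x-y\ge 0$) with offending factor $e^{2ik^2(x-y)}$, which is bounded by $1$ exactly when $\mathrm{Im}(k^2)\ge 0$, i.e. $k\in\overline{D^+}$; the same computation places $\phi_{+,2}$ in $\overline{D^+}$ and $\phi_{+,1},\phi_{-,2}$ in $\overline{D^-}$. This is precisely why the natural domains are the crossed-quadrant sets $D^\pm$ and why $\Sigma=\mathbb{R}\cup i\mathbb{R}$, where $\mathrm{Im}(k^2)=0$, is their common boundary.

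Third, on each closed region I would solve the Volterra equation by Neumann iteration $\phi=\sum_{n\ge 0}K^nI$, where $K$ is the corresponding integral operator. Because the exponential kernel has modulus $\le 1$ throughout $\overline{D^\pm}$, each iterate is majorized by the ordered (simplex) bound $\frac{1}{n!}\big(\int_{\mp\infty}^{x}\|\Delta X_\pm(y)\|\,dy\big)^n$, and the hypothesis $q\in L^1(\mathbb{R})$ renders the entries of $\Delta X_\pm$ integrable for fixed $k$, giving absolute convergence that is uniform on compact subsets (the polynomial $k$-dependence of $\Delta X_\pm$ being harmless since $k$ is held fixed and the estimates are uniform on compacta). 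Each iterate is analytic in $k$ on the open region, since the integrand is entire in $k$ and the $y$-integral converges uniformly; by the Weierstrass/Morera theorem the uniform limit is analytic in the interior $D^\pm$. Because the same bound uses only $|e^{\pm 2ik^2(x-y)}|\le 1$, which persists on $\Sigma$, the series also converges uniformly up to the boundary, yielding continuity on $D^\pm\cup\Sigma$.

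Finally, I would transfer these properties to the scattering data through the Wronskian formulas. Writing $\psi_{\pm,1}=\phi_{\pm,1}e^{-i\theta}$ and $\psi_{\pm,2}=\phi_{\pm,2}e^{i\theta}$, the exponential factors cancel in the relevant determinants, so $s_{22}=\text{Wr}(\phi_{-,1},\phi_{+,2})$ and $s_{11}=\text{Wr}(\phi_{+,1},\phi_{-,2})$; since both columns in each pairing are analytic in the same half-domain, $s_{22}$ inherits analyticity in $D^+$ and $s_{11}$ in $D^-$, each continuous up to $\Sigma$. By contrast $s_{12}$ and $s_{21}$ each pair a $D^+$-column with a $D^-$-column, so they admit no common domain of analyticity and are only continuous on $\Sigma$. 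The main obstacle is the bookkeeping in the first two steps, namely extracting the correct single exponential for each column and aligning its growth or decay with the integration range to produce exactly the domains $D^\pm$; once that is done, the convergence and analyticity are the standard Volterra/Neumann argument.
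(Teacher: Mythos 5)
Your proposal is correct and follows precisely the route the paper leaves implicit: the paper states this proposition without proof, immediately after writing down the Volterra integral equations, and your column-wise extraction of the factors $e^{\pm 2ik^{2}(x-y)}$, the sign-versus-integration-direction matching that produces $D^{\pm}$, the Neumann-series/Morera argument, and the transfer to $s_{11},s_{22}$ via the Wronskians (with the $e^{\pm i\theta}$ factors cancelling) is exactly the standard argument being invoked. The only caveat, inherited from the paper's own hypothesis rather than introduced by you, is that $\Delta X_{\pm}$ contains the diagonal term $-\tfrac{i}{2}Q^{2}\sigma_{3}$ with entries proportional to $|q|^{2}$, so the iteration really requires $q\in L^{1}\cap L^{2}$ (or similar) rather than $q\in L^{1}$ alone.
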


As usual, the reflection coefficients $r(k)$ are defined as
\begin{equation}
r(k)=\frac{s_{12}(k)}{s_{22}(k)},\hspace{0.5cm}\tilde{r}(k)=\frac{s_{21}(k)}{s_{11}(k)},\hspace{0.5cm}k\in\Sigma.
\end{equation}

\subsubsection{Symmetries}

\begin{proposition}
The Jost solution, scattering matrix and reflection coefficients satisfy the following reduction conditions
\begin{itemize}
\item The first symmetry reduction
\begin{equation}
\phi_\pm(x,t,k)=\sigma_2\phi_\pm^*(x,t,k^*)\sigma_2,\hspace{0.5cm}S(k)=\sigma_2S(k^*)^*\sigma_2,\hspace{0.5cm}r(k)=-\tilde{r}(k^*)^*,\label{symmetry01}
\end{equation}
\item The second symmetry reduction
\begin{equation}
\phi_\pm(x,t,k)=\sigma_1\phi_\pm^*(x,t,-k^*)\sigma_1,\hspace{0.5cm}S(k)=\sigma_1S(-k^*)^*\sigma_1,\hspace{0.5cm}r(k)=\tilde{r}(-k^*)^*,\label{symmetry02}
\end{equation}
\end{itemize}
\end{proposition}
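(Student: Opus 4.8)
The plan is to trace all three identities in each reduction back to a single source: an invariance of the Lax pair coefficient matrices $X$ and $T$ under the relevant involution of the spectral parameter, and then to propagate that invariance first to the Jost solutions $\phi_\pm$ by uniqueness, next to $S(k)$ through the scattering relation (\ref{scattering0}), and finally to $r,\tilde r$ through their definitions. Concretely, for the first reduction I would verify the matrix identity $\sigma_2\,[X(x,t,k^*)]^*\,\sigma_2=X(x,t,k)$ together with its analogue for $T$, and for the second the identity $\sigma_1\,[X(x,t,-k^*)]^*\,\sigma_1=X(x,t,k)$ and the corresponding one for $T$, where the superscript $*$ denotes entrywise complex conjugation applied after the spectral substitution.

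These invariances rest on a short list of elementary facts about the Pauli matrices and $Q$: $\sigma_2 Q^*\sigma_2=Q$, $\sigma_1 Q^*\sigma_1=-Q$, the anticommutation relations $\sigma_j\sigma_3\sigma_j=-\sigma_3$ for $j=1,2$, the involutivity $\sigma_1^2=\sigma_2^2=I$, and the observation that $Q^2=-|q|^2 I$ is scalar. Conjugating the definition of $X$ and substituting $k\mapsto k^*$ (resp.\ $k\mapsto -k^*$), one checks that the three terms $-ik^2\sigma_3$, $kQ$ and $-\tfrac{i}{2}Q^2\sigma_3$ are each returned to themselves, which yields the invariance; the same bookkeeping applied to the longer expression for $T$ (with the extra pieces $-ikQ_x\sigma_3$, $\tfrac12(Q_xQ-QQ_x)$ and $\tfrac i4 Q^4\sigma_3$) is the only genuinely tedious computation and ensures consistency with the time evolution. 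I would also record the phase identity $\theta(k^*)^*=\theta(k)=\theta(-k^*)^*$, which forces $\sigma_j e^{-i\theta\sigma_3}\sigma_j=e^{i\theta\sigma_3}$ and so guarantees that the transformed object $\sigma_j\phi_\pm^*\sigma_j$ carries exactly the same exponential normalization as $\phi_\pm$.

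With the matrix invariance in hand, the core of the argument is a uniqueness step. If $\phi_\pm(x,t,k)$ solves the modified Lax pair with the normalization $\phi_\pm\to I$ as $x\to\pm\infty$, then the invariance of $X$ (and $T$) shows that $\sigma_2\phi_\pm^*(x,t,k^*)\sigma_2$ solves the same system, and since $\sigma_2 I\sigma_2=I$ it obeys the same boundary value, so the two coincide by uniqueness of the governing Volterra integral equation; this gives the first identity of (\ref{symmetry01}), and the identical scheme with $\sigma_1$ and $k\mapsto-k^*$ gives (\ref{symmetry02}). Substituting $\phi_\pm(x,t,k)=\sigma_2\phi_\pm^*(x,t,k^*)\sigma_2$ into (\ref{scattering0}) and using invertibility of $\psi_-$ then forces $S(k)=\sigma_2 S(k^*)^*\sigma_2$; reading off entries via $\sigma_2 S^*\sigma_2=\left(\begin{smallmatrix} s_{22}^* & -s_{21}^*\\ -s_{12}^* & s_{11}^*\end{smallmatrix}\right)$ yields $s_{12}(k)=-s_{21}(k^*)^*$ and $s_{22}(k)=s_{11}(k^*)^*$, whence $r(k)=s_{12}/s_{22}=-\tilde r(k^*)^*$, and the analogous reading of $\sigma_1 S^*\sigma_1$ produces the second reduction. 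The main obstacle I anticipate is not conceptual but clerical: keeping the entrywise conjugation, the spectral substitution and the $\sigma_j$-conjugation consistently ordered throughout the bulky $T$ computation (the sign flips all originate from $\sigma_j\sigma_3\sigma_j=-\sigma_3$), and confirming that both involutions interchange $D^+$ and $D^-$ in exactly the way demanded by the analyticity domains established above, so that relations such as $s_{22}(k)=s_{11}(k^*)^*$ are meaningful and consistent.
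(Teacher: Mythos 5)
Your proposal is correct, and it is the standard argument the authors evidently have in mind: the paper states this proposition without proof, and your route (invariance of $X$, $T$ and of the phase $\theta$ under the involutions, uniqueness of the normalized Volterra/Jost solutions, then propagation through $\psi_+=\psi_-S$ and the entrywise identities $s_{22}(k)=s_{11}(k^*)^*$, $s_{12}(k)=-s_{21}(k^*)^*$, resp.\ their $\sigma_1$ analogues) is exactly the canonical derivation, with all the key identities ($\sigma_jQ^*\sigma_j=\pm Q$, $\sigma_j\sigma_3\sigma_j=-\sigma_3$, $\theta(k^*)^*=\theta(-k^*)^*=\theta(k)$) stated and used correctly.
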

where
\begin{equation}
\sigma_1=\left(\begin{array}{cc}
               0 & 1\\
               1 & 0
               \end{array}\right),\hspace{0.5cm}\sigma_2=\left(\begin{array}{cc}
                                                                0 & -i\\
                                                                i & 0
                                                                \end{array}\right).
\end{equation}

\subsubsection{Asymptotic behaviors}
To solve the RH problem in the next section, it is necessary to discuss the asymptotic behaviors of the modified Jost solutions and scattering matrix as $k\rightarrow\infty$ by the standard Wentzel-Kramers-Brillouin (WKB) expansions.
\begin{proposition}
The asymptotic behaviors for the modified Jost solutions and scattering matrix are given as
\begin{equation}
\phi_\pm(x,t,k)=I-\frac{i}{2k}\sigma_3Q+o(k^{-1}),\hspace{0.5cm}S(k)=I+o(k^{-1}),\hspace{0.5cm}k\rightarrow\infty
\end{equation}
\end{proposition}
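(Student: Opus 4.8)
The plan is to insert a formal inverse-power (WKB) series in $k$ into the $x$-part of the modified Lax pair and match coefficients order by order, exploiting that $[\sigma_3,\cdot]$ is invertible on off-diagonal matrices but annihilates diagonal ones. Concretely, I would write
\begin{equation}
\phi_\pm(x,t,k)=\sum_{n=0}^{\infty}k^{-n}\phi^{(n)}(x,t),\qquad \phi^{(0)}=I,
\end{equation}
and substitute it into $\phi_x+ik^2[\sigma_3,\phi]=\big(kQ-\tfrac{i}{2}Q^2\sigma_3\big)\phi$, using that $\Delta X_\pm=kQ-\frac{i}{2}Q^2\sigma_3$. Collecting the coefficient of $k^{-m}$ yields the recursion
\begin{equation}
i[\sigma_3,\phi^{(m+2)}]=Q\phi^{(m+1)}-\tfrac{i}{2}Q^2\sigma_3\phi^{(m)}-\phi^{(m)}_x .
\end{equation}

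First I would read off the top two orders. The coefficient of $k^2$ forces $[\sigma_3,\phi^{(0)}]=0$, so $\phi^{(0)}$ is diagonal, and the normalization $\phi_\pm\to I$ fixes $\phi^{(0)}=I$. The coefficient of $k^1$ gives $i[\sigma_3,\phi^{(1)}]=Q$; since $[\sigma_3,\cdot]$ acts on the off-diagonal entries as multiplication by $\pm2$, this yields $\phi^{(1),\mathrm{off}}=-\frac{i}{2}\sigma_3 Q$, exactly the claimed coefficient. It then remains to show that the diagonal part of $\phi^{(1)}$ vanishes. Here I would combine the next two orders with the boundary data: the diagonal part of the $k^0$ equation is the identity $0=0$ (the contributions $Q\phi^{(1),\mathrm{off}}$ and $-\frac{i}{2}Q^2\sigma_3$ cancel because $Q\sigma_3 Q=|q|^2\sigma_3$ and $Q^2=-|q|^2I$), while the diagonal part of the $k^{-1}$ equation, after substituting the already-determined $\phi^{(2),\mathrm{off}}$, collapses to $\big(\phi^{(1),\mathrm{diag}}\big)_x=0$. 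Since $\phi_\pm^{(1)}\to0$ as $x\to\pm\infty$, this forces $\phi^{(1),\mathrm{diag}}\equiv0$. (Equivalently, $\det\phi_\pm=1$ — which follows from $\mathrm{tr}\,X=0$ and the normalization — gives $\mathrm{tr}\,\phi^{(1)}=0$, and together with the symmetry reductions this also pins down the diagonal part.) Hence $\phi_\pm=I-\frac{i}{2k}\sigma_3 Q+o(k^{-1})$.

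For the scattering matrix I would use that $S(k)$ is independent of $x,t$. From $\phi(x,t,k)=\psi(x,t,k)e^{i\theta\sigma_3}$ and $\psi_+=\psi_-S$ one obtains $\phi_-^{-1}\phi_+=e^{-i\theta\sigma_3}S\,e^{i\theta\sigma_3}$. Both $\phi_+$ and $\phi_-$ share the same leading expansion $I-\frac{i}{2k}\sigma_3 Q+o(k^{-1})$ (the $O(k^{-1})$ coefficient is the universal off-diagonal term, the diagonal part being zero for each by the argument above), so
\begin{equation}
\phi_-^{-1}\phi_+=\Big(I+\tfrac{i}{2k}\sigma_3 Q\Big)\Big(I-\tfrac{i}{2k}\sigma_3 Q\Big)+o(k^{-1})=I+o(k^{-1}).
\end{equation}
Conjugation by $e^{\pm i\theta\sigma_3}$ preserves the diagonal and only rotates the off-diagonal part, so $S(k)=I+o(k^{-1})$; the same conclusion follows by reading off the $x$-independent Wronskian representations of $s_{ij}$ from the WKB asymptotics of the columns.

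The main obstacle is not the algebra but the justification that this formal series is a genuine asymptotic expansion with the stated remainder $o(k^{-1})$. For that I would return to the Volterra integral equation, estimate its Neumann series under the hypothesis $q\in L^1(\mathbb{R})$, and integrate by parts in $k$ to extract the leading term with an error controlled uniformly for large $k$ in $\Sigma\cup D^\pm$; the $t$-part is treated identically and is consistent with the same coefficients. The one genuinely delicate point in the formal computation is the diagonal/off-diagonal bookkeeping: because $[\sigma_3,\cdot]$ is singular on diagonal matrices, the diagonal entries of each $\phi^{(n)}$ are fixed not algebraically but by first-order ODEs, and one must invoke the boundary conditions (or unimodularity together with the symmetries) to rule out a diagonal contribution at order $k^{-1}$.
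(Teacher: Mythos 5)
Your proposal is correct and follows exactly the route the paper itself invokes: the paper states this proposition with no detailed argument beyond citing ``standard WKB expansions,'' and your formal inverse-power expansion of the modified Lax pair (with the diagonal/off-diagonal splitting, the cancellation $Q\sigma_3Q=|q|^2\sigma_3$, $Q^2=-|q|^2I$, and the boundary/symmetry argument killing the diagonal part of the $k^{-1}$ coefficient) is precisely that computation carried out in full, with the transfer to $S(k)$ via $\phi_-^{-1}\phi_+=e^{-i\theta\sigma_3}Se^{i\theta\sigma_3}$ also being the standard step implicit in the paper. No gaps; your closing remark that full rigor for the $o(k^{-1})$ remainder requires returning to the Volterra equations correctly identifies the only part the formal expansion leaves unjustified.
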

Furthermore, solutions of the GI equation will be constructed by
\begin{equation}
q(x,t)=\lim_{k\rightarrow\infty}2i(k\phi_{\pm})_{12}.
\end{equation}

\subsection{Riemann-Hilbert problem}
As we all know, the equation (\ref{scattering0}) is the beginning of the formulation of the inverse  problem. We always regard it as a relation between eigenfunctions analytic in $D^+$ and those analytic in $D^-$. Thus it is necessary for us to introduce the following RH problem.
\begin{proposition}
Define the sectionally meromorphic matrix
\begin{equation}
M(x,t,k)=\Bigg\{\begin{array}{ll}
            M^-=\left(\begin{array}{cc}
           \dfrac{\phi_{+,1}}{s_{11}} & \phi_{-,2}\\
           \end{array}\right), &\text{as } k\in D^-,\\
                 M^+=\left(\begin{array}{cc}
                           \phi_{-,1} & \dfrac{\phi_{+,2}}{s_{22}}\\
                           \end{array}\right), &\text{as }k\in D^+.\\
                 \end{array}
\end{equation}
Then a multiplicative matrix RH problem is proposed:

$\bullet$ Analyticity: $M(x,t,k)$ is analytic in $\mathbb{C}\setminus\Sigma$.

$\bullet$ Jump condition
\begin{equation}
M^-(x,t,k)=M^+(x,t,k)(I-G(x,t,k)),\hspace{0.5cm}k \in \Sigma,\label{jump}
\end{equation}
where
\begin{equation}
G(x,t,k)=\left(\begin{array}{cc}
                r(k)\tilde{r}(k) & e^{2i\theta}\tilde{r}(k)\\
                -e^{-2i\theta}r(k) & 0
                \end{array}\right).
\end{equation}

$\bullet$ Asymptotic behaviors
\begin{equation}
M(x,t,k) \sim I+O(k^{-1}),\hspace{0.5cm}k \rightarrow \infty,
\end{equation}
\end{proposition}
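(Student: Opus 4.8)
The plan is to verify the three defining properties of the RH problem — analyticity, the jump relation, and the normalization at infinity — directly from the already-established properties of the Jost functions and scattering data. The analyticity and asymptotic conditions are essentially inherited from the earlier propositions, while the jump condition is the one substantive computation.

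First I would dispose of the analyticity claim. By construction $M^+=(\phi_{-,1},\ \phi_{+,2}/s_{22})$ and $M^-=(\phi_{+,1}/s_{11},\ \phi_{-,2})$, and the analyticity proposition guarantees that $\phi_{-,1},\phi_{+,2},s_{22}$ are analytic in $D^+$ while $\phi_{+,1},\phi_{-,2},s_{11}$ are analytic in $D^-$. Hence $M^+$ is analytic in $D^+$ away from the zeros of $s_{22}$ and $M^-$ is analytic in $D^-$ away from the zeros of $s_{11}$; since these zeros are isolated, $M$ is sectionally meromorphic on $\mathbb{C}\setminus\Sigma$ and continuous up to $\Sigma$. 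The normalization is equally quick: the asymptotic proposition gives $\phi_\pm=I-\frac{i}{2k}\sigma_3Q+o(k^{-1})$ and $S=I+o(k^{-1})$, so in particular $s_{11},s_{22}\to1$; substituting these into the four columns shows every block of $M$ tends to the identity with an $O(k^{-1})$ correction.

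The core of the argument is the jump relation on $\Sigma$. I would start from the scattering relation $\psi_+=\psi_-S(k)$ and rewrite it in terms of the modified Jost functions via $\phi_{\pm,1}=e^{i\theta}\psi_{\pm,1}$ and $\phi_{\pm,2}=e^{-i\theta}\psi_{\pm,2}$; clearing the exponentials, dividing the two column relations by $s_{11}$ and $s_{22}$, and using $r=s_{12}/s_{22}$, $\tilde r=s_{21}/s_{11}$ turns them into
\begin{align*}
\frac{\phi_{+,1}}{s_{11}} &= \phi_{-,1} + \tilde{r}\, e^{2i\theta}\phi_{-,2},\\
\frac{\phi_{+,2}}{s_{22}} &= r\, e^{-2i\theta}\phi_{-,1} + \phi_{-,2}.
\end{align*}
The second identity lets me solve for $\phi_{-,2}$ in terms of the columns of $M^+$; substituting this back into the first identity expresses both columns of $M^-$ as linear combinations of the columns of $M^+=(\phi_{-,1},\ \phi_{+,2}/s_{22})$. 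Reading off the coefficients produces a $2\times2$ matrix with diagonal entries $1-r\tilde r$ and $1$ and off-diagonal entries built from the $e^{\pm2i\theta}$-weighted reflection coefficients, which is precisely $I-G$, establishing the multiplicative relation $M^-=M^+(I-G)$ on $\Sigma$.

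The step I expect to require the most care is exactly this last rearrangement: one must track the $e^{i\theta\sigma_3}$ conjugation consistently through every column, keep the scattering-matrix indices straight, and respect the multiplicative (rather than additive) form of the jump, since this is precisely where a sign error or a swap of $r$ and $\tilde r$ in the two off-diagonal slots would creep in. It is also worth checking separately that the derived relation holds on both branches $\mathbb{R}$ and $i\mathbb{R}$ of $\Sigma$, where the reflection coefficients are only continuous rather than analytic, so that the jump is genuinely posed on all of $\Sigma$.
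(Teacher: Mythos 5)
Your overall strategy is correct, and it is the same (implicit) route the paper relies on: the paper states this proposition without giving a proof, and the three bullets are indeed verified exactly as you describe --- analyticity and the normalization are inherited from the earlier propositions, and the jump is extracted from the scattering relation $\psi_+=\psi_-S(k)$ after conjugating by $e^{i\theta\sigma_3}$. Your two column identities
\begin{equation*}
\frac{\phi_{+,1}}{s_{11}}=\phi_{-,1}+\tilde r\,e^{2i\theta}\phi_{-,2},
\qquad
\frac{\phi_{+,2}}{s_{22}}=r\,e^{-2i\theta}\phi_{-,1}+\phi_{-,2},
\end{equation*}
are correct under the paper's conventions ($\phi=\psi e^{i\theta\sigma_3}$, $r=s_{12}/s_{22}$, $\tilde r=s_{21}/s_{11}$); in fact the second one is precisely the relation the paper itself uses later in Section 2.3 when expanding $M_{12}$. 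Your observation that $M$ is really only meromorphic, with poles at the zeros of $s_{11}$ and $s_{22}$, is also a point the proposition glosses over, and you handle it correctly.

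The gap is in the very last step --- the one you yourself flagged as delicate but then did not actually carry out. Substituting the second identity into the first gives
\begin{equation*}
M^-=M^+\begin{pmatrix}1-r\tilde r & -r\,e^{-2i\theta}\\[2pt] \tilde r\,e^{2i\theta} & 1\end{pmatrix},
\qquad\text{whereas the printed matrix is}\qquad
I-G=\begin{pmatrix}1-r\tilde r & -\tilde r\,e^{2i\theta}\\[2pt] r\,e^{-2i\theta} & 1\end{pmatrix}.
\end{equation*}
These differ by exactly the interchange $r\,e^{-2i\theta}\leftrightarrow \tilde r\,e^{2i\theta}$ in the two off-diagonal slots, so your assertion that the computation lands ``precisely'' on $I-G$ is false as stated. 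The resolution is that the printed $G$ in this proposition is a typo rather than an error in your identities: your computed jump is the one consistent with the paper's own expansion $M_{12}=e^{-2i\theta}r(k)\phi_{-,11}+\phi_{-,12}$ in Section 2.3, and with the nonzero-background analogue (the later RH proposition), where $r$ sits in the $(1,2)$ slot and $\tilde r$ in the $(2,1)$ slot, with the exponentials flipped because that section uses the opposite sign convention $\mu_\pm=\phi_\pm e^{-i\theta\sigma_3}$. A complete proof must exhibit this discrepancy and correct $G$ accordingly; one cannot simply declare the two forms equal, since on $\Sigma$ the functions $r\,e^{-2i\theta}$ and $\tilde r\,e^{2i\theta}$ are genuinely different (the first symmetry only gives $\tilde r(k)=-r^*(k^*)$, so the two entries are related by a sign and conjugation, not by equality).
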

Moreover, new solutions of the GI equation can be reconstructed by $M(x,t,k)$ as
\begin{equation}
q(x,t)=\lim_{k\rightarrow\infty}2i(kM)_{12}.\label{q0}
\end{equation}

\subsection{ Single  high-order pole solutions }

We assume $s_{22}(k)$ have high-order poles $\{\pm k_j: \Re k_j>0\}_{j=1}^N$, from the symmetries (\ref{symmetry01}) and (\ref{symmetry02}), we know that $\{\pm k_j^*:\Im k_j^*<0\}_{j=1}^N$ are high-order poles of $s_{11}(k)$. So $s_{22}(k)$ can be expanded as:
\begin{equation}
s_{22}(k)=(k^2-k_1^2)^{n_1}(k^2-k_2^2)^{n_2}\cdots (k^2-k_N)^{n_N}s_0(k),
\end{equation}
where $s_0(k)\neq0$ for all $k\in D^+$.
When $s_{22}(k)$ only has $N$ simple zeros, the RHP can be solved straightforward by the residue conditions, and the formula of $N$th order solition solutions of GI equation are obtained through (\ref{q0}). However, as $s_{22}(k)$ has multiple high-order zero points, the residue conditions are not enough, and the coefficients related to much higher negative power of $k\pm k_j$ and $k\pm k_j^*$ should be considered.
For convenience, we will consider the simplest case at first where $s_{22}(k)$ has only one higher order zero point.

Let $k_0\in D^+$ is the Nth-order pole, from the symmetries (\ref{symmetry01}) and (\ref{symmetry02}) it is obvious that $-k_0\in D^+$ also is the Nth-order pole of $s_{22}(k)$. Then $\pm k_0^*$ are the Nth-order poles of $s_{11}(k)$. The discrete spectrum is the set
\begin{equation}
\{\pm k_0,\pm k_0^*\},
\end{equation}
which can be seen in Figure 1.
\begin{center}
\begin{tikzpicture}[node distance=2cm]
\filldraw[yellow!40,line width=3] (2.8,0.01) rectangle (0.01,2.8);
\filldraw[yellow!40,line width=3] (-2.8,-0.01) rectangle (-0.01,-2.8);
\draw[->](-3,0)--(3,0)node[right]{Re$k$};
\draw[->](0,-3)--(0,3)node[above]{Im$k$};
%\draw (2,0) arc (0:360:2);
\draw[->](0,0)--(-1.5,0);
\draw[->](-1.5,0)--(-2.8,0);
\draw[->](0,0)--(1.5,0);
\draw[->](1.5,0)--(2.8,0);
\draw[->](0,2.7)--(0,2.2);
\draw[->](0,1.6)--(0,0.8);
\draw[->](0,-2.7)--(0,-2.2);
\draw[->](0,-1.6)--(0,-0.8);
\coordinate (A) at (2.2,2.2);
\coordinate (B) at (2.2,-2.2);
\coordinate (C) at (-0.9090909090909,0.9090909090909);
\coordinate (D) at (-0.9090909090909,-0.9090909090909);
\coordinate (E) at (0.9090909090909,0.9090909090909);
\coordinate (F) at (0.9090909090909,-0.9090909090909);
\coordinate (G) at (-2.2,2.2);
\coordinate (H) at (-2.2,-2.2);
\coordinate (I) at (0,2);
\coordinate (J) at (1.414213562373095,1.414213562373095);
\coordinate (K) at (1.414213562373095,-1.414213562373095);
\coordinate (L) at (-1.414213562373095,1.414213562373095);
\coordinate (M) at (-1.414213562373095,-1.414213562373095);
%\fill (A) circle (1pt) node[right] {$z_0$};
%\fill (B) circle (1pt) node[right] {$z_0^*$};
%\fill (C) circle (1pt) node[right] {$-\frac{q_0^2}{z_n}$};
%\fill (D) circle (1pt) node[right] {$-\frac{q_0^2}{z_n^*}$};
%\fill (E) circle (1pt) node[left] {$\frac{q_0^2}{z_n^*}$};
%\fill (F) circle (1pt) node[left] {$\frac{q_0^2}{z_n}$};
%\fill (G) circle (1pt) node[left] {$-z_0^*$};
%\fill (H) circle (1pt) node[left] {$-z_0$};
%\fill (I) circle (1pt) node[right] {$q_0$};
\fill (J) circle (1pt) node[right] {$k_0$};
\fill (K) circle (1pt) node[right] {$k_0^*$};
\fill (L) circle (1pt) node[left] {$-k_0^*$};
\fill (M) circle (1pt) node[left] {$-k_0$};
\label{zplane2}
\end{tikzpicture}
\flushleft{\footnotesize {\bf Figure 1.}  Distribution of   the the discrete spectrum and  the contours for the RH problem on complex $k$-plane.
 }
\end{center}
Let
\begin{equation}
s_{22}(k)=(k^2-k_0^2)s_0(k),
\end{equation}
in which $s_0(k)\neq 0$ in $D^+$. According to the Laurent series expansion in poles, $r(k)$ and $r^*(k^*)$ can be respectively expanded as
\begin{subequations}
\begin{equation}
r(k)=r_0(k)+\sum_{m=1}^N\frac{r_m}{(k-k_0)^m},\hspace{0.2cm}\text{in } k_0;\hspace{0.5cm}r(k)=\tilde{r}_0(k)+\sum_{m=1}^N\frac{(-1)^{m+1}r_m}{(k+k_0)^m}\hspace{0.2cm}\text{in } -k_0;
\end{equation}
\begin{equation}
r^*(k^*)=r_0^*(k^*)+\sum_{m=1}^N\frac{r_m^*}{(k-k_0^*)^m}\hspace{0.2cm}\text{in }k_0^*;\hspace{0.5cm}r^*(k^*)=\tilde{r}_0^*(k^*)+\sum_{m=1}^N\frac{(-1)^{m+1}r_m^*}{(k+k_0^*)}\hspace{0.2cm}\text{in }-k_0^*;
\end{equation}
\end{subequations}
where $r_m$ are defined by
\begin{align}
r_m=\lim_{k\rightarrow k_0}\frac{1}{(N-m)!}\frac{\partial^{N-m}}{\partial k^{N-m}}[(k-k_0)^Nr(k)],\hspace{0.5cm}m=1,2,...,N,
\end{align}
and $r_0(k)$ and $\tilde{r}_0(k)$ are analytic for all $k\in D^+$. The definition of $M(x,t,k)$ yields that $k=\pm k_0$ are Nth-order poles of $M_{12}$, while $k=\pm k_0^*$ are Nth-order poles of $M_{11}$. According to the normalization condition sated in proposition 5 one can set
\begin{subequations}
\begin{align}
&M_{11}(x,t,k)=1+\sum_{s=1}^N\Big(\frac{F_s(x,t)}{(k-k_0^*)^s}+\frac{H_s(x,t)}{(k+k_0^*)^s}\Big),\label{M01}\\
&M_{12}(x,t,k)=\sum_{s=1}^N\Big(\frac{G_s(x,t)}{(k-k_0)^s}+\frac{L_s(x,t)}{(k+k_0)^s}\Big),\label{M02}
\end{align}\label{M0}
\end{subequations}
where $F_s(x,t)$, $H_s(x,t)$, $G_s(x,t)$, $L_s(x,t)$($s=1,2,...,N$)  are unknown functions which need to be determined. Once these functions are solved, the solution $M(x,t,k)$ of RHP will be obtained and the solutions $q(x,t)$ of the GI equation will be obtained from (\ref{q0}).

Now we are in position to solve $F_s(x,t)$, $H_s(x,t)$, $G_s(x,t)$ and $L_s(x,t)$($s=1,2,...,N$). According to Taylor series expansion, one has
\begin{subequations}
\begin{equation}
e^{-2i\theta(k)}=\sum_{l=0}^{+\infty}f_l(x,t)(k-k_0)^l,\hspace{0.5cm}e^{-2i\theta(k)}=\sum_{l=0}^{+\infty}(-1)^lf_l(x,t)(k+k_0)^l,
\end{equation}
\begin{equation}
e^{2i\theta(k)}=\sum_{l=0}^{+\infty}f_l^*(x,t)(k-k_0^*)^l,\hspace{0.5cm}e^{2i\theta(k)}=\sum_{l=0}^{+\infty}(-1)^lf_l^*(x,t)(k+k_0^*)^l,
\end{equation}
\begin{equation}
M_{11}(x,t,k)=\sum_{l=0}^{+\infty}\mu_l(x,t)(k-k_0)^l,\hspace{0.5cm}M_{11}(x,t,k)=\sum_{l=0}^{+\infty}(-1)^l\mu_l(x,t)(k+k_0)^l,
\end{equation}
\begin{equation}
M_{12}(x,t,k)=\sum_{l=0}^{+\infty}\zeta_l(x,t)(k-k_0^*)^l,\hspace{0.5cm}M_{12}(x,t,k)=\sum_{l=0}^{+\infty}(-1)^{l+1}\zeta_l(x,t)(k+k_0^*)^l,
\end{equation}
\end{subequations}
where
\begin{subequations}
\begin{align}
&f_l(x,t)=\lim_{k\rightarrow k_0}\frac{1}{l!}\frac{\partial^l}{\partial k^l}e^{-2ik^2(x+2k^2t)};  \\
&\mu_l(x,t)=\lim_{k\rightarrow k_0}\frac{1}{l!}\frac{\partial^l}{\partial k^l}M_{11}(x,t,k),\hspace{0.5cm}\zeta_l(x,t)=\lim_{k\rightarrow k_0^*}\frac{1}{l!}\frac{\partial^l}{\partial k^l}M_{12}(x,t,k).\label{mu0zeta0}
\end{align}
\end{subequations}
When $k\in D^+$, we have the expansions in $k=k_0$
\begin{align}
M_{11}(k)=\phi_{-,11}=\sum_{l=0}^{+\infty}\mu_l(x,t)(k-k_0)^l,\hspace{0.5cm}&M_{12}(k)=\frac{\phi_{+,22}}{s_{22}}=e^{-2i\theta}r(k)\phi_{-,11}+\phi_{-,12}.
\end{align}
comparing the coefficients of $(k-k_0)^{-s}$ with (\ref{M02}), we can get
\begin{equation}
G_s(x,t)=\sum_{j=s}^N\sum_{l=0}^{j-s}r_jf_{j-s-l}(x,t)\mu_l(x,t).
\end{equation}
Similarly, from the expansions in $k=-k_0$, we can get that
\begin{equation}
L_s(x,t)=\sum_{j=s}^N\sum_{l=0}^{j-s}(-1)^{s+1}r_jf_{j-s-l}(x,t)\mu_l(x,t).
\end{equation}
With the same method, when $k\in D^-$, we can obtain that
\begin{equation}
F_s(x,t)=-\sum_{j=s}^N\sum_{l=0}^{j-s}r_j^*f_{j-s-l}^*(x,t)\zeta_l(x,t),\hspace{0.5cm}H_s(x,t)=\sum_{j=s}^{N}\sum_{l=0}^{j-s}(-1)^{s+1}r_j^*f_{j-s-l}^*(x,t)\zeta_l(x,t).\label{FH01}
\end{equation}
Actually, $\mu_l(x,t)$ and $\zeta_l(x,t)$ can also be expressed by $F_s(x,t)$, $H_s(x,t)$, $G_s(x,t)$ and $L_s(x,t)$. Recalling the definitions of $\zeta_l(x,t)$ and $\mu_l(x,t)$ given by (\ref{mu0zeta0}) and substituting (\ref{M0}) into them, we can obtain
\begin{align}
&\zeta_l(x,t)=\sum_{s=1}^N\left(\begin{array}{c}
                                 s+l-1\\
                                 l
                                 \end{array}\right)\Big\{\frac{(-1)^lG_s(x,t)}{(k_0^*-k_0)^{l+s}}+\frac{(-1)^lL_s(x,t)}{(k_0^*+k_0)^{l+s}}\Big\},\hspace{0.5cm}l=0,1,2,...,\\
&\mu_l(x,t)=\left\{\begin{array}{l}
                   1+\sum_{s=1}^N\Big\{\frac{F_s(x,t)}{(k_0-k_0^*)^s}+\frac{H_s(x,t)}{(k_0+k_0^*)^s}\Big\},\hspace{0.5cm}l=0;\\
                   \sum_{s=1}^N\left(\begin{array}{c}
                   s+l-1\\
                   l\end{array}\right)\Big\{\frac{(-1)^lF_s(x,t)}{(k_0-k_0^*)^{s+l}}+\frac{(-1)^lH_s(x,t)}{(k_0+k_0^*)^{s+l}}\Big\},\hspace{0.5cm}l=1,2,3,...
                   \end{array}\right.
\end{align}\label{zeta1mu1}
Using (\ref{FH01}) and (\ref{zeta1mu1}), we obtain the system
\begin{subequations}
\begin{align}
&F_s(x,t)=-\sum_{j=s}^N\sum_{l=0}^{j-s}\sum_{p=1}^{N}\left(\begin{array}{c}
                                                           p+l-1\\
                                                           l\end{array}\right)r_j^*f_{j-s-l}^*\Big\{\frac{(-1)^lG_p(x,t)}{(k_0^*-k_0)^{l+p}}+\frac{(-1)^lL_p(x,t)}{(k_0^*+k_0)^{l+p}}\Big\},\\
&H_s(x,t)=\sum_{j=s}^N\sum_{l=0}^{j-s}\sum_{p=1}^{N}(-1)^{s+1}\left(\begin{array}{c}
                                                           p+l-1\\
                                                           l\end{array}\right)r_j^*f_{j-s-l}^*\Big\{\frac{(-1)^lG_p(x,t)}{(k_0^*-k_0)^{l+p}}+\frac{(-1)^lL_p(x,t)}{(k_0^*+k_0)^{l+p}}\Big\},\\
&G_s(x,t)=\sum_{j=s}^Nr_jf_{j-s}+\sum_{j=s}^N\sum_{l=0}^{j-s}\sum_{p=1}^N\left(\begin{array}{c}
                                                                                p+l-1\\
                                                                                l\end{array}\right)r_jf_{j-s-l}\Big\{\frac{(-1)^lF_p(x,t)}{(k_0-k_0^*)^{l+p}}+\frac{(-1)^lH_p(x,t)}{(k_0+k_0^*)^{l+p}}\Big\},\\
&L_s(x,t)=\sum_{j=s}^N(-1)^{s+1}r_jf_{j-s}&\\&+\sum_{j=s}^N\sum_{l=0}^{j-s}\sum_{p=1}^N(-1)^{s+1}\left(\begin{array}{c}
                                                                                p+l-1\\
                                                                                l\end{array}\right)r_jf_{j-s-l}\Big\{\frac{(-1)^lF_p(x,t)}{(k_0-k_0^*)^{l+p}}+\frac{(-1)^lH_p(x,t)}{(k_0+k_0^*)^{l+p}}\Big\},\nonumber
\end{align}\label{system01}
\end{subequations}
Let us introduce
\begin{align}
&|\eta\rangle=(\eta_1,...,\eta_N)^T,\hspace{0.2cm}\eta_s=\sum_{j=s}^Nr_jf_{j-s}(x,t),\\
&|\tilde{\eta}\rangle=(\tilde{\eta}_1,...,\tilde{\eta}_N)^T,\hspace{0.2cm}\tilde{\eta}_s=\sum_{j=s}^N(-1)^{s+1}r_jf_{j-s}(x,t),\\
&|F\rangle=(F_1,F_2,...,F_N)^T,\hspace{0.5cm}|H\rangle=(H_1,H_2,...,H_N)^T,\\
&|G\rangle=(G_1,G_2,...,G_N)^T,\hspace{0.5cm}|L\rangle=(L_1,L_2,...,L_N)^T,\\
&\Omega_1=[\Omega_{1,sp}]_{N\times N}=\Big[-\sum_{j=s}^N\sum_{l=0}^{j-s}\left(\begin{array}{c}
                                                                               p+l-1\\
                                                                               l\end{array}\right)\frac{(-1)^lr_j^*f_{j-s-l}^*(x,t)}{(k_0^*-k_0)^{l+p}}\Big]_{N\times N},\\
&\Omega_2=[\Omega_{2,sp}]_{N\times N}=\Big[-\sum_{j=s}^N\sum_{l=0}^{j-s}\left(\begin{array}{c}
                                                                               p+l-1\\
                                                                               l\end{array}\right)\frac{(-1)^lr_j^*f_{j-s-l}^*(x,t)}{(k_0^*+k_0)^{l+p}}\Big]_{N\times N},\\
&\Omega_3=[\Omega_{3,sp}]_{N\times N}=\Big[\sum_{j=s}^N\sum_{l=0}^{j-s}\left(\begin{array}{c}
                                                                               p+l-1\\
                                                                               l\end{array}\right)\frac{(-1)^{s+l+1}r_j^*f_{j-s-l}^*(x,t)}{(k_0^*-k_0)^{l+p}}\Big]_{N\times N},\\
&\Omega_4=[\Omega_{4,sp}]_{N\times N}=\Big[\sum_{j=s}^N\sum_{l=0}^{j-s}\left(\begin{array}{c}
                                                                               p+l-1\\
                                                                               l\end{array}\right)\frac{(-1)^{s+l+1}r_j^*f_{j-s-l}^*(x,t)}{(k_0^*+k_0)^{l+p}}\Big]_{N\times N},
\end{align}
where the superscript $^T$ denotes the transposed matrix. Thus the linear system (\ref{system01}) can be rewritten as
\begin{equation}
\left\{\begin{array}{l}
        I|F\rangle+\mathbf{0}|H\rangle-\Omega_1|G\rangle-\Omega_2|L\rangle=\mathbf{0}\\
        \mathbf{0}|F\rangle+I|H\rangle-\Omega_3|G\rangle-\Omega_4|L\rangle=\mathbf{0}\\
        \Omega_1^*|F\rangle+\Omega_2^*|H\rangle+I|G\rangle+\mathbf{0}|L\rangle=|\eta\rangle\\
        \Omega_3^*|F\rangle+\Omega_4^*|H\rangle-\mathbf{0}|G\rangle-I|L\rangle=-|\tilde{\eta}\rangle
        \end{array}\right.
\end{equation}
Through direct calculations, $(|F\rangle,|H\rangle)^T$ and $(|G\rangle,|L\rangle)^T$ are explicitly solved as
\begin{align}
&(|F\rangle,|H\rangle)^T=\Omega(I_\sigma+\Omega^*\Omega)^{-1})(|\eta\rangle,-|\tilde{\eta})^T,\\
&(|G\rangle,|L\rangle)^T=(I_\sigma+\Omega^*\Omega)^{-1})(|\eta\rangle,-|\tilde{\eta})^T,
\end{align}
where
\begin{equation}
\Omega=\left(\begin{array}{cc}
             \Omega_1&\Omega_2\\
             \Omega_3&\Omega_4
             \end{array}\right),\hspace{0.5cm}I_\sigma=\left(\begin{array}{cc}
                                                              I_{N\times N}&\\
                                                              &-I_{N\times N}
                                                              \end{array}\right).
\end{equation}
Substituting $(|F\rangle,|H\rangle)^T$ and $(|G\rangle,|L\rangle)^T$ into the expansions of $M_{11}(x,t,k)$ and $M_{12}(x,t,k)$ given by (\ref{M0}), since it is well known that for matrix $A_{m\times n}$, $B_{n\times m}$ $\in \mathbb{K}$ ($\mathbb{K}$ is a field of numbers) $\det(I_m+AB)=\det(I_n+BA)$ ($I_m$ and $I_n$ are $m$ and $n$ dimension identity matrix respectively),  we get that
\begin{align}
\begin{split}
M_{11}(x,t,k)=&1+\big(\langle Y|,\langle\tilde{Y}|)(|F\rangle,|H\rangle\big)^T\\
&=\det\big(1+(\langle Y|,\langle\tilde{Y}|)(|F\rangle,|H\rangle)^T\big)\\
&=\det\big(1+(\langle Y|,\langle\tilde{Y}|)\Omega(I_\sigma+\Omega^*\Omega)^{-1}(|\eta\rangle,-|\tilde{\eta}\rangle)^T\big)\\
&=\det\big(I+(|\eta\rangle,-|\tilde{\eta}\rangle)^T(\langle Y|,\langle\tilde{Y}|)\Omega(I_\sigma+\Omega^*\Omega)^{-1}\big)\\
&=\frac{\det\big(I_\sigma+\Omega^*\Omega+(|\eta\rangle,-|\tilde{\eta}\rangle)^T(\langle Y|,\langle\tilde{Y}|)\Omega\big)}{\det\big(I_\sigma+\Omega^*\Omega\big)},
\end{split}
\end{align}
where
\begin{equation}
\langle Y(k)|=(\frac{1}{k-k_0^*},\frac{1}{(k-k_0^*)^2},...,\frac{1}{(k-k_0^*)^N}),\hspace{0.2cm}\langle \tilde{Y}(k)|=(\frac{1}{k+k_0^*},\frac{1}{(k+k_0^*)^2},...,\frac{1}{(k+k_0^*)^N}).
\end{equation}
In the same way, we get that
\begin{equation}
M_{12}=\frac{\det\Big(I_\sigma+\Omega^*\Omega+(|\eta\rangle,-|\tilde{\eta}\rangle)^T(\langle Y^*(k^*)|,\langle\tilde{Y}^*(k^*)|)\Big)}{\det\big(I_\sigma+\Omega^*\Omega\big)}-1.
\end{equation}
\begin{theorem}
With the rapidly decaying initial condition (\ref{bc0}), the $N$th order soliton of GI equation is
\begin{equation}
q(x,t)=2i\Big[\frac{\det\Big(I_\sigma+\Omega^*\Omega+(|\eta\rangle,-|\tilde{\eta}\rangle)^T(\langle Y_0|,\langle Y_0|)\Big)}{\det\big(I_\sigma+\Omega^*\Omega\big)}-1\Big],\label{q01}
\end{equation}
where,
\begin{equation}
\langle Y_0|=(1,0,...,0)_{1\times N}.
\end{equation}
\end{theorem}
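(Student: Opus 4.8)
The plan is to read $q(x,t)$ directly off the reconstruction formula (\ref{q0}), which reduces the entire statement to extracting the $O(k^{-1})$ coefficient of the off-diagonal entry $M_{12}$. Since $q(x,t)=\lim_{k\to\infty}2i(kM)_{12}=2i\lim_{k\to\infty}kM_{12}(x,t,k)$, and the closed determinant representation of $M_{12}$ has already been obtained immediately above the statement, the proof is essentially an asymptotic evaluation of that representation. To keep the bookkeeping transparent I would abbreviate $A:=I_\sigma+\Omega^*\Omega$, introduce the column $|w\rangle:=(|\eta\rangle,-|\tilde\eta\rangle)^T$ and the row $\langle v(k)|:=(\langle Y^*(k^*)|,\langle\tilde Y^*(k^*)|)$, so that the rank-one perturbation inside the determinant formula for $M_{12}$ is exactly $|w\rangle\langle v(k)|$.

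First I would reduce the determinant ratio to a scalar. Because $|w\rangle\langle v(k)|$ has rank one, the matrix determinant lemma, which is the $n=1$ specialization of the identity $\det(I_m+AB)=\det(I_n+BA)$ already invoked in the derivation of $M_{11}$, gives $\det(A+|w\rangle\langle v(k)|)=\det(A)\bigl(1+\langle v(k)|A^{-1}|w\rangle\bigr)$. Dividing by $\det(A)$ and subtracting $1$ cancels the constant term exactly, leaving the clean identity $M_{12}=\langle v(k)|A^{-1}|w\rangle$. This step uses only that $A=I_\sigma+\Omega^*\Omega$ is invertible on the relevant spectral set, which is guaranteed by the solvability of the linear system (\ref{system01}) that produced $(|F\rangle,|H\rangle)$ and $(|G\rangle,|L\rangle)$.

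Next I would compute the large-$k$ behaviour of the Cauchy row $\langle v(k)|$. Unwinding the conjugation in the notation, $\langle Y^*(k^*)|=\bigl((k-k_0)^{-1},\dots,(k-k_0)^{-N}\bigr)$ and $\langle\tilde Y^*(k^*)|=\bigl((k+k_0)^{-1},\dots,(k+k_0)^{-N}\bigr)$, whose poles at $\pm k_0$ indeed match those of $M_{12}$. Expanding each entry in powers of $1/k$ shows that only the first component of each block survives at order $k^{-1}$, so $\langle v(k)|=\tfrac1k(\langle Y_0|,\langle Y_0|)+O(k^{-2})$ with $\langle Y_0|=(1,0,\dots,0)_{1\times N}$. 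Multiplying the scalar identity by $k$ and letting $k\to\infty$ then yields $\lim_{k\to\infty}kM_{12}=(\langle Y_0|,\langle Y_0|)A^{-1}|w\rangle$, hence $q=2i\,(\langle Y_0|,\langle Y_0|)A^{-1}|w\rangle$.

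Finally I would fold the scalar back into determinant form by applying the same rank-one lemma in reverse, now with the constant row $(\langle Y_0|,\langle Y_0|)$, obtaining $(\langle Y_0|,\langle Y_0|)A^{-1}|w\rangle=\det\bigl(A+|w\rangle(\langle Y_0|,\langle Y_0|)\bigr)/\det(A)-1$, which upon restoring $A=I_\sigma+\Omega^*\Omega$ and $|w\rangle=(|\eta\rangle,-|\tilde\eta\rangle)^T$ is precisely (\ref{q01}). I expect no genuine analytic obstacle here; the only delicate points are purely organizational, namely tracking the conjugation hidden in $Y^*(k^*)$ so that the surviving leading term is $\langle Y_0|$ in each block, and confirming that the sub-leading Cauchy terms truly drop out after multiplication by $k$. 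All the substantive content, the representation of $M_{12}$ and the determinant identity, is already in hand.
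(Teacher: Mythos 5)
Your proposal is correct and follows essentially the same route as the paper: the paper's proof likewise starts from $q=\lim_{k\to\infty}2ikM_{12}$, writes $M_{12}=(\langle Y^*(k^*)|,\langle\tilde Y^*(k^*)|)(I_\sigma+\Omega^*\Omega)^{-1}(|\eta\rangle,-|\tilde\eta\rangle)^T$, takes the limit $k(\langle Y^*(k^*)|,\langle\tilde Y^*(k^*)|)\to(\langle Y_0|,\langle Y_0|)$, and folds the resulting scalar back into a determinant ratio via the rank-one identity $\det(I_m+AB)=\det(I_n+BA)$. The only cosmetic difference is that you begin from the determinant representation of $M_{12}$ and invert it to the bilinear form, whereas the paper works directly with the solved linear system $(|G\rangle,|L\rangle)^T=(I_\sigma+\Omega^*\Omega)^{-1}(|\eta\rangle,-|\tilde\eta\rangle)^T$, skipping that back-and-forth.
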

\begin{proof}
From the expansion of $M_{12}(x,t,k)$, it follows that
\begin{align*}
q(x,t)&=\lim_{k\rightarrow\infty}2ikM_{12}(x,t,k)\\
&=\lim_{k\rightarrow\infty}2ik\big(\langle Y^*(k^*)|,\langle\tilde{Y}^*(k^*)|\big)\big(|G\rangle,|L\rangle\big)^T\\
&=\lim_{k\rightarrow\infty}2ik\big(\langle Y^*(k^*)|,\langle\tilde{Y}^*(k^*)|\big)\big(I_\sigma+\Omega^*\Omega\big)^{-1}\big(|\eta\rangle,-|\tilde{\eta}\rangle\big)^T\\
&=2i\big(\langle Y_0|,\langle Y_0|\big)\big(I_\sigma+\Omega^*\Omega\big)^{-1}\big(|\eta\rangle,-|\tilde{\eta}\rangle\big)^T\\
&=2i\Big[\frac{\det\Big(I_\sigma+\Omega^*\Omega+(|\eta\rangle,-|\tilde{\eta}\rangle)^T(\langle Y_0|,\langle Y_0|)\Big)}{\det\big(I_\sigma+\Omega^*\Omega\big)}-1\Big].
\end{align*}
\end{proof}

\subsection{  Multiple  high-order pole solutions}
Now we will study the general case that $s_{22}(k)$ has $N$ high-order zero points $k_1$, $k_2$,...,$k_N$, $k_i\in D^+$ for $i=1,2,...,N$, and their powers are $n_1$, $n_2$,...,$n_N$ respectively.  Let $r_{j}(k)$ be $r(k)'s$ Laurent series in $k=\pm k_j$, like the case of one high-order pole discussed above, we can obtain
\begin{align}
&r_{j}(k)=r_{j,0}(k)+\sum_{m_j=1}^{n_j}\frac{r_{j,m_j}}{(k-k_j)^{m_j}},\hspace{0.5cm}r_{j}^*(k^*)=r_{j,0}^*(k^*)+\sum_{m_j=1}^{n_j}\frac{r_{j,m_j}^*}{(k-k_j^*)^{m_j}},\\
&r_{j}(k)=\tilde{r}_{j,0}(k)+\sum_{m_j=1}^{n_j}\frac{(-1)^{m_j+1}r_{j,m_j}}{(k+k_j)^{m_j}},\hspace{0.5cm}r_{j}^*(k^*)=\tilde{r}_{j,0}^*(k^*)+\sum_{m_j=1}^{n_j}\frac{(-1)^{m_j+1}r_{j,m_j}^*}{(k+k_j^*)^{m_j}},
\end{align}
where
\begin{equation*}
r_{j,m_j}=\lim_{k\rightarrow k_j}\frac{1}{(n_j-m_j)!}\frac{\partial^{n_j-m_j}}{\partial k^{n_j-m_j}}\big[(k-k_j)^{n_j}r(k)\big],
\end{equation*}
and $r_{j,0}(k)$ ($j=1,...,N$) is analytic for all $k\in D^+$.

By the similar method in above, the multiple solitons of the GI equation are obtained as follows.
\begin{theorem}
With the rapidly decaying initial condition (\ref{bc0}), if $s_{22}(k)$ has $N$ distinct high-order poles, then the multiple solitons of GI equation have the same form as (\ref{q01})
\begin{equation}
q(x,t)=2i\Big[\frac{\det\Big(I_\sigma+\Omega^*\Omega+|\eta\rangle\langle Y_0|\Big)}{\det\big(I_\sigma+\Omega^*\Omega\big)}-1\Big],
\end{equation}
where
\begin{subequations}
\begin{align}
&|\eta\rangle=[|\eta_1\rangle,|\eta_2\rangle,...,|\eta_N\rangle]^T,\hspace{0.2cm}|\eta_j\rangle=[\eta_{j,1},\eta_{j,2},...,\eta_{j,n_j},-\tilde{\eta}_{j,1},-\tilde{\eta}_{j,2},...,-\tilde{\eta}_{j,n_j}],\\
&\eta_{j,l}=\sum_{m_j=l}^{n_j}r_{j,m_j}f_{m_j-l}(x,t),\hspace{0.5cm}\tilde{\eta}_{j,l}=\sum_{m_j=l}^{n_j}(-1)^{l+1}r_{j,m_j}f_{m_j-l}(x,t),\\
&\langle Y_0|=[\langle Y_1^0|,\langle Y_2^0|,...,\langle Y_N^0|],\hspace{0.5cm}\langle Y_j^0|=[\langle Y_j^{00}|,\langle Y_j^{00}|],\hspace{0.5cm}\langle Y_j^{00}|=[1,0,...,0]_{1\times n_j},\\
&\Omega=\left(\begin{array}{cccc}
              [\omega_{11}]&[\omega_{12}]&\cdots&[\omega_{1N}]\\{}
              [\omega_{21}]&[\omega_{22}]&\cdots&[\omega_{2N}]\\
              \vdots &\vdots &\ddots &\vdots\\{}
              [\omega_{N1}]&[\omega_{N2}]&\cdots&[\omega_{NN}]
              \end{array}\right),\hspace{0.5cm}[\omega_{jl}]_{2n_j\times 2n_l}=\left(\begin{array}{cc}
                                                                                     [w_{jl}^1]_{n_j\times n_l}&[\omega_{jl}^2]_{n_j\times n_l}\\{}
                                                                                     [w_{jl}^3]_{n_j\times n_l}&[\omega_{jl}^4]_{n_j\times n_l}
                                                                                     \end{array}\right),\\
&w_{jl,pq}^1=-\sum_{m_j=p}^{n_j}\sum_{s_j=0}^{m_j-p}\left(\begin{array}{c}
                                                            q+s_j-1\\
                                                            s_j\end{array}\right)\frac{(-1)^{s_j}r_{j,m_j}^*f_{j,m_j-p-s_j}^*}{(k_j^*-k_l)^{s_j+q}},\\
&w_{jl,pq}^2=-\sum_{m_j=p}^{n_j}\sum_{s_j=0}^{m_j-p}\left(\begin{array}{c}
                                                            q+s_j-1\\
                                                            s_j\end{array}\right)\frac{(-1)^{s_j}r_{j,m_j}^*f_{j,m_j-p-s_j}^*}{(k_j^*+k_l)^{s_j+q}},\\
&w_{jl,pq}^3=\sum_{m_j=p}^{n_j}\sum_{s_j=0}^{m_j-p}\left(\begin{array}{c}
                                                            q+s_j-1\\
                                                            s_j\end{array}\right)\frac{(-1)^{p+s_j+1}r_{j,m_j}^*f_{j,m_j-p-s_j}^*}{(k_j^*-k_l)^{s_j+q}},\\
&w_{jl,pq}^4=\sum_{m_j=p}^{n_j}\sum_{s_j=0}^{m_j-p}\left(\begin{array}{c}
                                                            q+s_j-1\\
                                                            s_j\end{array}\right)\frac{(-1)^{p+s_j+1}r_{j,m_j}^*f_{j,m_j-p-s_j}^*}{(k_j^*+k_l)^{s_j+q}},\\
&I_\sigma=\left(\begin{array}{ccccc}
                 I_{n_1\times n_1}& & & & \\
                 &-I_{n_1\times n_1}& & & \\
                  & &\ddots & &\\
                  & & &I_{n_N\times n_N}& \\
                  & & & &-I_{n_N\times n_N}
                  \end{array}\right).
\end{align}
\end{subequations}
\end{theorem}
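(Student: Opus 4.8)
The plan is to replay, index for index, the argument that produced Theorem 1, promoting every single-pole object to a direct sum over the $N$ distinct poles $k_1,\dots,k_N$ of orders $n_1,\dots,n_N$. First I would write the two relevant entries of $M(x,t,k)$ as the most general meromorphic functions consistent with Proposition 1 and the prescribed pole structure: since $M_{11}$ can only be singular at the points $\pm k_j^*$ and $M_{12}$ only at $\pm k_j$, the normalization of the RH problem forces
\[
M_{11}=1+\sum_{j=1}^{N}\sum_{s=1}^{n_j}\Big(\frac{F_{j,s}}{(k-k_j^*)^s}+\frac{H_{j,s}}{(k+k_j^*)^s}\Big),\qquad
M_{12}=\sum_{j=1}^{N}\sum_{s=1}^{n_j}\Big(\frac{G_{j,s}}{(k-k_j)^s}+\frac{L_{j,s}}{(k+k_j)^s}\Big),
\]
with $F_{j,s},H_{j,s},G_{j,s},L_{j,s}$ the unknowns to be pinned down, exactly as in (\ref{M0}) but now carrying a pole label $j$.

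Next I would extract the singular parts of the jump relation at each pole. Near $k_j\in D^+$ the $(1,2)$-entry reads $M_{12}=e^{-2i\theta}r(k)M_{11}+(\text{analytic})$; Taylor-expanding $e^{-2i\theta}$ and $M_{11}$ about $k_j$ and using the Laurent coefficients $r_{j,m_j}$ of $r$, comparison of the coefficients of $(k-k_j)^{-s}$ yields
\[
G_{j,s}=\sum_{m_j=s}^{n_j}\sum_{l=0}^{m_j-s}r_{j,m_j}\,f_{m_j-s-l}\,\mu^{(j)}_l ,
\]
where $\mu^{(j)}_l$ is the $l$-th Taylor coefficient of $M_{11}$ at $k_j$. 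The expansion at $-k_j$ gives $L_{j,s}$ with the familiar $(-1)^{s+1}$ sign, and the conjugation symmetry of Proposition 2 produces the analogous $F_{j,s},H_{j,s}$ in terms of the Taylor coefficients $\zeta^{(j)}_l$ of $M_{12}$ at $k_j^*$. The decisive new feature appears when I re-expand $\mu^{(j)}_l$ and $\zeta^{(j)}_l$ by substituting the ansatz back in: because $M_{11}$ carries singular parts at \emph{every} $k_\ell^*$, its Taylor coefficient at $k_j$ is
\[
\mu^{(j)}_l=\delta_{l0}+\sum_{\ell=1}^{N}\sum_{s=1}^{n_\ell}\binom{s+l-1}{l}\Big(\frac{(-1)^l F_{\ell,s}}{(k_j-k_\ell^*)^{s+l}}+\frac{(-1)^l H_{\ell,s}}{(k_j+k_\ell^*)^{s+l}}\Big),
\]
so every pole couples to every other pole through the denominators $(k_j^*\pm k_\ell)^{s+l}$. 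These cross terms are precisely the off-diagonal blocks $[\omega_{jl}]$, $j\neq\ell$, of the theorem, while the diagonal $\ell=j$ reproduces the single-pole blocks.

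Collecting the four families of relations into stacked vectors $|F\rangle,|H\rangle,|G\rangle,|L\rangle$ ordered pole-by-pole turns the whole system into a block-linear system governed by the single matrix $\Omega=[\omega_{jl}]$, whose sub-blocks $w^{1}_{jl,pq},\dots,w^{4}_{jl,pq}$ are read off exactly as stated. I would then invert it as in the single-pole case to get $(|G\rangle,|L\rangle)^T=(I_\sigma+\Omega^*\Omega)^{-1}(|\eta\rangle,-|\tilde\eta\rangle)^T$, feed this into the reconstruction formula (\ref{q0}), $q=\lim_{k\to\infty}2ikM_{12}$, in which only the $1/k$ coefficient survives and selects the covector $\langle Y_0|$, and finally apply the Sylvester identity $\det(I_m+AB)=\det(I_n+BA)$ already used above to recast the rank-one update as the quoted ratio of determinants with the combined source vector $|\eta\rangle$.

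The main obstacle is organizational rather than conceptual: I must check that the triple sums fold consistently into the prescribed block matrix, in particular that the signs $(-1)^{p+s_j+1}$ and the complex conjugations land in the correct blocks ($w^3,w^4$ versus $w^1,w^2$) and that the pole-ordering of $|\eta\rangle$ and $\langle Y_0|$ is compatible with the block ordering of $\Omega$ and $I_\sigma$. A secondary point requiring care is solvability: one must argue that $\det(I_\sigma+\Omega^*\Omega)\neq 0$ so the inversion is legitimate, which follows from the structure of the discrete scattering data in the same way as for the single high-order pole.
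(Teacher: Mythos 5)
Your proposal is correct and follows essentially the same route as the paper, which itself proves Theorem 2 only by the remark ``by the similar method in above'': you replay the single high-order-pole derivation of Theorem 1 with the ansatz, jump-relation expansions, and linear system promoted to block form over the $N$ poles, the cross-coupling denominators $(k_j^*\pm k_l)^{s+q}$ supplying the off-diagonal blocks $[\omega_{jl}]$, followed by the same Sylvester-identity reconstruction of $q$. Your organizational checks (sign placement in $w^3,w^4$, block ordering of $|\eta\rangle$, $\langle Y_0|$, $I_\sigma$, and invertibility of $I_\sigma+\Omega^*\Omega$) are exactly the points the paper leaves implicit.
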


We then give the figure of one-soliton solution with one second-order pole
\begin{figure}[H]
    \centering
    \subfigure{
    \includegraphics[width=0.46\textwidth]{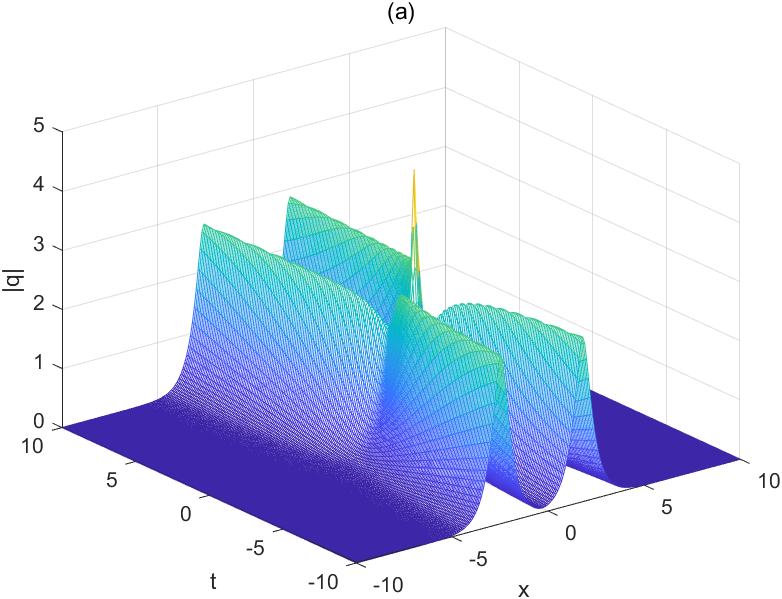}
   }
    \subfigure{
    \includegraphics[width=0.4\textwidth]{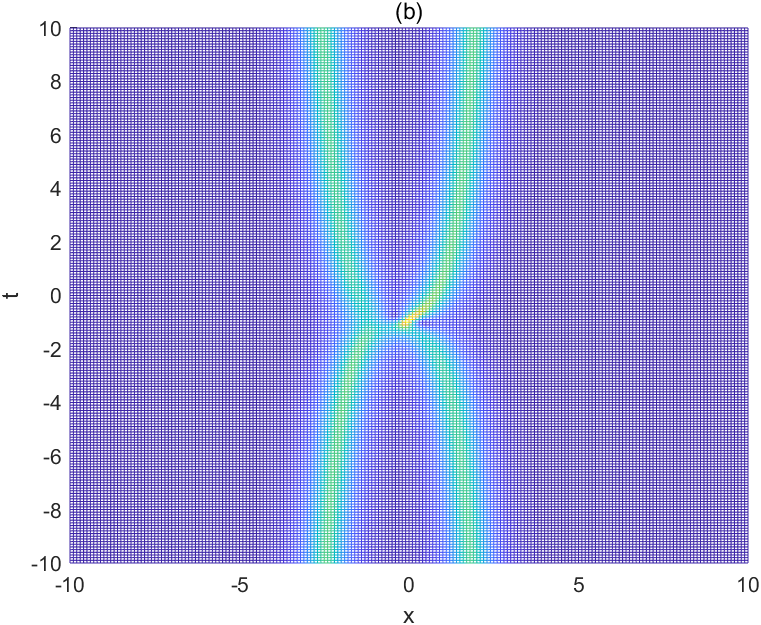}
    }
\flushleft{\footnotesize {\bf Figure 2.} One-soliton with one second-order pole, here  taking  parameters $r_1=1$, $r_2=2$, $k_0=e^{\frac{\pi i}{4}}$, $k_0^*=e^{-\frac{\pi i}{4}}$. (a): The three-dimensional graph. (b): The contour of the wave.

 }
    \end{figure}

\section{IST with nonzero boundary and high-order poles}

\subsection{ Riemann surface and uniformization variable}

To make convenience for the later calculation, we handle the Lax pair (\ref{lax1}) and the boundary condition (\ref{bc1}) at the beginning. We make a proper transformation
\begin{align*}
&q \rightarrow qe^{-\frac{3}{2}iq_0^4t+iq_0^2x},\\
&\phi \rightarrow e^{(-\frac{3}{4}iq_0^4t+\frac{1}{2}iq_0^2x)\sigma_3}\phi.
\end{align*}
The GI equation (\ref{GI}) then becomes
\begin{equation}
iq_t+q_{xx}+2iq_0^2q_x-iq^2q_x^\ast-q_0^2q^2q^*+\frac{1}{2}q^3q^{*2}+\frac{1}{2}q_0^4q=0,\label{GI1}
\end{equation}
with corresponding boundary
\begin{equation}
\lim_{x \to \pm\infty}q(x,t)=q_\pm, \label{boundary}
\end{equation}
where $|q_\pm|=q_0$.

The GI equation (\ref{GI1}) is the compatibility condition of the Lax pair
\begin{equation}
\phi_x=X\phi,\hspace{0.5cm}\phi_t=T\phi,\label{lax2}
\end{equation}
where
\begin{align*}
X=-ik^2\sigma_3+\frac{i}{2}(|q|^2-q_0^2)\sigma_3+kQ,\hspace{0.5cm}Q=\left(\begin{array}{cc}
                                                                           0 & q \\
                                                                           -q^* & 0
                                                                          \end{array}\right),
\end{align*}
\begin{align*}
T=&-2ik^4\sigma_3+(ik^2|q|^2-iq_0^2|q|^2+\frac{i}{4}|q|^4+\frac{3}{4}iq_0^4)\sigma_3+\frac{1}{2}(Q_xQ-QQ_x)\\
&+2k^3Q-ikQ_x\sigma_3-kq_0^2Q.
\end{align*}

Under the  boundary (\ref{boundary}),   asymptotic spectral problem of the Lax pair (\ref{lax2}) becomes
\begin{equation}
\phi_x=X_\pm\phi,\hspace{0.5cm}\phi_t=T_\pm\phi,\label{asymptoticlax}
\end{equation}
where
\begin{equation}
X_\pm=-ik^2\sigma_3+kQ_\pm,\hspace{0.5cm}T_\pm=(2k^2-q_0^2)X_\pm,
\end{equation}
and
\begin{equation*}
Q_\pm=\left(\begin{array}{cc}
             0 & q_\pm \\
             -q_\pm^* & 0
            \end{array}\right).
\end{equation*}
The eigenvalues of the matrix $X_\pm$ are $\pm ik\lambda$, where $ \lambda^2=k^2+q_0^2 $. Since the eigenvalues are doubly branched, we introduce the two-sheeted Riemann surface defined by
\begin{equation}
\lambda^2=k^2+q_0^2,
\end{equation}
then  $\lambda(k)$ is single-valued on this surface. The branch points are  $k=\pm iq_0$. Letting
$$ k+iq_0=r_1e^{i\theta_1}, \ \   k-iq_0=r_2e^{i\theta_2},$$
we can get two single-valued analytic functions on the Riemann surface
\begin{equation}
\lambda(k)=\Bigg\{\begin{array}{ll}
                 \text{$(r_1r_2)^{1/2}e^{i(\theta_1+\theta_2)/2},$} &\text{on $S_1,$}\\\\
                 \text{$-(r_1r_2)^{1/2}e^{i(\theta_1+\theta_2)/2},$} &\text{on $S_2,$} \end{array}
\label{tranlambda}
\end{equation}
where  $-\pi/2 < \theta_j <3/2\pi $ for $j=1,2$.

 Gluing the two copies of the complex plane $S_1$ and $S_2$ along the segment $[-iq_0,iq_0]$, we then obtain the Riemann surface. Along the real $k$ axis we have $\lambda(k)=\pm {\rm sign}(k)\sqrt{k^2+q_0^2}$, where the "$\pm$" applies on $S_1$ and $S_2$ of the Riemann surface respectively, and where the square root sign denotes the principal branch of the real-valued square root function.

Next, we take a uniformization variable
\begin{equation}
z=k+\lambda,
\end{equation}
then we obtain two single-valued functions
\begin{equation}
k(z)=\frac{1}{2}(z-\frac{q_0^2}{z}),\hspace{0.5cm}\lambda(z)=\frac{1}{2}(z+\frac{q_0^2}{z}).\label{uniformization55}
\end{equation}
This implies that we  can  discuss the scattering problem on a standard $z$-plane instead of the two-sheeted Riemann surface by the inverse mapping.
We define $D^+$, $D^-$ and $\Sigma$ on $z$-plane as
\begin{equation*}
\Sigma=\mathbb{R} \cup i\mathbb{R} \backslash \{0\},\hspace{0.5cm}D^+=\{z:{\rm Re}z{\rm Im}z>0\},\hspace{0.5cm}D^-=\{z:{\rm Re}z{\rm Im}z<0\}.
\end{equation*}
the two domains are shown in Figure 3.

 From these discussions, we can derive that
\begin{align*}
{\rm Im}(k(z)\lambda(z))&= {\rm Im}\frac{z^4-q_0^4}{4z^2}= {\rm Im}\frac{(|z|^4+q_0^4)z^2-2q_0^4(({\rm Re}z)^2-({\rm Im}z)^2)}{4|z|^4}\\
                  &= \frac{1}{4|z|^4}(|z|^4+q_0^4){\rm Im}z^2= \frac{1}{2|z|^4}(|z|^4+q_0^4){\rm Re}z{\rm Im}z,
\end{align*}
which implies  that
\begin{equation}
{\rm Im}(k(z)\lambda(z))\Bigg\{\begin{array}{ll}
                 \text{$=0,$} &\text{as $z\in\Sigma$}\\
                 \text{$>0,$} &\text{as $z\in D^+$} \hspace{0.5cm}\text{.}\\
                 \text{$<0,$} &\text{as $z\in D^-$}
                 \end{array}
\end{equation}

\begin{figure}[H]
	\centering
		\begin{tikzpicture}[node distance=2cm]
		\filldraw[violet!30,line width=2] (2.4,0.01) rectangle (0.01,2.4);
		\filldraw[violet!30,line width=2] (-2.4,-0.01) rectangle (-0.01,-2.4);
		\draw[->](-3,0)--(3,0)node[right]{$\mathbb{R}$};
		\draw[->](0,-3)--(0,3)node[above]{$i\mathbb{R}$};
		\draw[->](0,0)--(-0.8,0);
		\draw[->](-0.8,0)--(-1.8,0);
		\draw[->](0,0)--(0.8,0);
		\draw[->](0.8,0)--(1.8,0);
		\coordinate (A) at (0.5,1.2);
		\coordinate (B) at (0.6,-1.2);
		\coordinate (G) at (-0.6,1.2);
		\coordinate (H) at (-0.5,-1.2);
		\coordinate (I) at (0.16,0);
		\fill (A) circle (0pt) node[right] {$D^+$};
		\fill (B) circle (0pt) node[right] {$D_-$};
		\fill (G) circle (0pt) node[left] {$D_-$};
		\fill (H) circle (0pt) node[left] {$D^+$};
		\fill (I) circle (0pt) node[below] {$0$};
	\end{tikzpicture}
	\flushleft{\footnotesize {\bf Figure 3.} Complex $z$-plane consist of  the region $D^+$ (the violet regions) and  the $D^-$ (the white regions).}
\end{figure}

\subsection{Spectral Analysis}

\subsubsection{ Eigenfunctions and scattering matrix}

\quad  For eigenvalue $\pm i\lambda$, we can write the asymptotic eigenvector matrix as
\begin{equation}
Y_\pm=\left(\begin{array}{cc}
            1 & -\frac{iq_\pm}{z} \\
            -\frac{iq_\pm^*}{z} & 1
            \end{array}\right) =I-\frac{i}{z}\sigma_3Q_\pm,
\end{equation}
so that $X_\pm$ and $T_\pm$ can be diagonalized by $Y_\pm$
\begin{equation}
X_\pm=Y_\pm(-ik\lambda\sigma_3)Y_\pm^{-1},\hspace{0.5cm}T_\pm=Y_\pm(-(2k^2-q_0^2)ik\lambda\sigma_3)Y_\pm^{-1}.\label{lax3}
\end{equation}
Direct computation shows  that
\begin{equation}
\det (Y_\pm)=1+\frac{q_0^2}{z^2}\triangleq\gamma,
\end{equation}
and
\begin{equation}
Y_\pm^{-1}=\frac{1}{\gamma}\left(\begin{array}{cc}
                                  1 & \frac{iq_\pm}{z} \\
                                  \frac{iq_\pm^*}{z} & 1
                                 \end{array}\right)=\frac{1}{\gamma}(I+\frac{i}{z}\sigma_3Q_\pm),\hspace{0.5cm}z\neq\pm iq_0.
\end{equation}
 Substituting (\ref{lax3}) into (\ref{asymptoticlax}), we immediately obtain
\begin{equation}
(Y_\pm^{-1}\psi)_x=-ik\lambda\sigma_3(Y_\pm^{-1}\psi),\hspace{0.5cm}(Y_\pm^{-1}\psi)_t=-(2k^2-q_0^2)ik\lambda\sigma_3(Y_\pm^{-1}\psi),\hspace{0.5cm}z\neq\pm iq_0,
\end{equation}
from which we can derive the solution of the asymptotic spectral problem (\ref{asymptoticlax})
\begin{equation}
\psi(x,t,z)=\Bigg\{\begin{array}{ll}
                    Y_\pm e^{i\theta(z)\sigma_3},& z\neq\pm iq_0,\\
                    I+(x-3q_0^2t)Y_\pm(z),& z=\pm iq_0,
                    \end{array}
\end{equation}
where
\begin{equation*}
\theta(x,t,z)=-k(z)\lambda(z)[x+(2k^2(z)-q_0^2)t].
\end{equation*}
For conveniens, we will omit $x$ and $t$ dependence in $\theta(x,t,z)$ henceforth.

We define the Jost eigenfunctions $\phi_\pm(x,t,z)$ as the simultaneous solutions of both parts of the Lax pair so that
\begin{equation}
\phi_\pm=Y_\pm e^{i\theta(z)\sigma_3}+o(1),\hspace{0.5cm}x \rightarrow \pm\infty.\label{asymptoticphi}
\end{equation}
We introduce modified eigenfunctions by factorizing the asymptotic exponential oscillations
\begin{equation}
\mu_\pm=\phi_\pm e^{-i\theta(z)\sigma_3},\label{defmu}
\end{equation}
then we have
\begin{equation*}
\mu_\pm \sim Y_\pm, \hspace{0.5cm} x \rightarrow \pm\infty.
\end{equation*}
Meanwhile,  $\mu_\pm$    acquire the equivalent Lax pair
\begin{align}
&(Y_\pm^{-1}\mu_\pm)_x-ik\lambda[Y_\pm^{-1}\mu_\pm,\sigma_3]=Y_\pm^{-1}\Delta X_\pm\mu_\pm, \label{eqlax1}\\
&(Y_\pm^{-1}\mu_\pm)_t-ik\lambda(2k^2-q_0^2)[Y_\pm^{-1}\mu_\pm,\sigma_3]=Y_\pm^{-1}\Delta T_\pm\mu_\pm,\label{eqlax2}
\end{align}
where $\Delta X_\pm=X-X_\pm$ and $\Delta T_\pm=T-T_\pm$. These two equations can be written in full derivative form
\begin{equation}
d(e^{-i\theta(z){\hat{\sigma}}_3}Y_\pm^{-1}\mu_\pm) = e^{-i\theta(z){\hat{\sigma}}_3}[Y_\pm^{-1}(\Delta X_\pm dx+\Delta T_\pm dt)\mu_\pm],
\end{equation}
 which leads to the Volterra integral equations
\begin{equation}
\mu_\pm(x,t,z)=\left\{\begin{array}{ll}
                       Y_\pm+\int_{\pm\infty}^{x}Y_\pm e^{-ik\lambda(x-y){\hat{\sigma}}_3}[Y_\pm^{-1}\Delta X_\pm(y,t)\mu_\pm(y,t,z)]dy,& z\neq\pm iq_0,\\[6pt]
                      Y_\pm+\int_{\pm\infty}^{x}[I+(x-y)X_\pm(z)]\Delta X_\pm(y,t)\mu_\pm(y,t,z)dy,& z=\pm iq_0,
                       \end{array}\label{jost}
                \right.
\end{equation}
where  we define  $e^{\alpha\hat{\sigma}}A:=e^{\alpha\sigma}Ae^{-\alpha\sigma}$, for a matrix $A$.

\quad  Since ${\rm tr}X={\rm tr}T=0$  in  (\ref{lax2}),  then by using Abel formula, we have
\begin{equation*}
(\det\phi_\pm)_x=(\det\phi_\pm)_t=0,\hspace{0.5cm}\det(\mu_\pm)=\det(\phi_\pm e^{-i\theta(z)\sigma_3})=\det(\phi_\pm).
\end{equation*}
So that $(\det\mu_\pm)_x=(\det\mu_\pm)_t=0$, which means $\det(\mu_\pm)$ is independent with $x,t$.  Furthermore, we know that $\mu_\pm$ is invertible from
\begin{equation}
\det\mu_\pm=\lim_{x \to \pm\infty}\det(\mu_\pm)=\det Y_\pm=\gamma\neq0, \hspace{0.5cm}x,t\in\mathbb{R},\hspace{0.5cm}z\in\Sigma_0.\label{detphi}
\end{equation}

 Since   $\phi_\pm$ are two fundamental matrix solutions of the linear Lax  pair (\ref{lax2}),  there exists a relation between $\phi_+$ and $\phi_-$
\begin{equation}
\phi_+(x,t,z)=\phi_-(x,t,z)S(z),\hspace{0.5cm}x,t\in\mathbb{R},\hspace{0.5cm}z\in\Sigma_0,\label{scattering}
\end{equation}
where $S(z)$ is called scattering matrix and (\ref{detphi}) implies that $\det S(z)=1$. Letting $S(z)=(s_{ij})$, for the individual columns
\begin{equation}
\phi_{+,1}=s_{11}\phi_{-,1}+s_{21}\phi_{-,2},\hspace{0.5cm}\phi_{+,2}=s_{12}\phi_{-,1}+s_{22}\phi_{-,2}.\label{idvsca}
\end{equation}

By using (\ref{scattering}),  we obtain
\begin{align}
s_{11}(z)=\frac{{\rm Wr}(\phi_{+,1},\phi_{-,2})}{\gamma},\hspace{0.5cm}s_{12}(z)=\frac{{\rm Wr}(\phi_{+,2},\phi_{-,2})}{\gamma},\label{scatteringcoefficient1}\\
s_{21}(z)=\frac{{\rm Wr}(\phi_{-,1},\phi_{+,1})}{\gamma},\hspace{0.5cm}s_{22}(z)=\frac{{\rm Wr}(\phi_{-,1},\phi_{+,2})}{\gamma}.\label{scatteringcoefficient2}
\end{align}

\subsubsection{Analyticity}

Here we directly  state   the   analyticity of eigenfunctions $\mu_{\pm}$ and scattering data $s_{11}, \ s_{22}$,   the detail proofs of them   were given in our paper \cite{RN31}.

\begin{proposition}
Suppose $q(x,t)-q_\pm\in L^1(\mathbb{R}^\pm)$, then the Volterra integral equation (\ref{jost}) has unique solutions $\mu_\pm(x,t,z)$ defined by (\ref{defmu}) in $\Sigma_0:=\Sigma\setminus \{\pm iq_0\}$. Moreover, the columns $\mu_{-,1}$ and $\mu_{+,2}$ can be analytically extended to $D^+$ and continuously extended to $D^+\cup\Sigma_0$, while the columns $\mu_{+,1}$ and $\mu_{-,2}$ can be analytically extended to $D^-$ and continuously extended to $D^-\cup\Sigma_0$, where $\mu_{\pm,j}(x,t,z)(j=1,2)$ denote the $j$-th column of $\mu_\pm$.
\end{proposition}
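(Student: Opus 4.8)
The plan is to treat each column of $\mu_\pm$ separately. The conjugation by $e^{-ik\lambda(x-y)\hat\sigma_3}$ appearing in the Volterra equation (\ref{jost}) scales off-diagonal entries while fixing diagonal ones, so a short computation shows that the integral operator maps the first (resp. second) column of its argument to the first (resp. second) column of its output: the two columns decouple into independent Volterra equations. For a fixed column I would then set up the standard Neumann iteration $\mu=\sum_{n\ge 0}\mu^{(n)}$ with $\mu^{(0)}=Y_\pm$ and $\mu^{(n+1)}$ obtained by applying the integral operator in (\ref{jost}) to $\mu^{(n)}$. The entire argument reduces to estimating the kernel, whose only potentially unbounded ingredient is the exponential $e^{\mp 2ik\lambda(x-y)}$ carried by the off-diagonal entry of each decoupled column.

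The decisive ingredient is the sign of $\mathrm{Im}(k\lambda)$ computed just above Figure~3: positive on $D^+$, negative on $D^-$, zero on $\Sigma$. For $\mu_-$ the integration runs over $y\le x$, so $x-y\ge 0$; the factor $e^{2ik\lambda(x-y)}$ appearing in the first column has modulus $e^{-2\,\mathrm{Im}(k\lambda)(x-y)}\le 1$ exactly when $\mathrm{Im}(k\lambda)\ge 0$, i.e. for $z\in D^+\cup\Sigma_0$, while the factor $e^{-2ik\lambda(x-y)}$ in the second column is bounded precisely on $D^-\cup\Sigma_0$. The same bookkeeping for $\mu_+$ (now $x-y\le 0$) reverses the roles. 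With these exponentials bounded by $1$, each iterate is controlled by the integral of $\|\Delta X_\pm(\cdot,t)\|$ in $dy$; since $\Delta X_\pm=\tfrac{i}{2}(|q|^2-q_0^2)\sigma_3+k(Q-Q_\pm)$ is governed by the hypothesis $q-q_\pm\in L^1(\mathbb{R}^\pm)$, I obtain the usual factorial majorant $\|\mu^{(n)}\|\le C\,(\int^x\|\Delta X_\pm\|\,dy)^n/n!$. This yields absolute, uniform convergence of the Neumann series on compact subsets of each half domain, hence existence and uniqueness of $\mu_\pm$ on $\Sigma_0$.

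For analyticity I would note that $k(z)$, $\lambda(z)$, $Y_\pm(z)$ and $Y_\pm^{-1}(z)$ are all analytic on $\mathbb{C}\setminus\{0,\pm iq_0\}$ via the uniformization (\ref{uniformization55}), so every iterate $\mu^{(n)}$ is analytic in the interior of the relevant domain. The uniform convergence just established transfers analyticity to the limit by the Weierstrass theorem (equivalently, Morera's theorem after interchanging the contour integral with the $y$-integral by Fubini). Continuity up to $\Sigma_0$ follows because on $\Sigma$ the exponentials have modulus exactly $1$ and the $L^1$ hypothesis permits passage to the limit under the integral by dominated convergence, matching the interior boundary values.

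The main obstacle is the singular behaviour at the branch points $z=\pm iq_0$: there $\lambda=0$ and $\det Y_\pm=\gamma=1+q_0^2/z^2$ vanishes, so $Y_\pm^{-1}$ blows up. This is precisely why the claim is stated on $\Sigma_0=\Sigma\setminus\{\pm iq_0\}$ and why (\ref{jost}) carries a separate $z=\pm iq_0$ form. I would therefore confine all estimates to compact sets bounded away from $\{0,\pm iq_0\}$, where $\gamma$ is bounded below and the constant $C$ above is uniform; the limiting behaviour at the branch points is not needed for the analyticity and continuity asserted here and would be analyzed separately.
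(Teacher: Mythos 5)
Your proposal is correct and takes essentially the same route as the paper's proof: the paper itself states this proposition without proof, deferring to \cite{RN31}, and the argument there is precisely your column-decoupled Neumann iteration for the Volterra equation (\ref{jost}), with the sign of $\mathrm{Im}(k\lambda)$ on $D^\pm$ controlling the exponentials $e^{\pm 2ik\lambda(x-y)}$, analyticity passed to the limit by locally uniform convergence, and the branch points $\pm iq_0$ excluded exactly because $\gamma=\det Y_\pm$ vanishes there. The single step you gloss over is that the term $\frac{i}{2}(|q|^2-q_0^2)\sigma_3$ in $\Delta X_\pm$ is only guaranteed to lie in $L^1(\mathbb{R}^\pm)$ if, in addition to $q-q_\pm\in L^1(\mathbb{R}^\pm)$, one also has $q$ bounded (or $q-q_\pm\in L^2$), a technicality the cited proof passes over as well.
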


\begin{proposition}
Suppose $(1+|x|)(q(x,t)-q_\pm)\in L^1(\mathbb{R}^\pm)$, then the Volterra integral equation (\ref{jost}) has unique solutions $\mu_\pm(x,t,z)$ defined by (\ref{defmu}) in $\Sigma$. Besides, the columns $\mu_{-,1}$ and $\mu_{+,2}$ can be analytically extended to $D^+$ and continuously extended to $D^+\cup\Sigma$, while the columns $\mu_{+,1}$ and $\mu_{-,2}$ can be analytically extended to $D^-$ and continuously extended to $D^-\cup\Sigma$.
\end{proposition}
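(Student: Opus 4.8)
The plan is to recast the Volterra system (\ref{jost}) as a fixed-point problem solved by a Neumann (successive-approximation) series, and to read off analyticity from the uniform convergence of that series. Existence, uniqueness and analyticity on $\Sigma_0$ have already been secured in Proposition 7 under the weaker hypothesis $q-q_\pm\in L^1$, so the genuinely new content here — and the reason for strengthening the assumption to $(1+|x|)(q-q_\pm)\in L^1(\mathbb{R}^\pm)$ — is to push the continuous extension all the way up to the branch points $z=\pm iq_0$, where $\lambda(z)=0$, $\gamma=1+q_0^2/z^2=0$, and $Y_\pm$ ceases to be invertible. Accordingly I would structure the argument so that the bulk of the estimates reproduce the interior bounds and the delicate part is isolated in a neighborhood of $\pm iq_0$.

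First I would treat the columns separately. Focusing on $\mu_{-,1}$ and $\mu_{+,2}$ (the columns claimed analytic in $D^+$; the pair $\mu_{+,1},\mu_{-,2}$ is handled symmetrically), I would extract from (\ref{jost}) the scalar integral equations for their entries, exhibiting the oscillatory factors $e^{\mp 2ik\lambda(x-y)}$ produced by the action of $e^{-ik\lambda(x-y)\hat{\sigma}_3}$ on the off-diagonal part. Here the sign computation recorded just before the statement is decisive: since $\mathrm{Im}(k\lambda)>0$ on $D^+$ and the orientation of the integral fixes the sign of $x-y$ (e.g.\ $x-y\ge 0$ for the $\int_{-\infty}^{x}$ branch), the relevant exponential satisfies $|e^{2ik\lambda(x-y)}|=e^{-2\,\mathrm{Im}(k\lambda)(x-y)}\le 1$. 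With $\Delta X_\pm=\tfrac{i}{2}(|q|^2-q_0^2)\sigma_3+k(Q-Q_\pm)$ controlled pointwise by $|q-q_\pm|$, the iterates $\mu_\pm^{(0)}=Y_\pm$, $\mu_\pm^{(n+1)}=(\text{integral operator})\mu_\pm^{(n)}$ obey the standard bound $\|\mu_\pm^{(n)}(x,\cdot,z)\|\le C\,\tfrac{1}{n!}\big(\int_{\pm\infty}^{x}\|\Delta X_\pm\|\,dy\big)^n$, so the series converges absolutely and uniformly on compacta of the interior; each iterate is a finite integral of functions analytic in $z$, and a uniform limit of analytic functions is analytic (Morera), giving the claimed analyticity in $D^\pm$.

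The main obstacle is uniform control as $z\to\pm iq_0$, and this is where the weight enters. There $\gamma\to 0$, so $Y_\pm^{-1}=\gamma^{-1}(I+\tfrac{i}{z}\sigma_3 Q_\pm)$ blows up, while the oscillatory kernel $e^{-2ik\lambda(x-y)}$ degenerates to $1$ and loses the decay that tamed the off-diagonal entries on $\Sigma_0$. The key observation I would exploit is that the singular pieces cancel in the propagator itself: by (\ref{lax3}) one has $Y_\pm e^{-ik\lambda(x-y)\hat{\sigma}_3}Y_\pm^{-1}=e^{(x-y)X_\pm(z)}$, and at $z=\pm iq_0$ the two eigenvalues $\pm ik\lambda$ of $X_\pm$ collide, so $X_\pm$ becomes nilpotent and $e^{(x-y)X_\pm}=I+(x-y)X_\pm$ — precisely the regularized kernel in the second line of (\ref{jost}). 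Thus the naive $\gamma^{-1}$ singularity is converted into a benign factor that is \emph{linear} in $x-y$, and the essential lemma to verify is the uniform bound $\|e^{(x-y)X_\pm(z)}\|\le C(1+|x-y|)$ for $z$ in a neighborhood of each branch point within the region where $\mathrm{Im}(k\lambda)$ has the admissible sign (this holds in particular on $\Sigma$, where $k\lambda$ is real so the exponential is purely oscillatory). Feeding this into the Neumann iteration replaces the interior estimate by integrals of the form $\int_{\pm\infty}^{x}(1+|x-y|)\,\|\Delta X_\pm(y,t)\|\,dy$, which are finite exactly because $(1+|x|)(q-q_\pm)\in L^1(\mathbb{R}^\pm)$.

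To finish, I would establish that these branch-point bounds match continuously onto the $\Sigma_0$ bounds of Proposition 7, so that the series converges uniformly on a full neighborhood of $\pm iq_0$ in $\Sigma$; a uniform limit of continuous functions being continuous then yields the continuous extension of each column to $D^+\cup\Sigma$ (resp.\ $D^-\cup\Sigma$), and the nonvanishing of $\det\mu_\pm=\gamma$ off the branch points from (\ref{detphi}) gives invertibility where needed. The delicate bookkeeping throughout is the simultaneous vanishing of $\lambda$ and blow-up of $\gamma^{-1}$; verifying the cancellation that leaves behind only the controllable linear factor $(x-y)$, uniformly in $z$ up to the branch point, is the crux of the argument and the single place where the extra weight $(1+|x|)$ is genuinely used.
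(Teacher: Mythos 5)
The first thing to note is that the paper does not actually prove this proposition: the authors state it (together with Propositions 7 and 10) and explicitly defer the detailed proofs to their earlier work \cite{RN31}. So there is no in-paper argument to compare yours against; what can be said is that your proposal reproduces, in substance, the standard proof used in \cite{RN31} and in the Biondini--Kova\v{c}i\v{c} line of work it follows: set up the Neumann series for (\ref{jost}), obtain analyticity in $D^\pm$ from uniform convergence of analytic iterates with the sign condition on $\Im(k\lambda)$ controlling the oscillatory factors, and isolate the branch points, where the $\gamma^{-1}$ singularity of $Y_\pm^{-1}$ cancels inside the propagator $Y_\pm e^{-ik\lambda(x-y)\hat{\sigma}_3}Y_\pm^{-1}=e^{(x-y)X_\pm}$, which degenerates (since $X_\pm$ is nilpotent at $z=\pm iq_0$) to $I+(x-y)X_\pm$. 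The resulting linear growth in $x-y$ is exactly what the weight $(1+|x|)$ absorbs; you have identified the correct mechanism and the correct role of the strengthened hypothesis.

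One technical statement needs repair. Your ``essential lemma,'' the bound $\|e^{(x-y)X_\pm(z)}\|\le C(1+|x-y|)$ uniformly near the branch points in the region of admissible sign, is true only on $\Sigma$ itself, where $k\lambda$ is real. Off $\Sigma$, since $X_\pm^2=-(k\lambda)^2 I$, Cayley--Hamilton gives
\begin{equation*}
e^{\xi X_\pm}=\cos(k\lambda\xi)\,I+\frac{\sin(k\lambda\xi)}{k\lambda}\,X_\pm ,\qquad \xi=x-y,
\end{equation*}
and $|\cos(k\lambda\xi)|$ grows like $e^{|\Im(k\lambda)|\,\xi}$, so no bound linear in $\xi$ holds for the bare propagator on a closed-quadrant neighborhood of $\pm iq_0$ --- yet such uniformity on closed neighborhoods is precisely what the continuous extension to $D^+\cup\Sigma$ (branch points included) requires. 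The object you must bound is the column-wise compensated kernel that actually appears once the scalar oscillation is factored out, which for the first column is
\begin{equation*}
e^{\xi X_\pm}e^{ik\lambda\xi}=\tfrac12\bigl(1+e^{2ik\lambda\xi}\bigr)I+\xi\,\frac{e^{2ik\lambda\xi}-1}{2ik\lambda\xi}\,X_\pm ,
\end{equation*}
whose entries are bounded by $C(1+|\xi|)$ exactly when $\Im(k\lambda)\,\xi\ge 0$ (use $(e^{2iw}-1)/(2iw)=\int_0^1 e^{2iws}ds$ with $w=k\lambda\xi$), with the analogous statement for the second column under the opposite sign. With the lemma restated for these kernels rather than for $e^{\xi X_\pm}$ itself, your Neumann iteration closes uniformly up to the branch points and the argument is complete.
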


\begin{lemma}
Consider an n-dimensional first-order homogeneous linear ordinary differential equation, $dy(x)/dx = A(x) y(x)$, on an interval $\mathbb{D}\in\mathbb{R}$, where $A(x)$ denotes a complex square matrix of order n. Let $\Phi$ be a matrix-valued solution of this equation. If the trace $\text{tr} A(x)$ is a continuous function, then one has
\begin{equation}
\det\Phi(x)=\det\Phi(x_0)\exp\Big[\int_{x_0}^x  \text{tr} A(\xi)d\xi\Big],\hspace{0.5cm}x,x_0\in\mathbb{\mathbb{D}}.
\end{equation}
\end{lemma}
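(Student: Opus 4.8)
The plan is to reduce the matrix identity to a scalar first-order linear ODE satisfied by the determinant $w(x):=\det\Phi(x)$, and then integrate that equation explicitly. The natural engine for this reduction is Jacobi's formula for the derivative of a determinant, which I would invoke in the form
\[
\frac{d}{dx}\det\Phi(x) = \operatorname{tr}\bigl(\operatorname{adj}(\Phi(x))\,\Phi'(x)\bigr),
\]
where $\operatorname{adj}$ denotes the classical adjugate. The advantage of this formulation is that it requires no invertibility of $\Phi$, which matters because $\Phi(x_0)$ could in principle be singular.

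First I would differentiate $w$ by Jacobi's formula and substitute the defining relation $\Phi'(x)=A(x)\Phi(x)$, obtaining $w'(x)=\operatorname{tr}(\operatorname{adj}(\Phi)\,A\,\Phi)$. Then, using the cyclic invariance of the trace together with the adjugate identity $\Phi\,\operatorname{adj}(\Phi)=\det(\Phi)\,I$, I would rewrite this as $\operatorname{tr}(\Phi\,\operatorname{adj}(\Phi)\,A)=\det(\Phi)\operatorname{tr}(A)=w(x)\operatorname{tr}A(x)$. This yields the scalar initial-value problem $w'(x)=\operatorname{tr}(A(x))\,w(x)$ on $\mathbb{D}$.

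Finally I would solve this linear ODE. Since $\operatorname{tr}A$ is continuous, the antiderivative $\int_{x_0}^x\operatorname{tr}A(\xi)\,d\xi$ exists and the integrating factor is differentiable; multiplying through shows that $\frac{d}{dx}\bigl(w(x)\,e^{-\int_{x_0}^x\operatorname{tr}A(\xi)\,d\xi}\bigr)=0$, so $w(x)\,e^{-\int_{x_0}^x\operatorname{tr}A(\xi)\,d\xi}=w(x_0)$, which is exactly the claimed formula after rearranging.

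As for the main obstacle, there is no deep difficulty here—this is the classical Abel--Liouville identity—but the one point requiring care is not assuming $\Phi$ invertible. A column-expansion proof via multilinearity of the determinant works cleanly only when $\Phi^{-1}$ exists, since one then writes $A\Phi=\Phi C$ with $C=\Phi^{-1}A\Phi$ and reads off $\operatorname{tr}C=\operatorname{tr}A$; using the adjugate form of Jacobi's formula sidesteps this entirely and covers the singular case in one stroke. The continuity hypothesis on $\operatorname{tr}A$ is used only to guarantee that the integral is well-defined and that the scalar equation can be integrated uniquely.
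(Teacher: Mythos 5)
Your proof is correct, and it is worth noting that the paper itself gives no proof of this lemma at all: it is stated as a known auxiliary fact (the classical Abel--Liouville formula), immediately applied to conclude $(\det\phi_\pm)_x=(\det\phi_\pm)_t=0$, with the surrounding analyticity arguments deferred to the authors' earlier work \cite{RN31}. So there is nothing in the paper to compare against step by step; your write-up supplies the missing argument. The route you take is the standard and cleanest one: Jacobi's formula in the adjugate form $\frac{d}{dx}\det\Phi=\operatorname{tr}\bigl(\operatorname{adj}(\Phi)\,\Phi'\bigr)$, substitution of $\Phi'=A\Phi$, the cyclic trace identity together with $\Phi\,\operatorname{adj}(\Phi)=\det(\Phi)I$ to get the scalar equation $w'=\operatorname{tr}(A)\,w$, and an integrating factor to finish. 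Your point about avoiding the invertibility assumption on $\Phi$ is well taken and is exactly the right care to exercise, since the lemma as stated does not hypothesize $\det\Phi(x_0)\neq 0$ (in the paper's actual application invertibility does hold, by (\ref{detphi}), but the lemma should stand on its own). The continuity hypothesis on $\operatorname{tr}A$ is used precisely where you say it is: to make the integral well defined and the integrating factor differentiable, so that the product-rule computation is legitimate.
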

\begin{proposition}
The Jost solutions $\Phi(x,t,z)$ are the simultaneous solutions of both parts of the Lax pair (\ref{lax2}).
\end{proposition}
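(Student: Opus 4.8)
The plan is to show that the Jost functions $\phi_\pm$, which by construction solve the spatial part $\phi_x=X\phi$ of the Lax pair (\ref{lax2}) through the Volterra equations (\ref{jost}), \emph{automatically} solve the temporal part $\phi_t=T\phi$ as well (so that the simultaneous solution $\Phi$ exists). The only structural input is that the GI equation (\ref{GI1}) is exactly the compatibility, or zero-curvature, condition $X_t-T_x+[X,T]=0$ of (\ref{lax2}). The natural object to track is the defect $W_\pm:=\phi_{\pm,t}-T\phi_\pm$, and the goal is to prove $W_\pm\equiv 0$.

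First I would differentiate $W_\pm$ in $x$, insert $\phi_{\pm,x}=X\phi_\pm$ and $\phi_{\pm,t}=W_\pm+T\phi_\pm$, and use the zero-curvature identity:
\begin{equation*}
\partial_x W_\pm = X_t\phi_\pm + X\phi_{\pm,t} - T_x\phi_\pm - TX\phi_\pm = XW_\pm + \big(X_t-T_x+[X,T]\big)\phi_\pm = XW_\pm .
\end{equation*}
Thus $W_\pm$ solves the same spatial equation as $\phi_\pm$. Since $\det\mu_\pm=\gamma\neq 0$ by (\ref{detphi}), $\phi_\pm$ is a fundamental matrix, so $C_\pm(t,z):=\phi_\pm^{-1}W_\pm$ satisfies $\partial_x C_\pm = -\phi_\pm^{-1}X W_\pm+\phi_\pm^{-1}XW_\pm=0$ and is therefore independent of $x$.

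Next I would pin down $C_\pm$ from the spatial asymptotics. Because the normalization $\mu_\pm\to Y_\pm$ holds uniformly in $t$, one also has $\mu_{\pm,t}\to 0$ as $x\to\pm\infty$; combined with $T\to T_\pm$ and the diagonalization $T_\pm=Y_\pm\big(-(2k^2-q_0^2)ik\lambda\sigma_3\big)Y_\pm^{-1}$ from (\ref{lax3}), a short computation using $\theta_t=-(2k^2-q_0^2)k\lambda$ shows that the asymptotic fundamental solution $Y_\pm e^{i\theta\sigma_3}$ solves the asymptotic temporal equation exactly. Hence the leading term of $W_\pm$ cancels and $W_\pm e^{-i\theta\sigma_3}\to 0$; writing $\phi_\pm=\mu_\pm e^{i\theta\sigma_3}$ this is equivalent to $e^{i\theta\hat{\sigma}_3}C_\pm\to 0$ as $x\to\pm\infty$, with $\mu_\pm\to Y_\pm$ invertible.

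The hard part will be the final uniqueness step. Since the two columns of $\phi_\pm\sim Y_\pm e^{i\theta\sigma_3}$ have opposite exponential behaviour ($e^{i\theta}$ versus $e^{-i\theta}$), the single limit $e^{i\theta\hat{\sigma}_3}C_\pm\to 0$ forces the two diagonal entries and one off-diagonal entry of $C_\pm$ to vanish, but leaves the entry multiplying the \emph{decaying} column undetermined at a single spatial infinity. To close the argument I would run it sector by sector: in $D^+$ the column $\mu_{+,2}$ (resp. $\mu_{-,1}$) is analytic and bounded, which forces the corresponding defect column to vanish there, and symmetrically in $D^-$ for $\mu_{+,1}$ and $\mu_{-,2}$. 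The one remaining off-diagonal entry is then eliminated either by the symmetry reductions of $\phi_\pm$ (which, under $z\mapsto z^*$, interchange the two off-diagonal slots while swapping $D^+$ and $D^-$) or by matching the two spatial infinities through the scattering relation (\ref{scattering}), whose coefficients carry only an explicit $t$-phase. Either route yields $C_\pm\equiv 0$, hence $W_\pm\equiv 0$, so that $\phi_\pm$ indeed solve both parts of the Lax pair simultaneously.
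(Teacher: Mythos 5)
The paper itself gives no argument for this proposition: it is one of the statements whose detailed proofs are deferred to \cite{RN31}, so your attempt has to be judged on its own merits. Its backbone is the standard route and is correct as far as it goes: the defect $W_\pm=\phi_{\pm,t}-T\phi_\pm$ satisfies $\partial_xW_\pm=XW_\pm$ by the zero-curvature identity, hence $C_\pm=\phi_\pm^{-1}W_\pm$ is independent of $x$ (using $\det\phi_\pm=\gamma\neq0$ from (\ref{detphi})), and the leading asymptotic term of $W_\pm e^{-i\theta\sigma_3}$ cancels because $Y_\pm e^{i\theta\sigma_3}$ solves the asymptotic time equation exactly.

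The genuine flaw is your ``hard part'': it solves a difficulty that does not arise, and one of your two proposed repairs is circular. The matrix-valued Jost solutions, and therefore $C_\pm=\phi_\pm^{-1}W_\pm$, exist only for $z\in\Sigma_0$; off the contour only individual columns continue analytically, so the object $C_\pm$ you are trying to pin down in $D^\pm$ is not even defined there. On $\Sigma_0$ the paper's own computation gives ${\rm Im}\bigl(k(z)\lambda(z)\bigr)=0$, so $\theta$ is real and $|e^{\pm2i\theta}|=1$; since $C_\pm$ is $x$-independent, every entry of $e^{i\theta\hat{\sigma}_3}C_\pm$ has $x$-independent modulus, and the single limit $e^{i\theta\hat{\sigma}_3}C_\pm\to0$ already forces all four entries---both off-diagonal ones included---to vanish. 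There is no ``entry multiplying the decaying column'' left over; that picture belongs to $z\in D^\pm$, where the matrix argument is not (and cannot be) run. The statement for the columns analytic in $D^\pm$ then follows by analytic continuation of the columnwise defect from $\Sigma_0$, not from boundedness of $\mu_{+,2}$, $\mu_{-,1}$. Moreover, the repair via the scattering relation (\ref{scattering}) is inadmissible: that $S(z)$ carries no $t$-dependence beyond what is already absorbed into $\theta$ is itself a corollary of this very proposition (differentiate $\phi_+=\phi_-S$ in $t$ once both $\phi_\pm$ are known to solve the time part), so invoking it here is circular. A smaller gap: ``$\mu_\pm\to Y_\pm$ uniformly in $t$, hence $\mu_{\pm,t}\to 0$'' is not a valid inference, since uniform convergence controls no derivatives; to justify that step one should differentiate the Volterra equation (\ref{jost}) with respect to $t$ and estimate the resulting representation of $\mu_{\pm,t}$ as $x\to\pm\infty$.
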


\begin{proposition}
Suppose $q(x,t)-q_\pm\in L^1(\mathbb{R}^\pm)$. Then $s_{11}$ can be analytically extended to $D^-$ and continuously extended to $D^-\cup\Sigma_0$, while $s_{22}$ can be analytically extended to $D^+$ and continuously extended to $D^+\cup\Sigma_0$. Moreover, $s_{12}$ and $s_{21}$ are continuous in $\Sigma_0$.
\end{proposition}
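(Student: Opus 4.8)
The plan is to reduce everything to the modified eigenfunctions $\mu_\pm$, whose analyticity has already been established in Proposition 7, and then simply read off the claimed domains from the Wronskian representations (\ref{scatteringcoefficient1})--(\ref{scatteringcoefficient2}). First I would use (\ref{defmu}) in the form $\phi_\pm=\mu_\pm e^{i\theta(z)\sigma_3}$, which columnwise reads $\phi_{\pm,1}=e^{i\theta}\mu_{\pm,1}$ and $\phi_{\pm,2}=e^{-i\theta}\mu_{\pm,2}$. Since the Wronskian of two columns is the determinant of the matrix they span, rescaling a column rescales the Wronskian by the same factor; hence the opposite exponentials cancel in the diagonal coefficients,
\begin{equation*}
s_{11}=\frac{{\rm Wr}(\mu_{+,1},\mu_{-,2})}{\gamma},\qquad s_{22}=\frac{{\rm Wr}(\mu_{-,1},\mu_{+,2})}{\gamma},
\end{equation*}
whereas a nontrivial factor survives in the off-diagonal ones,
\begin{equation*}
s_{12}=\frac{e^{-2i\theta}{\rm Wr}(\mu_{+,2},\mu_{-,2})}{\gamma},\qquad s_{21}=\frac{e^{2i\theta}{\rm Wr}(\mu_{-,1},\mu_{+,1})}{\gamma}.
\end{equation*}

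Next I would invoke Proposition 7. For $s_{22}$ both $\mu_{-,1}$ and $\mu_{+,2}$ are analytic in $D^+$ and continuous on $D^+\cup\Sigma_0$, so their determinant is too; the key point is that both participating columns live in the \emph{same} half-plane. The denominator $\gamma=1+q_0^2/z^2=(z^2+q_0^2)/z^2$ vanishes only at the branch points $z=\pm iq_0$ and has a pole only at $z=0$, all of which are excluded from both $D^+$ and $\Sigma_0$; therefore division by $\gamma$ preserves analyticity in $D^+$ and continuity on $D^+\cup\Sigma_0$. The treatment of $s_{11}$ is the mirror image, using that $\mu_{+,1}$ and $\mu_{-,2}$ are both analytic in $D^-$ and continuous on $D^-\cup\Sigma_0$.

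For the off-diagonal coefficients the same substitution applies, but now the two columns in each Wronskian are analytic in opposite half-planes (for $s_{12}$, $\mu_{+,2}$ extends to $D^+$ while $\mu_{-,2}$ extends to $D^-$), so no analytic continuation off the boundary is available and only continuity on the common set $\Sigma_0$ can be expected. To finish I would control the surviving exponential: on $\Sigma$ the variable $z$ is real or purely imaginary, and a direct check shows that $k(z)\lambda(z)$ and $k^2(z)$ are real there (in particular ${\rm Im}(k\lambda)=0$ was already derived above), so $\theta=-k\lambda[x+(2k^2-q_0^2)t]$ is real and $|e^{\pm2i\theta}|=1$. Hence the factor is continuous and bounded on $\Sigma_0$, and combined with the nonvanishing of $\gamma$ there this yields continuity of $s_{12}$ and $s_{21}$ on $\Sigma_0$.

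I do not expect any serious obstacle: the argument is essentially bookkeeping once the eigenfunction analyticity of Proposition 7 is in hand. The only genuine care needed is (i) checking that the zeros and the pole of $\gamma$ fall outside the domains in question, so that the denominator never degrades analyticity away from the branch points $\pm iq_0$, and (ii) observing that the diagonal coefficients are analytic precisely because their two columns belong to a single half-plane, while the off-diagonal coefficients mix the two half-planes and therefore admit no analytic continuation beyond $\Sigma_0$.
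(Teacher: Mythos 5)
Your proposal is correct and is essentially the paper's own route: the paper states this proposition without a detailed proof (deferring to \cite{RN31}), but the ingredients it supplies --- the Wronskian representations (\ref{scatteringcoefficient1})--(\ref{scatteringcoefficient2}) and the analyticity of the columns of $\mu_\pm$ from Proposition 7 --- are exactly what you combine, with the exponential factors cancelling in the diagonal entries and surviving harmlessly (since $\theta$ is real on $\Sigma$, and in any case $e^{\pm 2i\theta}$ is continuous away from $z=0$) in the off-diagonal ones. Your additional check that the zeros of $\gamma$ at $\pm iq_0$ and its pole at $z=0$ lie outside $D^\pm\cup\Sigma_0$ is the one detail that genuinely needs verifying, and you verify it correctly.
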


Note that we can not exclude the possible existence of zeros for $s_{11}(z)$ and $s_{22}(z)$ along $\Sigma_0$. To solve the RH problem, we restrict our consideration to potentials without spectral singularities, i.e., $s_{11}(z)\neq0$, $s_{22}(z)\neq0$ for $z\in\Sigma$. Besides, we assume that the scattering coefficients are continuous at the branch points. The reflection coefficients which will be needed in the inverse problem are
\begin{equation}
\tilde{r}(z)=\frac{s_{21}}{s_{11}},\hspace{0.5cm}r(z)=\frac{s_{12}}{s_{22}}.
\end{equation}

\subsubsection{Symmetries}
\quad For the GI equation with nonzero boundary, we not only need to deal with the map $k\mapsto k^*$, but also need to pay attention to the sheets of the Riemann surface. We can see from the Riemann surface that the transformation $z\mapsto z^*$ implies $(k,\lambda)\mapsto(k^*,\lambda^*)$ and $z\mapsto-q_0^2/z$ implies $(x,\lambda)\mapsto(k,-\lambda)$. Therefore, we would like to discuss the symmetries in the following way.
\begin{proposition}
The Jost solution, scattering matrix and reflection coefficients satisfy the following reduction conditions on $z$-plane

$\bullet$ The first symmetry reduction
\begin{equation}
\phi_\pm(x,t,z)=\sigma_2\phi_\pm^*(x,t,z^*)\sigma_2,\hspace{0.5cm}S(z)=\sigma_2S^*(z^*)\sigma_2,\hspace{0.5cm}r(z)=-\tilde{r}^*(z^*),
\label{symmetry1}
\end{equation}
where $\sigma_2=\left(\begin{array}{cc}
                            0 & -i \\
                            i & 0
                            \end{array}\right).$

$\bullet$ The second symmetry reduction
\begin{equation}
\phi_\pm(x,t,z)=\sigma_1\phi_\pm^*(x,t,-z^*)\sigma_1,\hspace{0.5cm}S(z)=\sigma_1S^*(-z^*)\sigma_1,\hspace{0.5cm}r(z)=\tilde{r}^*(-z^*),\label{symmetry2}
\end{equation}
where $\sigma_1=\left(\begin{array}{cc}
                            0 & 1 \\
                            1 & 0
                            \end{array}\right).$

$\bullet$ The third symmetry reduction
\begin{align}
&\phi_\pm(x,t,z)=-\frac{i}{z}\phi_\pm(x,t,-\frac{q_0^2}{z})\sigma_3Q_\pm, \label{symmetry3.1}\\
&S(z)=(\sigma_3Q_-)^{-1}S(-\frac{q_0^2}{z})\sigma_3Q_+,\hspace{0.5cm}r^*(z^*)=\frac{q_-}{q_-^*}\tilde{r}(-\frac{q_0^2}{z}).\label{symmetry3.2}
\end{align}
\end{proposition}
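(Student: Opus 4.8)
The plan is to establish all three reductions by a single template. For the Jost eigenfunctions I would show that the transformed matrix solves the same Lax pair (\ref{lax2}) as $\phi_\pm$ and carries the same normalization $\phi_\pm\sim Y_\pm e^{i\theta(z)\sigma_3}$ as $x\to\pm\infty$; uniqueness of the solutions of the Volterra integral equations (\ref{jost}) then forces the two to coincide. The reductions for $S(z)$ follow by substituting the eigenfunction identities into the scattering relation (\ref{scattering}), and those for $r,\tilde r$ from the definitions $r=s_{12}/s_{22}$ and $\tilde r=s_{21}/s_{11}$.

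\emph{First reduction.} I would first record the conjugation identities $\sigma_2\sigma_3\sigma_2=-\sigma_3$ and $\sigma_2 Q^*\sigma_2=Q$, together with the behaviour of the uniformization variables (\ref{uniformization55}) under $z\mapsto z^*$, namely $k(z^*)=k^*(z)$ and $\lambda(z^*)=\lambda^*(z)$. A direct computation then gives $\sigma_2 X^*(x,t,z^*)\sigma_2=X(x,t,z)$ and the analogous identity for $T$, so that $\sigma_2\phi_\pm^*(x,t,z^*)\sigma_2$ again solves (\ref{lax2}). Since $\theta(z^*)=\theta^*(z)$ and $\sigma_2 Y_\pm^*(z^*)\sigma_2=Y_\pm(z)$, this transformed matrix has the same limit as $x\to\pm\infty$, so uniqueness yields the first line of (\ref{symmetry1}); feeding it into (\ref{scattering}) produces $S(z)=\sigma_2 S^*(z^*)\sigma_2$ and hence $r(z)=-\tilde r^*(z^*)$.

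\emph{Second reduction.} The argument is identical after replacing $z^*$ by $-z^*$ and $\sigma_2$ by $\sigma_1$. The only changes are the sign bookkeeping $k(-z^*)=-k^*(z)$, $\lambda(-z^*)=-\lambda^*(z)$ and the relations $\sigma_1\sigma_3\sigma_1=-\sigma_3$, $\sigma_1 Q^*\sigma_1=-Q$; these combine so that the extra minus signs cancel in pairs and once more $\sigma_1 X^*(x,t,-z^*)\sigma_1=X(x,t,z)$, which gives (\ref{symmetry2}).

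\emph{Third reduction (the main obstacle).} This case involves not conjugation but the sheet exchange $z\mapsto\hat z:=-q_0^2/z$, under which (\ref{uniformization55}) gives $k(\hat z)=k(z)$ while $\lambda(\hat z)=-\lambda(z)$. Because $X$ and $T$ depend on $z$ only through $k$, they are literally invariant, so $\phi_\pm(x,t,\hat z)$ and $\phi_\pm(x,t,z)$ solve the same linear system and therefore differ by a matrix $C_\pm(z)$ independent of $x$ and $t$. The delicate point is to identify $C_\pm$ and to reconcile it with the oscillatory asymptotics: one checks $\theta(\hat z)=-\theta(z)$, uses $(\sigma_3 Q_\pm)^2=q_0^2 I$ and $(\sigma_3 Q_\pm)^{-1}=q_0^{-2}\sigma_3 Q_\pm$, and notes that the candidate $C_\pm=\frac{iz}{q_0^2}\sigma_3 Q_\pm$ is anti-diagonal, so that $e^{i\theta(z)\sigma_3}C_\pm=C_\pm e^{-i\theta(z)\sigma_3}$. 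Matching the $x\to\pm\infty$ limits then reduces to the algebraic identity $Y_\pm(\hat z)=Y_\pm(z)C_\pm$, verified directly, and rearranging $\phi_\pm(\hat z)=\phi_\pm(z)C_\pm$ yields (\ref{symmetry3.1}); inserting this into (\ref{scattering}) and using $\det S=1$ gives (\ref{symmetry3.2}). I expect this exponential/anti-diagonal compatibility to be the only genuinely subtle step, the first two reductions being routine conjugation checks.
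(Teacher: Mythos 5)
Your proposal is essentially correct, and it supplies more than the paper itself does: the paper states this proposition \emph{without proof}, deferring details to the companion work \cite{RN31} and offering only the observation that $z\mapsto z^*$ induces $(k,\lambda)\mapsto(k^*,\lambda^*)$ while $z\mapsto -q_0^2/z$ induces $(k,\lambda)\mapsto(k,-\lambda)$. Your template --- show the transformed matrix solves the Lax pair (\ref{lax2}), check it carries the normalization $Y_\pm e^{i\theta(z)\sigma_3}$ as $x\to\pm\infty$, then invoke uniqueness of the Volterra solutions (\ref{jost}) --- is exactly the standard argument behind that hint. The conjugation identities $\sigma_2 Q^*\sigma_2=Q$, $\sigma_1 Q^*\sigma_1=-Q$, the bookkeeping $k(z^*)=k^*(z)$, $k(-z^*)=-k^*(z)$, $\theta(-q_0^2/z)=-\theta(z)$, and the third-reduction ingredients $C_\pm=\frac{iz}{q_0^2}\sigma_3Q_\pm$, $Y_\pm(-q_0^2/z)=Y_\pm(z)C_\pm$, $(\sigma_3Q_\pm)^2=q_0^2I$, together with the anti-diagonal commutation $e^{i\theta\sigma_3}C_\pm=C_\pm e^{-i\theta\sigma_3}$, all check out, and inverting $C_\pm$ indeed yields (\ref{symmetry3.1}).

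One caveat on the final step of the third reduction. Substituting (\ref{symmetry3.1}) into (\ref{scattering}) gives $S(z)=(\sigma_3Q_-)^{-1}S(-q_0^2/z)\sigma_3Q_+$ directly --- no use of $\det S=1$ is needed --- and reading off the entries gives
\begin{equation*}
s_{12}(z)=\frac{q_-q_+}{q_0^2}\,s_{21}\Big(-\frac{q_0^2}{z}\Big),\qquad
s_{22}(z)=\frac{q_-^*q_+}{q_0^2}\,s_{11}\Big(-\frac{q_0^2}{z}\Big),
\end{equation*}
hence $r(z)=\frac{q_-}{q_-^*}\,\tilde r\big(-\frac{q_0^2}{z}\big)$. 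This is \emph{not} literally the printed relation $r^*(z^*)=\frac{q_-}{q_-^*}\tilde r\big(-\frac{q_0^2}{z}\big)$: since the first symmetry gives $r^*(z^*)=-\tilde r(z)$, the printed form would force $r(z)=r^*(z^*)$, which is false in general, so the relation as typeset in the proposition appears to carry a misprint (a stray conjugation, or equivalently a missing minus sign). Your argument is the right one, but you should state explicitly which identity it produces rather than asserting that (\ref{symmetry3.2}) follows verbatim; as a proof of the second half of (\ref{symmetry3.2}) exactly as written, that last step is the one point that does not go through.
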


\subsubsection{Asymptotic behaviors}

\quad To solve the RH problem in the next section, it is necessary to discuss the asymptotic behaviors of the modified Jost solutions and scattering matrix as $z \rightarrow \infty$ and $z \rightarrow 0$ by the standard Wentzel-Kramers-Brillouin (WKB) expansions.
\begin{proposition}
The asymptotic behaviors for the modified Jost solutions are given as
\begin{align}
&\mu_\pm(x,t,z)=I+o(z^{-1}),\hspace{0.5cm}z \rightarrow \infty,\label{asymptotic1}\\
&\mu_\pm(x,t,z)=-\frac{i}{z}\sigma_3Q_\pm+o(1),\hspace{0.5cm}z \rightarrow 0.
\end{align}
\end{proposition}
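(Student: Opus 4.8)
The plan is to establish both limits by a WKB (formal asymptotic) analysis of the spatial part of the Lax pair, and then to obtain the behaviour at $z\to 0$ from the behaviour at $z\to\infty$ through the third symmetry reduction, avoiding a separate singular expansion at the origin. First I would write down the equation satisfied by $\mu_\pm$. Differentiating (\ref{defmu}) and using $\phi_{\pm,x}=X\phi_\pm$ together with $\theta_x=-k\lambda$ gives
\begin{equation*}
\mu_{\pm,x}=X\mu_\pm+ik\lambda\,\mu_\pm\sigma_3 ,
\end{equation*}
where from (\ref{uniformization55}) one has $k\lambda=\tfrac14(z^{2}-q_0^{4}/z^{2})$ and $k^{2}=\tfrac14(z^{2}-2q_0^{2}+q_0^{4}/z^{2})$. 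I then insert the ansatz $\mu_\pm=\sum_{n\ge0}\mu_\pm^{(n)}(x,t)\,z^{-n}$ and match powers of $z$.

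For the limit $z\to\infty$, collecting the leading $z^{2}$ terms yields $[\sigma_3,\mu_\pm^{(0)}]=0$, so $\mu_\pm^{(0)}$ is diagonal. The normalization is fixed by the Volterra equation (\ref{jost}): as $z\to\infty$ the oscillatory kernel $Y_\pm e^{-ik\lambda(x-y)\hat{\sigma}_3}$ makes the integral term vanish and $Y_\pm\to I$, forcing $\mu_\pm^{(0)}=I$. At each subsequent order the commutator $[\sigma_3,\cdot]$ annihilates the diagonal part, so the off-diagonal part of $\mu_\pm^{(n+1)}$ is fixed algebraically by lower orders (already the order-$z^{1}$ relation $\tfrac{i}{4}[\sigma_3,\mu_\pm^{(1)}]=\tfrac12 Q$ recovers the potential), while the diagonal part is recovered by integrating the order-$z^{0}$ identity in $x$ against the value at $\pm\infty$. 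Thus the recursion produces $\mu_\pm=I+O(z^{-1})$ as $z\to\infty$, which confirms the normalization recorded in the proposition.

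For the limit $z\to 0$, I would use the third symmetry (\ref{symmetry3.1}). From (\ref{uniformization55}) one checks $k(-q_0^2/z)=k(z)$ and $\lambda(-q_0^2/z)=-\lambda(z)$, hence $\theta(-q_0^2/z)=-\theta(z)$. Substituting (\ref{defmu}) into (\ref{symmetry3.1}), and using that $\sigma_3Q_\pm$ is off-diagonal so that $e^{-i\theta\sigma_3}\sigma_3Q_\pm e^{-i\theta\sigma_3}=\sigma_3Q_\pm$, the oscillatory factors cancel and one obtains the exact relation
\begin{equation*}
\mu_\pm(x,t,z)=-\frac{i}{z}\,\mu_\pm\!\left(x,t,-\frac{q_0^{2}}{z}\right)\sigma_3Q_\pm .
\end{equation*}
Letting $z\to0$ sends $-q_0^2/z\to\infty$, so the limit $\mu_\pm\to I$ just proved gives $\mu_\pm(x,t,z)=-\tfrac{i}{z}\sigma_3Q_\pm+O(1)$, whose leading singular term is exactly the asserted one.

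The formal matching above is routine; the main obstacle is turning it into a rigorous asymptotic statement, i.e. controlling the remainder after finitely many terms uniformly in $x,t$. For this I would return to the Neumann series of the Volterra equation (\ref{jost}), estimate the oscillatory kernel by integration by parts, and use the $L^1$-decay hypotheses on $q-q_\pm$ (the same input underlying the analyticity propositions), while keeping $z$ bounded away from the branch points $\pm iq_0$ so that $\gamma=\det Y_\pm$ stays nonzero. The $z\to0$ estimate then follows from the $z\to\infty$ estimate through the exact reflection identity above, which is the cleanest route and sidesteps a direct expansion near the essential singularity of $k,\lambda$ at the origin.
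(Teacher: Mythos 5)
Your proposal is correct, and it takes a genuinely different route from the paper for half of the statement. The paper treats this proposition as following from the standard WKB expansions performed \emph{directly at both} singular points $z=\infty$ and $z=0$ (it states the result without proof, with details of this kind deferred to \cite{RN31}); that is, one substitutes a power series in $z^{-1}$ (respectively in $z$) into the Lax pair for $\mu_\pm$ and matches orders twice. You instead carry out the matching only at $z=\infty$ --- your steps $[\sigma_3,\mu_\pm^{(0)}]=0$, the normalization $\mu_\pm^{(0)}=I$ from the Volterra equation (\ref{jost}), and $\tfrac{i}{4}[\sigma_3,\mu_\pm^{(1)}]=\tfrac12 Q$ are exactly the standard computation --- and then transport the result to $z=0$ through the exact identity $\mu_\pm(x,t,z)=-\tfrac{i}{z}\,\mu_\pm(x,t,-q_0^2/z)\,\sigma_3Q_\pm$, which you correctly derive from the third symmetry (\ref{symmetry3.1}) using $k(-q_0^2/z)=k(z)$, $\lambda(-q_0^2/z)=-\lambda(z)$, hence $\theta(-q_0^2/z)=-\theta(z)$, together with the off-diagonality of $\sigma_3Q_\pm$. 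This is legitimate: the map $z\mapsto -q_0^2/z$ preserves $\Sigma$ and exchanges the columns' analyticity domains consistently (e.g.\ $\mu_{\pm,1}(z)=-\tfrac{iq_\pm^*}{z}\mu_{\pm,2}(-q_0^2/z)$ with $D^+$ mapped to $D^-$), so the $z\to0$ behavior is an immediate corollary of the $z\to\infty$ behavior. What your route buys is economy and cleanliness: it avoids a second expansion at $z=0$, where $k(z)$ and $\lambda(z)$ blow up and the direct bookkeeping is awkward. What the paper's direct route buys is self-containedness (no reliance on the symmetry proposition) and explicit access to subleading coefficients near $z=0$. Two minor remarks: your remainder $O(1)$ at $z\to0$ is in fact the sharp statement, since the constant term $\tfrac{i}{q_0^2}\mu_\pm^{(1)}\sigma_3Q_\pm$ is generically nonzero, so the paper's ``$o(1)$'' there (like its ``$o(z^{-1})$'' at infinity) should be read as $O(1)$ (respectively $O(z^{-1})$); and your closing caveat --- that a rigorous version requires Neumann-series and integration-by-parts estimates on the oscillatory kernel, uniformly away from the branch points $\pm iq_0$ --- is an honest account of what neither the formal matching nor the paper itself actually carries out.
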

From  (\ref{asymptotic1}) we can get that
\begin{equation}
q(x,t)=\lim_{z\rightarrow\infty}iz\mu_\pm^{(12)},\label{q11}
\end{equation}
which will be used in the following.
Inserting the above asymptotic behaviors for the modified Jost eigenfunctions into the Wronskian representation (\ref{scatteringcoefficient1}) and (\ref{scatteringcoefficient2}), with a little calculations, we get the asymptotic behaviors of the scattering matrix.

\begin{proposition}
The asymptotic behaviors of the scattering matrix are
\begin{align}
&S(z)=I+O(z^{-1}),\hspace{0.5cm}z \rightarrow \infty,\label{asympsca1}\\
&S(z)={\rm diag}(\frac{q_-}{q_+},\frac{q_+}{q_-})+O(z),\hspace{0.5cm}z \rightarrow 0.\label{asympsca2}
\end{align}
\end{proposition}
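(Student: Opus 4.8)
The plan is to read the four scattering coefficients off their Wronskian representations \eqref{scatteringcoefficient1}--\eqref{scatteringcoefficient2} and substitute the known $z\to\infty$ and $z\to0$ behaviour of the modified eigenfunctions from Proposition 10. The first step is to pass from $\phi_\pm$ to $\mu_\pm$: by \eqref{defmu} we have $\phi_\pm=\mu_\pm e^{i\theta(z)\sigma_3}$, so the columns scale as $\phi_{\pm,1}=e^{i\theta}\mu_{\pm,1}$ and $\phi_{\pm,2}=e^{-i\theta}\mu_{\pm,2}$. Consequently the two diagonal coefficients pair an $e^{i\theta}$ column with an $e^{-i\theta}$ column and the oscillations cancel, giving $s_{11}=\mathrm{Wr}(\mu_{+,1},\mu_{-,2})/\gamma$ and $s_{22}=\mathrm{Wr}(\mu_{-,1},\mu_{+,2})/\gamma$, while the off-diagonal ones acquire a bounded factor $e^{\mp2i\theta}$ (of modulus one on $\Sigma$). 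Since $\mathrm{tr}\,X=\mathrm{tr}\,T=0$, each Wronskian of Jost columns is independent of $x,t$, so I may evaluate these limits at any fixed $(x,t)$ and let $z$ run to $\infty$ or $0$.

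For $z\to\infty$ I would insert $\mu_\pm=I+o(z^{-1})$ from \eqref{asymptotic1}. Then $\mu_{+,1},\mu_{-,1}\to(1,0)^T$ and $\mu_{+,2},\mu_{-,2}\to(0,1)^T$, so the diagonal Wronskians tend to $\det I=1$, while $\gamma=1+q_0^2/z^2=1+O(z^{-2})$; hence $s_{11},s_{22}=1+O(z^{-1})$. For the off-diagonal entries, $\mathrm{Wr}(\mu_{+,2},\mu_{-,2})$ and $\mathrm{Wr}(\mu_{-,1},\mu_{+,1})$ are Wronskians of columns converging to the same basis vector, so their leading terms cancel and the remainder is controlled by the $O(z^{-1})$ corrections in the expansion of $\mu_\pm$; together with the unimodular factors $e^{\mp2i\theta}$ this yields $s_{12},s_{21}=O(z^{-1})$, establishing \eqref{asympsca1}.

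For $z\to0$ I would use the second asymptotics in Proposition 10, $\mu_\pm\sim-\tfrac{i}{z}\sigma_3 Q_\pm$, whose columns are $\mu_{\pm,1}\sim(0,-iq_\pm^*/z)^T$ and $\mu_{\pm,2}\sim(-iq_\pm/z,0)^T$. A direct $2\times2$ determinant then gives $\mathrm{Wr}(\mu_{+,1},\mu_{-,2})\sim q_-q_+^*/z^2$ and $\mathrm{Wr}(\mu_{-,1},\mu_{+,2})\sim q_+q_-^*/z^2$, while $\gamma\sim q_0^2/z^2$. Dividing and using $q_0^2=|q_+|^2=q_+q_+^*$ and $q_0^2=|q_-|^2=q_-q_-^*$ produces $s_{11}\to q_-/q_+$ and $s_{22}\to q_+/q_-$, which is exactly $\mathrm{diag}(q_-/q_+,\,q_+/q_-)$. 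For the off-diagonal entries I note that near $z=0$ the leading columns $\mu_{+,2}$ and $\mu_{-,2}$ are \emph{parallel} (both proportional to $(1,0)^T$), so their Wronskian vanishes at order $z^{-2}$ and is only $O(z^{-1})$; against $\gamma^{-1}\sim z^2/q_0^2$ this gives $s_{12},s_{21}=O(z)$, completing \eqref{asympsca2}.

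The main obstacle I anticipate is precisely this degeneracy of the leading-order columns at $z=0$: to pin down the $O(z)$ error in the off-diagonal coefficients one cannot stop at the $-\tfrac{i}{z}\sigma_3Q_\pm$ term but must carry the $o(1)$ correction in the expansion of $\mu_\pm$ and keep the cross terms in the Wronskians, while simultaneously tracking the unimodular oscillatory factors $e^{\mp2i\theta}$. The diagonal limits, by contrast, are clean because the exponentials cancel and the leading determinants are nondegenerate; the only care needed there is the consistent use of $q_0^2=q_\pm q_\pm^{*}$ to simplify the ratios.
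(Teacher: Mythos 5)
Your proof is correct and takes essentially the same route as the paper, which simply states that one inserts the asymptotics of $\mu_\pm$ from Proposition 10 into the Wronskian representations (\ref{scatteringcoefficient1})--(\ref{scatteringcoefficient2}) and computes. You have in fact supplied the details the paper compresses into ``with a little calculations'': the cancellation of the unimodular factors $e^{\pm i\theta}$ in the diagonal entries, the evaluation $\gamma\sim q_0^2/z^2$ with $q_0^2=q_\pm q_\pm^*$ giving the limits $q_-/q_+$ and $q_+/q_-$, and the degeneracy of the leading columns that forces the off-diagonal entries to be $O(z^{-1})$ at infinity and $O(z)$ at the origin.
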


\subsubsection{Distribution of  spectrum}

\quad The discrete spectrum of the scattering problem is the set of all values $z\in \mathbb{C}\setminus\Sigma$, for which eigenfunctions exist in $L^2(\mathbb{R})$. We would like to show that these values are the zeros of $s_{11}(z)$ in $D^-$ and those of $s_{22}(z)$ in $D^+$.

We can show that  the  uniformization transformation (\ref{uniformization55})  changes  the segment $[-iq_0,iq_0]$  on $k$-plane into
the circle $|z|=q_0$ on $z$-plane.
We suppose that $s_{22}$ has one $N$th order zero $z_0$ in $D^+\cap\{z\in\mathbb{C}:{\rm Im}z>0,|z|>q_0\}$,
then symmetries  (\ref{symmetry1})-(\ref{symmetry3.2}) imply that
\begin{equation}
s_{22}(\pm z_0)=0 \Leftrightarrow s_{11}^*(\pm z_0^*)=0 \Leftrightarrow s_{11}(\pm\frac{q_0^2}{z_0})=0 \Leftrightarrow s_{22}(\pm\frac{q_0^2}{z_0^*})=0.\label{quartetsym}
\end{equation}
Therefore, the discrete spectrum is the set
\begin{equation}
Z=\left\{\pm z_0, \pm z_0^*,\pm \frac{q_0^2}{z_0}, \pm\frac{q_0^2}{z_0^*}\right\},
\end{equation}
which can be seen in Figure 4.
\begin{center}
\begin{tikzpicture}[node distance=2cm]
\filldraw[violet!30,line width=3] (2.8,0.01) rectangle (0.01,2.8);
\filldraw[violet!30,line width=3] (-2.8,-0.01) rectangle (-0.01,-2.8);
\draw[->](-3,0)--(3,0)node[right]{Re$z$};
\draw[->](0,-3)--(0,3)node[above]{Im$z$};
\draw (2,0) arc (0:360:2);
\draw[->](0,0)--(-1.5,0);
\draw[->](-1.5,0)--(-2.8,0);
\draw[->](0,0)--(1.5,0);
\draw[->](1.5,0)--(2.8,0);
\draw[->](0,2.7)--(0,2.2);
\draw[->](0,1.6)--(0,0.8);
\draw[->](0,-2.7)--(0,-2.2);
\draw[->](0,-1.6)--(0,-0.8);
\coordinate (A) at (2.2,2.2);
\coordinate (B) at (2.2,-2.2);
\coordinate (C) at (-0.9090909090909,0.9090909090909);
\coordinate (D) at (-0.9090909090909,-0.9090909090909);
\coordinate (E) at (0.9090909090909,0.9090909090909);
\coordinate (F) at (0.9090909090909,-0.9090909090909);
\coordinate (G) at (-2.2,2.2);
\coordinate (H) at (-2.2,-2.2);
\coordinate (I) at (0,2);
\coordinate (J) at (1.414213562373095,1.414213562373095);
\coordinate (K) at (1.414213562373095,-1.414213562373095);
\coordinate (L) at (-1.414213562373095,1.414213562373095);
\coordinate (M) at (-1.414213562373095,-1.414213562373095);
\fill (A) circle (1pt) node[right] {$z_n$};
\fill (B) circle (1pt) node[right] {$z_n^*$};
\fill (C) circle (1pt) node[right] {$-\frac{q_0^2}{z_n}$};
\fill (D) circle (1pt) node[right] {$-\frac{q_0^2}{z_n^*}$};
\fill (E) circle (1pt) node[left] {$\frac{q_0^2}{z_n^*}$};
\fill (F) circle (1pt) node[left] {$\frac{q_0^2}{z_n}$};
\fill (G) circle (1pt) node[left] {$-z_n^*$};
\fill (H) circle (1pt) node[left] {$-z_n$};
\fill (I) circle (1pt) node[right] {$q_0$};
%\fill (J) circle (1pt) node[right] {$w_m$};
%\fill (K) circle (1pt) node[right] {$w_m^*$};
%\fill (L) circle (1pt) node[left] {$-w_m^*$};
%\fill (M) circle (1pt) node[left] {$-w_m$};
\label{zplane2}
\end{tikzpicture}
\flushleft{\footnotesize {\bf Figure 4.}  Distribution of   the the discrete spectrum and  the contours for the RH problem on complex $z$-plane.
 }
\end{center}

\subsection{Riemann-Hilbert Problem}

\quad As we all know, the equation (\ref{scattering}) is the beginning of the formulation of the inverse problem. We always regard it as a relation between eigenfunctions analytic in $D^+$ and those analytic in $D^-$. Thus, it is necessary for us to introduce the following RH problem.
\begin{proposition}
Define the sectionally meromorphic matrix
\begin{equation}
M(x,t,z)=\Bigg\{\begin{array}{ll}
            M^-=\left(\begin{array}{cc}
           \dfrac{\mu_{+,1}}{s_{11}} & \mu_{-,2}\\
           \end{array}\right), &\text{as } z\in D^-,\\
                 M^+=\left(\begin{array}{cc}
                           \mu_{-,1} & \dfrac{\mu_{+,2}}{s_{22}}\\
                           \end{array}\right), &\text{as }z\in D^+.\\
                 \end{array}
\end{equation}
Then a multiplicative matrix RH problem is proposed:

$\bullet$ Analyticity: $M(x,t,z)$ is analytic in $\mathbb{C}\setminus\Sigma$ and has single poles.

$\bullet$ Jump condition
\begin{equation}
M^-(x,t,z)=M^+(x,t,z)(I-G(x,t,z)),\hspace{0.5cm}z \in \Sigma,\label{jump}
\end{equation}
where
\begin{equation}
G(x,t,z)=\left(\begin{array}{cc}
                r(z)\tilde{r}(z) & e^{2i\theta}r(z)\\
                -e^{-2i\theta}\tilde{r}(z) & 0
                \end{array}\right).
\end{equation}

$\bullet$ Asymptotic behaviors
\begin{align}
&M(x,t,z) \sim I+O(z^{-1}),\hspace{0.5cm}z \rightarrow \infty,\label{asymbehv1}\\
&M(x,t,z) \sim -\frac{i}{z}\sigma_3Q_-+O(1),\hspace{0.5cm}z \rightarrow 0.\label{asymbehv2}
\end{align}
\end{proposition}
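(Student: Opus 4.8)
The plan is to verify directly the three defining features of the stated RH problem --- sectional analyticity with isolated simple poles, the jump relation across $\Sigma$, and the two normalization limits --- using only the scattering relation (\ref{scattering}), the substitution (\ref{defmu}), and the analytic and asymptotic data already established. First I would dispose of analyticity. The matrix $M^+=(\mu_{-,1},\,\mu_{+,2}/s_{22})$ is assembled from columns that Proposition 9 shows are analytic in $D^+$; since $s_{22}$ is analytic and nonvanishing in $D^+$ except at the points of the discrete spectrum $Z$, the entry $\mu_{+,2}/s_{22}$ is meromorphic with poles exactly at the zeros of $s_{22}$, and the same reasoning applied to $M^-=(\mu_{+,1}/s_{11},\,\mu_{-,2})$ places its poles at the zeros of $s_{11}$. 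Because we have restricted to potentials without spectral singularities, no factor blows up on $\Sigma$, so $M$ is analytic on $\mathbb{C}\setminus(\Sigma\cup Z)$ with single poles on $Z$.

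Next I would derive the jump. Rewriting the columnwise scattering identities (\ref{idvsca}) through (\ref{defmu}), i.e.\ $\phi_{\pm,1}=\mu_{\pm,1}e^{i\theta}$ and $\phi_{\pm,2}=\mu_{\pm,2}e^{-i\theta}$, and dividing by the appropriate exponential and by $s_{11}$ or $s_{22}$, yields
\begin{equation*}
\frac{\mu_{+,1}}{s_{11}}=\mu_{-,1}+\tilde{r}\,e^{-2i\theta}\mu_{-,2},\qquad
\frac{\mu_{+,2}}{s_{22}}=r\,e^{2i\theta}\mu_{-,1}+\mu_{-,2}.
\end{equation*}
A direct column-by-column multiplication of $M^+=(\mu_{-,1},\,\mu_{+,2}/s_{22})$ by $I-G$ then reproduces $M^-=(\mu_{+,1}/s_{11},\,\mu_{-,2})$: the first column relies on the cancellation $-r\tilde{r}+r\tilde{r}=0$, while the second column is immediate. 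This establishes (\ref{jump}) with the stated $G$.

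For the asymptotics I would treat the two limits separately. As $z\to\infty$, the expansions (\ref{asymptotic1}) and (\ref{asympsca1}) give $\mu_\pm\to I$ and $s_{11},s_{22}\to 1$, so each column of $M$ tends to the corresponding column of $I$, which is (\ref{asymbehv1}). As $z\to 0$, I would insert $\mu_\pm\sim-\tfrac{i}{z}\sigma_3Q_\pm$ together with $s_{11}\to q_-/q_+$ and $s_{22}\to q_+/q_-$ from (\ref{asympsca2}). For $M^+$ this directly produces $-\tfrac{i}{z}\sigma_3Q_-$.

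The main obstacle is the compatibility of the $z\to 0$ limit between the two sheets: I must show that $M^-$ tends to the \emph{same} value $-\tfrac{i}{z}\sigma_3Q_-$ rather than something built from $Q_+$. Here $\mu_{+,1}/s_{11}\sim-\tfrac{i}{z}(0,\,q_+^*q_+/q_-)^T$, and the correct answer emerges only after invoking the constraint $|q_\pm|=q_0$, which gives $q_+^*q_+/q_-=q_0^2/q_-=q_-^*$, so the two pieces agree. Verifying this identity --- equivalently, checking consistency against the third symmetry reduction (\ref{symmetry3.2}) and the small-$z$ scattering data --- is the delicate point; once it is in hand, the remaining verifications are routine bookkeeping.
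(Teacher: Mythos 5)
Your proof is correct and follows the same (implicit) route as the paper: the paper states this proposition without proof, treating the jump condition as a direct rewriting of the columnwise scattering relations in terms of $\mu_\pm=\phi_\pm e^{-i\theta\sigma_3}$, and the normalizations as consequences of the already-established asymptotics of $\mu_\pm$ and $S(z)$ as $z\to\infty$ and $z\to 0$. Your column-by-column verification of $M^-=M^+(I-G)$ (with the cancellation $-r\tilde r+r\tilde r=0$) and your observation that the $z\to 0$ limit of $\mu_{+,1}/s_{11}$ agrees with $-\tfrac{i}{z}\sigma_3 Q_-$ only after invoking $|q_\pm|=q_0$, so that $q_+^*q_+/q_-=q_0^2/q_-=q_-^*$, are precisely the computations the paper leaves to the reader.
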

From (\ref{q11}) we know that
\begin{equation}
q(x,t)=\lim_{z\rightarrow\infty}izM^{(12)}.
\end{equation}

\subsection{Single  high-order pole solutions}
Let $z_0\in D^+$ is the Nth-order pole, from the symmetries (\ref{symmetry1})-(\ref{symmetry3.1}) it is obvious that $-z_0$, $\pm\frac{q_0^2}{z_0^*}$ $\in D^+$ also is the Nth-order pole of $s_{22}(z)$. Then $\pm z_0^*$ and $\pm\frac{q_0^2}{z_0}$ are the Nth-order poles of $s_{11}(z)$. The discrete spectrum is the set
\begin{equation}
\{\pm z_0,\pm z_0^*,\pm\frac{q_0^2}{z_0^*},\pm\frac{q_0^2}{z_0}\},
\end{equation}
which can be seen in Figure 4

Let $\nu_1=z_0$, $\nu_2=\frac{q_0^2}{z_0^*}$, then the discrete spectrum is $\{\pm\nu_1,\pm\nu_2,\pm\nu_1^*,\pm\nu_2^*\}$. Let
\begin{equation}
s_{22}(z)=(z^2-\nu_1^2)^N(z^2-\nu_2^2)^Ns_0(z),
\end{equation}
in which $s_0(z)\neq 0$ in $D^+$. According to the Laurent series expansion in poles, $r(z)$ and $r^*(z^*)$ can be respectively expanded as
\begin{subequations}
\begin{align}
&r_j(z)=r_{0,j}(z)+\sum_{m_j=1}^N\frac{r_{j,m_j}}{(z-\nu_j)^{m_j}},\hspace{0.5cm}\text{in } z=\nu_j,\hspace{0.5cm}j=1,2;\\
&r_j(z)=\tilde{r}_{0,j}(z)+\sum_{m_j=1}^N\frac{(-1)^{m_j+1}r_{j,m_j}}{(z+\nu_j)^{m_j}}\hspace{0.5cm}\text{in } z=-\nu_j,\hspace{0.5cm}j=1,2;\\
&r^*_j(z^*)=r_{0,j}^*(z^*)+\sum_{m_j=1}^N\frac{r_{j,m_j}^*}{(z-\nu_j^*)^{m_j}}\hspace{0.5cm}\text{in }z=\nu_j^*,\hspace{0.5cm}j=1,2;\\
&r^*_j(z^*)=\tilde{r}_{0,j}^*(z^*)+\sum_{m_j=1}^N\frac{(-1)^{m_j+1}r_{j,m_j}^*}{(z+\nu_j^*)}\hspace{0.5cm}\text{in }z=-\nu_j^*,\hspace{0.5cm}j=1,2,
\end{align}
\end{subequations}
where $r_{j,m_j}$ are defined by
\begin{align}
&r_{j,m_j}=\lim_{z\rightarrow \nu_j}\frac{1}{(N-m_j)!}\frac{\partial^{N-m_j}}{\partial z^{N-m_j}}[(z-\nu_j)^Nr_j(z)],\hspace{0.5cm}m_j=1,2,...,N.
\end{align}
and $r_{0,j}(z)$ and $\tilde{r}_{0,j}(z)$ are analytic for all $z\in D^+$. The definition of $M(x,t,k)$ yields that $z=\pm\nu_j$ ($j=1,2$) are Nth-order poles of $M_{12}$, while $z=\pm\nu_j^*$ ($j=1,2$) are Nth-order poles of $M_{11}$. According to the normalization condition sated in proposition 13 one can set
\begin{subequations}
\begin{align}
&M_{11}(x,t,z)=1+\sum_{j=1}^2\sum_{s=1}^N\Big(\frac{F_{j,s}(x,t)}{(z-\nu_j^*)^s}+\frac{H_{j,s}(x,t)}{(z+\nu_j^*)^s}\Big),\label{M11}\\
&M_{12}(x,t,z)=-\frac{i}{z}q_-+\sum_{j=1}^2\sum_{s=1}^N\Big(\frac{G_{j,s}(x,t)}{(z-\nu_j)^s}+\frac{L_{j,s}(x,t)}{(z+\nu_j)^s}\Big),\label{M12}
\end{align}\label{M1}
\end{subequations}
where $F_{j,s}(x,t)$, $H_{j,s}(x,t)$, $G_{j,s}(x,t)$, $L_{j,s}(x,t)$($s=1,2,...,N$, $j=1,2$)  are unknown functions which need to be determined. Once these functions are solved, the solution $M(x,t,z)$ of RHP will be obtained and the solutions $q(x,t)$ of the GI equation will be obtained from (\ref{M1}).

Now we are in position to solve $F_{j,s}(x,t)$, $H_{j,s}(x,t)$, $G_{j,s}(x,t)$ and $L_{j,s}(x,t)$($s=1,2,...,N$, $j=1,2$). According to Taylor series expansion, one has
\begin{subequations}
\begin{equation}
e^{2i\theta(z)}=\sum_{l=0}^{+\infty}f_{j,l}(x,t)(z-\nu_j)^l,\hspace{0.5cm}e^{2i\theta(z)}=\sum_{l=0}^{+\infty}(-1)^lf_{j,l}(x,t)(z+\nu_j)^l,
\end{equation}
\begin{equation}
e^{-2i\theta(z)}=\sum_{l=0}^{+\infty}f_{j,l}^*(x,t)(z-\nu_j^*)^l,\hspace{0.5cm}e^{-2i\theta(z)}=\sum_{l=0}^{+\infty}(-1)^lf_{j,l}^*(x,t)(z+\nu_j^*)^l,
\end{equation}
\begin{equation}
M_{11}(x,t,z)=\sum_{l=0}^{+\infty}\mu_{j,l}(x,t)(z-\nu_j)^l,\hspace{0.5cm}M_{11}(x,t,z)=\sum_{l=0}^{+\infty}(-1)^l\mu_{j,l}(x,t)(z+\nu_j)^l,
\end{equation}
\begin{equation}
M_{12}(x,t,z)=\sum_{l=0}^{+\infty}\zeta_{j,l}(x,t)(z-\nu_j^*)^l,\hspace{0.5cm}M_{12}(x,t,z)=\sum_{l=0}^{+\infty}(-1)^{l+1}\zeta_{j,l}(x,t)(z+\nu_j^*)^l,
\end{equation}
\end{subequations}
where
\begin{subequations}
\begin{align}
&f_{j,l}(x,t)=\lim_{z\rightarrow \nu_j}\frac{1}{l!}\frac{\partial^l}{\partial z^l}e^{2i\theta(z)},\\
&\mu_{j,l}(x,t)=\lim_{z\rightarrow \nu_j}\frac{1}{l!}\frac{\partial^l}{\partial z^l}M_{11}(x,t,z),\hspace{0.5cm}\zeta_{j,l}(x,t)=\lim_{z\rightarrow \nu_j^*}\frac{1}{l!}\frac{\partial^l}{\partial z^l}M_{12}(x,t,z).\label{mu1zeta1}
\end{align}
\end{subequations}
When $z\in D^+$, we have the expansions in $z=\nu_j$ ($j=1,2$)
\begin{align}
&M_{11}(z)=\mu_{-,11}=\sum_{l=0}^{+\infty}\mu_{j,l}(x,t)(z-\nu_j)^l,\\
&M_{12}(z)=\frac{\mu_{+,22}(x,t,z)}{s_{22}(z)}=e^{2i\theta}r(z)\mu_{-,11}(x,t,z)+\mu_{-,12}(x,t,z)
\end{align}
comparing the coefficients of $(z-\nu_j)^{-s}$ with (\ref{M12}), we can get
\begin{equation}
G_{j,s}(x,t)=\sum_{m_j=s}^N\sum_{l=0}^{m_j-s}r_{j,m_j}f_{j,m_j-s-l}(x,t)\mu_{j,l}(x,t).\label{G1}
\end{equation}
Similarly, from the expansions in $z=-\nu_j$ ($j=1,2$), we can get that
\begin{equation}
L_{j,s}(x,t)=\sum_{m_j=s}^N\sum_{l=0}^{m_j-s}(-1)^{s+1}r_{j,m_j}f_{j,m_j-s-l}(x,t)\mu_{j,l}(x,t).\label{L1}
\end{equation}
By the same method, when $z\in D^-$, we can obtain that
\begin{align}
&F_{j,s}(x,t)=-\sum_{m_j=s}^N\sum_{l=0}^{m_j-s}r_{j,m_j}^*f_{j,m_j-s-l}^*(x,t)\zeta_{j,l}(x,t),\label{F1}\\
&H_{j,s}(x,t)=\sum_{m_j=s}^{N}\sum_{l=0}^{m_j-s}(-1)^{s+1}r_{j,m_j}^*f_{j,m_j-s-l}^*(x,t)\zeta_{j,l}(x,t).\label{H1}
\end{align}
Actually, $\mu_{j,l}(x,t)$ and $\zeta_{j,l}(x,t)$ ($j=1,2$) can also be expressed by $F_{j,s}(x,t)$, $H_{j,s}(x,t)$, $G_{j,s}(x,t)$ and $L_{j,s}(x,t)$ ($j=1,2$). Recalling the definitions of $\zeta_{j,l}(x,t)$ and $\mu_{j,l}(x,t)$ ($j=1,2$) given by (\ref{mu1zeta1}) and substituting (\ref{M1}) into them, we can obtain
\begin{subequations}
\begin{align}
&\zeta_{j,l}(x,t)=\frac{(-1)^{l+1}}{(\nu_j^*)^{l+1}}iq_-+\sum_{p=1}^2\sum_{s=1}^N\left(\begin{array}{c}
                                 s+l-1\\
                                 l
                                 \end{array}\right)\Big\{\frac{(-1)^lG_{p,s}(x,t)}{(\nu_j^*-\nu_p)^{l+s}}+\frac{(-1)^lL_{p,s}(x,t)}{(\nu_j^*+\nu_p)^{l+s}}\Big\},l=0,1,...,\\
&\mu_{j,l}(x,t)=\left\{\begin{array}{l}
                   1+\sum_{p=1}^2\sum_{s=1}^N\Big\{\frac{F_{p,s}(x,t)}{(\nu_j-\nu_p^*)^s}+\frac{H_{p,s}(x,t)}{(\nu_j+\nu_p^*)^s}\Big\},\hspace{0.5cm}l=0;\\
                   \sum_{p=1}^2\sum_{s=1}^N\left(\begin{array}{c}
                   s+l-1\\
                   l\end{array}\right)\Big\{\frac{(-1)^lF_{p,s}(x,t)}{(\nu_j-\nu_p^*)^{s+l}}+\frac{(-1)^lH_{p,s}(x,t)}{(\nu_j+\nu_p^*)^{s+l}}\Big\},\hspace{0.5cm}l=1,2,3,...
                   \end{array}\right.
\end{align}\label{zeta2mu2}
\end{subequations}
Using (\ref{G1})-(\ref{zeta2mu2}), we obtain the system
\begin{subequations}
\begin{align}
\begin{split}
&F_{j,s}(x,t)=-iq_-\sum_{m_j=s}^N\sum_{l=0}^{m_j-s}\frac{(-1)^{l+1}}{(\nu_j^*)^{l+1}}r_{j,m_j}^*f_{j,m_j-s-l}^*(x,t)\\&-\sum_{m_j=s}^N\sum_{l=0}^{m_j-s}\sum_{p=1}^2\sum_{q=1}^N\left(\begin{array}{c}
                                                           q+l-1\\
                                                           l\end{array}\right)r_{j,m_j}^*f_{j,m_j-s-l}^*\Big\{\frac{(-1)^lG_{p,q}(x,t)}{(\nu_j^*-\nu_p)^{l+q}}+\frac{(-1)^lL_{p,q}(x,t)}{(\nu_j^*+\nu_p)^{l+q}}\Big\},
\end{split}
\end{align}
\begin{align}
\begin{split}
&H_{j,s}(x,t)=iq_-\sum_{m_j=s}^N\sum_{l=0}^{m_j-s}\frac{(-1)^{s+l}}{(\nu_j^*)^{l+1}}r_{j,m_j}^*f_{j,m_j-s-l}^*(x,t)\\&+\sum_{m_j=s}^N\sum_{l=0}^{m_j-s}\sum_{p=1}^2\sum_{q=1}^N(-1)^{s+1}\left(\begin{array}{c}
                                                           q+l-1\\
                                                           l\end{array}\right)r_{j,m_j}^*f_{j,m_j-s-l}^*\Big\{\frac{(-1)^lG_{p,q}(x,t)}{(\nu_j^*-\nu_p)^{l+q}}+\frac{(-1)^lL_{p,q}(x,t)}{(\nu_j^*+\nu_p)^{l+q}}\Big\},
\end{split}
\end{align}
\begin{align}
\begin{split}
&G_{j,s}(x,t)=\sum_{m_j=s}^Nr_{j,m_j}f_{j,m_j-s}(x,t)\\&+\sum_{m_j=s}^N\sum_{l=0}^{m_j-s}\sum_{p=1}^2\sum_{q=1}^N\left(\begin{array}{c}
                                                                                q+l-1\\
                                                                                l\end{array}\right)r_{j,m_j}f_{j,m_j-s-l}\Big\{\frac{(-1)^lF_{p,q}(x,t)}{(\nu_j-\nu_p^*)^{l+q}}+\frac{(-1)^lH_{p,q}(x,t)}{(\nu_j+\nu_p^*)^{l+q}}\Big\},
\end{split}
\end{align}
\begin{align}
\begin{split}
&L_{j,s}(x,t)=\sum_{m_j=s}^N(-1)^{s+1}r_{j,m_j}f_{j,m_j-s}(x,t)\\&+\sum_{m_j=s}^N\sum_{l=0}^{m_j-s}\sum_{p=1}^2\sum_{q=1}^N(-1)^{s+1}\left(\begin{array}{c}
                                                                                q+l-1\\
                                                                                l\end{array}\right)r_{j,m_j}f_{j,m_j-s-l}\Big\{\frac{(-1)^lF_{p,q}(x,t)}{(\nu_j-\nu_p^*)^{l+q}}+\frac{(-1)^lH_{p,q}(x,t)}{(\nu_j+\nu_p^*)^{l+q}}\Big\},
\end{split}
\end{align}
\end{subequations}
Let us define
\begin{align*}
&|\eta_j\rangle=(\eta_{j1},...,\eta_{jN})^T,\hspace{0.5cm}\eta_{js}=-iq_-\sum_{m_j=s}^N\sum_{l=0}^{m_j-s}\frac{(-1)^{l+1}}{(\nu_j^*)^{l+1}}r_{j,m_j}^*f_{j,m_j-s-l}^*(x,t), \hspace{0.5cm}j=1,2;\\
&|\tilde{\eta}_j\rangle=(\tilde{\eta}_{j1},...,\tilde{\eta}_{jN})^T,\hspace{0.5cm}\tilde{\eta}_{js}=iq_-\sum_{m_j=s}^N\sum_{l=0}^{m_j-s}\frac{(-1)^{s+l}}{(\nu_j^*)^{l+1}}r_{j,m_j}^*f_{j,m_j-s-l}^*(x,t), \hspace{0.5cm}j=1,2;\\
&|\xi_j\rangle=(\xi_{j1},...,\xi_{jN})^T,\hspace{0.5cm}\xi_{js}=\sum_{m_j=s}^Nr_{j,m_j}f_{j,m_j-s}(x,t), \hspace{0.5cm}j=1,2;\\
&|\tilde{\xi}_j\rangle=(\tilde{\xi}_{j1},...,\tilde{\xi}_{jN})^T,\hspace{0.5cm}\tilde{\xi}_{js}=\sum_{m_j=s}^N(-1)^{s}r_{j,m_j}f_{j,m_j-s}(x,t), \hspace{0.5cm}j=1,2;
\end{align*}
\begin{align*}
&\Omega_{jp}=[\Omega_{jp}]_{sq}=-\sum_{m_j=s}^N\sum_{l=0}^{m_j-s}\left(\begin{array}{c}
                                                           q+l-1\\
                                                           l\end{array}\right)\frac{(-1)^lr_{j,m_j}^*f_{j,m_j-s-l}^*}{(\nu_j^*-\nu_p)^{l+q}},\hspace{0.5cm}j,p=1,2;\\
&\Omega_{j,p+2}=[\Omega_{j,p+2}]_{sq}=-\sum_{m_j=s}^N\sum_{l=0}^{m_j-s}\left(\begin{array}{c}
                                                           q+l-1\\
                                                           l\end{array}\right)\frac{(-1)^lr_{j,m_j}^*f_{j,m_j-s-l}^*}{(\nu_j^*+\nu_p)^{l+q}},\hspace{0.5cm}j,p=1,2,;\\
&\Omega_{j+2,p}=[\Omega_{j+2,p}]_{sq}=\sum_{m_j=s}^N\sum_{l=0}^{m_j-s}(-1)^{s+1}\left(\begin{array}{c}
                                                           q+l-1\\
                                                           l\end{array}\right)\frac{(-1)^lr_{j,m_j}^*f_{j,m_j-s-l}^*}{(\nu_j^*-\nu_p)^{l+q}},\hspace{0.5cm}j,p=1,2;\\
&\Omega_{j+2,p+2}=[\Omega_{j+2,p+2}]_{sq}=\sum_{m_j=s}^N\sum_{l=0}^{m_j-s}(-1)^{s+1}\left(\begin{array}{c}
                                                           q+l-1\\
                                                           l\end{array}\right)\frac{(-1)^lr_{j,m_j}^*f_{j,m_j-s-l}^*}{(\nu_j^*+\nu_p)^{l+q}},\hspace{0.5cm}j,p=1,2.
\end{align*}
\begin{align*}
&|F_p\rangle=(F_{p,1},...,F_{p,N})^T,\hspace{0.5cm}|H_p\rangle=(H_{p,1},...,H_{p,N})^T,\hspace{0.5cm}p=1,2;\\
&|G_p\rangle=(G_{p,1},...,G_{p,N})^T,\hspace{0.5cm}|L_p\rangle=(L_{p,1},...,L_{p,N})^T,\hspace{0.5cm}p=1,2.
\end{align*}
Let
\begin{equation}
\Omega=\left(\begin{array}{ccc}
             \Omega_{11}&\cdots&\Omega_{14}\\
             \vdots&\ddots&\vdots\\
             \Omega_{41}&\cdots&\Omega_{44}
             \end{array}\right);
\end{equation}
and
\begin{align*}
&|\alpha_1\rangle=(|\eta_1\rangle,|\eta_2\rangle,|\tilde{\eta}_1\rangle,|\tilde{\eta}_2\rangle)^T,\hspace{0.5cm}|\alpha_2\rangle=(|\xi_1\rangle,|\xi_2\rangle,|\tilde{\xi}_1\rangle,|\tilde{\xi}_2\rangle)^T;\\
&|K_1\rangle=(|F_1\rangle,|F_2\rangle,|H_1\rangle,|H_2\rangle)^T,\hspace{0.5cm}|K_2\rangle=(|G_1\rangle,|G_2\rangle,|L_1\rangle,|L_2\rangle)^T.
\end{align*}
Using the similar method with zero boundary condition, we have
\begin{equation}
|K_2\rangle=-\Omega^*(I_\sigma+\Omega^*\Omega)^{-1}\alpha_1+(I_\sigma+\Omega^*\Omega)^{-1}\alpha_2,
\end{equation}
where
\begin{equation*}
I_\sigma=\left(\begin{array}{cccc}
               I\\
               &I\\
               & &-I\\
               & & &-I
               \end{array}\right)_{4N\times4N},
\end{equation*}
so that
\begin{align}
\begin{split}
M_{12}(x,t,z)&=-\frac{i}{z}q_-+\langle Y|K_2\rangle\\
&=-\frac{i}{z}q_-+\langle Y|(-\Omega^*(I_\sigma+\Omega^*\Omega)^{-1}\alpha_1+(I_\sigma+\Omega^*\Omega)^{-1}\alpha_2)\\
&=-\frac{i}{z}q_-+\frac{\det(I_\sigma+\Omega^*\Omega+|\alpha_2\rangle\langle Y|)-\det(I_\sigma+\Omega^*\Omega+|\alpha_1\rangle\langle Y|\Omega^*)}{\det(I_\sigma+\Omega^*\Omega)},
\end{split}
\end{align}
where
\begin{equation*}
\langle Y|=\Big(\frac{1}{z-\nu_1},...,\frac{1}{(z-\nu_1)^N},\frac{1}{z-\nu_2},...,\frac{1}{(z-\nu_2)^N},\frac{1}{z+\nu_1},...,\frac{1}{(z+\nu_1)^N},\frac{1}{z+\nu_2},...,\frac{1}{(z+\nu_2)^N}\Big)_{1\times 4N}.
\end{equation*}
\begin{theorem}
With the nonzero boundary condition (\ref{bc1}), the $N$th order soliton of GI equation is
\begin{equation}
q(x,t)=q_-+i\Big[\frac{\det(I_\sigma+\Omega^*\Omega+|\alpha_2\rangle\langle Y_0|)-\det(I_\sigma+\Omega^*\Omega+|\alpha_1\rangle\langle Y_0|\Omega^*)}{\det(I_\sigma+\Omega^*\Omega)}\Big],\label{q1}
\end{equation}
where,
\begin{equation}
\langle Y_0|=(1,0,...,0,1,0,...,0,1,0,...,0,1,0,...,0)_{1\times 4N}.
\end{equation}
\end{theorem}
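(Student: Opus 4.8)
The plan is to follow the route used for Theorem 1, adapting it to the nonzero background. The starting point is the reconstruction formula $q(x,t)=\lim_{z\to\infty}izM^{(12)}(x,t,z)$ together with the closed expression for $M_{12}$ already assembled just above the statement, namely $M_{12}(x,t,z)=-\tfrac{i}{z}q_-+\langle Y|K_2\rangle$ with $|K_2\rangle=-\Omega^*(I_\sigma+\Omega^*\Omega)^{-1}|\alpha_1\rangle+(I_\sigma+\Omega^*\Omega)^{-1}|\alpha_2\rangle$. Multiplying by $iz$ splits the computation into two contributions: the background term $iz\cdot(-\tfrac{i}{z}q_-)=q_-$, which survives the limit unchanged and supplies the constant $q_-$ in (\ref{q1}), and the rational term $iz\langle Y|K_2\rangle$.

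The second contribution requires controlling $\langle Y|$ as $z\to\infty$. Reading $\langle Y|$ block by block over the four pole clusters attached to $\pm\nu_1,\pm\nu_2$ (and their conjugates), each block has the form $(\tfrac{1}{z-\nu},\dots,\tfrac{1}{(z-\nu)^N})$, so $z\,\tfrac{1}{(z-\nu)^s}\to\delta_{s,1}$. Hence $z\langle Y|\to\langle Y_0|=(1,0,\dots,0,\dots)_{1\times 4N}$, the row vector selecting the leading $1/z$ coefficient in each block. Therefore $iz\langle Y|K_2\rangle\to i\langle Y_0|K_2\rangle$, and it remains only to rewrite this scalar as the stated ratio of determinants.

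For that last step I would invoke the rank-one determinant identity $\det(A+|u\rangle\langle v|)=\det(A)\,(1+\langle v|A^{-1}|u\rangle)$, which follows from the cited $\det(I_m+AB)=\det(I_n+BA)$, with $A=I_\sigma+\Omega^*\Omega$ applied twice: once with $|u\rangle=|\alpha_2\rangle$, $\langle v|=\langle Y_0|$, and once with $|u\rangle=|\alpha_1\rangle$, $\langle v|=\langle Y_0|\Omega^*$. This converts $\langle Y_0|(I_\sigma+\Omega^*\Omega)^{-1}|\alpha_2\rangle$ into $\det(I_\sigma+\Omega^*\Omega+|\alpha_2\rangle\langle Y_0|)/\det(I_\sigma+\Omega^*\Omega)-1$ and $\langle Y_0|\Omega^*(I_\sigma+\Omega^*\Omega)^{-1}|\alpha_1\rangle$ into the analogous expression; the two additive constants $-1$ cancel on subtraction, leaving exactly the bracketed fraction in (\ref{q1}), so that $q(x,t)=q_-+i\langle Y_0|K_2\rangle$ reproduces the claimed formula.

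The step I expect to be most delicate is bookkeeping rather than conceptual. The objects $\Omega$, $I_\sigma$, $|\alpha_1\rangle$, $|\alpha_2\rangle$ are assembled from $N\times N$ blocks tied to the quartet symmetry (\ref{quartetsym}), and one must check that $\langle Y_0|$ collapses $\langle Y|$ consistently across all four clusters and that the asymmetric appearance of $\Omega^*$ in only one of the two determinant terms is faithfully preserved. Compared with the zero-boundary case, the genuinely new point is that the factors $iq_-$ carried inside $|\alpha_1\rangle$ through $|\eta_j\rangle,|\tilde\eta_j\rangle$ (a consequence of the $-\tfrac{i}{z}q_-$ pole forced by the asymptotics (\ref{asymbehv2})) must combine correctly with the background term $q_-$; verifying this compatibility is where the nonzero-boundary derivation differs from that of Theorem 1.
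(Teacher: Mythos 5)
Your proposal is correct and takes essentially the same approach as the paper: the paper likewise reconstructs $q$ from $q(x,t)=\lim_{z\to\infty}izM^{(12)}$ applied to $M_{12}=-\tfrac{i}{z}q_-+\langle Y|K_2\rangle$, and converts the resulting scalars into determinant ratios via the rank-one identity coming from $\det(I_m+AB)=\det(I_n+BA)$. The only cosmetic difference is ordering: the paper first rewrites $\langle Y|K_2\rangle$ in determinant form and then lets $z\to\infty$ (replacing $\langle Y|$ by $\langle Y_0|$), while you pass to the limit $z\langle Y|\to\langle Y_0|$ first and then apply the identity, which is equivalent.
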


\subsection{ Multiple  high-order pole solutions}

Now we will study the general case that $s_{22}(z)$ has $N$ high-order zero points $z_1$, $z_2$,...,$z_N$, $z_k\in D^+$ for $k=1,2,...,N$, and their powers are $n_1$, $n_2$,...,$n_N$ respectively. Let $\nu_1^k=z_k$, $\nu_2^k=\frac{q_0^2}{z_k^*}$.  Let $r_{j}^k(z)$ be $r(z)'s$ Laurent series in $z=\nu_j^k$ ($j=1,2$), like the case of one high-order pole discussed above, we can obtain
\begin{align}
&r_{j}^k(z)=r_{j,0}^k(z)+\sum_{m_j=1}^{n_k}\frac{r_{j,m_j}^k}{(z-\nu_j^k)^{m_j}},\hspace{0.5cm}r_{j}^{*k}(z^*)=r_{j,0}^*(z^*)+\sum_{m_j=1}^{n_k}\frac{r_{j,m_j}^{k*}}{(z-\nu_j^{k*})^{m_j}},\\
&r_{j}^k(z)=\tilde{r}_{j,0}(z)+\sum_{m_j=1}^{n_k}\frac{(-1)^{m_j+1}r_{j,m_j}^k}{(z+\nu_j^k)^{m_j}},\hspace{0.5cm}r_{j}^{*k}(z^*)=\tilde{r}_{j,0}^*(z^*)+\sum_{m_j=1}^{n_k}\frac{(-1)^{m_j+1}r_{j,m_j}^{k*}}{(z+\nu_j^{k*})^{m_j}},
\end{align}
where
\begin{equation*}
r_{j,m_j}^k=\lim_{z\rightarrow \nu_j^k}\frac{1}{(n_k-m_j)!}\frac{\partial^{n_k-m_j}}{\partial k^{n_k-m_j}}\big[(z-\nu_j^k)^{n_k}r(z)\big],
\end{equation*}
and $r_{j,0}^k(z)$ ($k=1,...,N$) is analytic for all $z\in D^+$.

By the similar method in above, the multiple solitons of the GI equation are obtained as follows.
\begin{theorem}
With the nonzero boundary condition (\ref{bc1}), if $s_{22}(z)$ has $N$ distinct high-order poles, then the multiple solitons of GI equation have the same form as (\ref{q1})
\begin{equation}
q(x,t)=q_-+i\Big[\frac{\det(I_\sigma+\Omega^*\Omega+|\alpha_2\rangle\langle Y_0|)-\det(I_\sigma+\Omega^*\Omega+|\alpha_1\rangle\langle Y_0|\Omega^*)}{\det(I_\sigma+\Omega^*\Omega)}\Big],
\end{equation}
where
\begin{subequations}
\begin{align}
&|\alpha_1\rangle=\big(|\alpha_1^1\rangle,...,|\alpha_1^N\rangle\big)^T,\hspace{0.5cm}|\alpha_1^k\rangle=\big(|\eta_1^k\rangle,|\eta_2^k\rangle,|\tilde{\eta}_1^k\rangle,|\tilde{\eta}_2^k\rangle\big)^T,\hspace{0.1cm}k=1,...,N,\\
&|\alpha_2\rangle=\big(|\alpha_2^1\rangle,...,|\alpha_2^N\rangle\big)^T,\hspace{0.5cm}|\alpha_2^k\rangle=\big(|\xi_2^k\rangle,|\xi_2^k\rangle,|\tilde{\xi}_1^k\rangle,|\tilde{\xi}_2^k\rangle\big)^T,\hspace{0.1cm}k=1,...,N,\\
&|\eta_j^k\rangle=[|\eta_{j1}^k\rangle,...,|\eta_{jN}^k\rangle]^T,\hspace{0.5cm}|\xi_j^k\rangle=[|\xi_{j1}^k\rangle,...,|\xi_{jN}^k\rangle]^T,\hspace{0.1cm}j=1,2,\\
&\eta_{js}^k=-iq_-\sum_{m_j=s}^{n_k}\sum_{l=0}^{m_j-s}\frac{(-1)^{l+1}}{(\nu_j^{k*})^{l+1}}r_{j,m_j}^{k*}f_{j,m_j-s-l}^{k*}(x,t),\hspace{0.5cm}j=1,2;\\
&\tilde{\eta}_{js}^k=iq_-\sum_{m_j=s}^{n_k}\sum_{l=0}^{m_j-s}\frac{(-1)^{s+l}}{(\nu_j^{k*})^{l+1}}r_{j,m_j}^{k*}f_{j,m_j-s-l}^{k*}(x,t),\hspace{0.5cm}j=1,2;\\
&\xi_{js}^k=\sum_{m_j=s}^{n_k}r_{j,m_j}^kf_{j,m_j-s}^k(x,t),\hspace{0.5cm}\tilde{\xi}_{js}^k=-\sum_{m_j=s}^{n_k}(-1)^{s+1}r_{j,m_j}^kf_{j,m_j-s}^k(x,t),\hspace{0.2cm}j=1,2;\\
&\langle Y_0|=[\langle Y_1^0|,\langle Y_2^0|,...,\langle Y_N^0|],\hspace{0.5cm}\langle Y_k^0|=[1,0,...,0,1,0,...,0,1,0,...,0,1,0,...,0]_{1\times 4N},\\
&\Omega=\left(\begin{array}{cccc}
              [\omega_{11}]&[\omega_{12}]&\cdots&[\omega_{1N}]\\{}
              [\omega_{21}]&[\omega_{22}]&\cdots&[\omega_{2N}]\\
              \vdots &\vdots &\ddots &\vdots\\{}
              [\omega_{N1}]&[\omega_{N2}]&\cdots&[\omega_{NN}]
              \end{array}\right),\hspace{0.5cm}[\omega_{kh}]_{4n_k\times 4n_h}=\left(\begin{array}{ccc}
                                                                                      \omega_{kh}^{11}&\cdots&\omega_{kh}^{14}\\
                                                                                       \vdots&\ddots&\vdots\\
                                                                                      \omega_{kh}^{41}&\cdots&\omega_{kh}^{44}
                                                                                     \end{array}\right),\\
&\omega_{kh}^{jp}=[\omega_{kh}^{jp}]_{sq}=-\sum_{m_j=s}^{n_k}\sum_{l=0}^{m_j-s}\left(\begin{array}{c}
                                                           q+l-1\\
                                                           l\end{array}\right)\frac{(-1)^lr_{j,m_j}^{k*}f_{j,m_j-s-l}^{k*}}{(\nu_j^{k*}-\nu_p^k)^{l+q}},\hspace{0.5cm}j,p=1,2;\\
&\omega_{kh}^{j,p+2}=[\omega_{kh}^{j,p+2}]_{sq}=-\sum_{m_j=s}^{n_k}\sum_{l=0}^{m_j-s}\left(\begin{array}{c}
                                                           q+l-1\\
                                                           l\end{array}\right)\frac{(-1)^lr_{j,m_j}^{k*}f_{j,m_j-s-l}^{k*}}{(\nu_j^{k*}+\nu_p^k)^{l+q}},\hspace{0.5cm}j,p=1,2,;
\end{align}
\begin{align}
&\omega_{kh}^{j+2,p}=[\omega_{j+2,p}]_{sq}=\sum_{m_j=s}^{n_k}\sum_{l=0}^{m_j-s}(-1)^{s+1}\left(\begin{array}{c}
                                                           q+l-1\\
                                                           l\end{array}\right)\frac{(-1)^lr_{j,m_j}^{k*}f_{j,m_j-s-l}^{k*}}{(\nu_j^{k*}-\nu_p^k)^{l+q}},\hspace{0.5cm}j,p=1,2;\\
&\omega_{kh}^{j+2,p+2}=[\omega_{j+2,p+2}]_{sq}=\sum_{m_j=s}^{n_k}\sum_{l=0}^{m_j-s}(-1)^{s+1}\left(\begin{array}{c}
                                                           q+l-1\\
                                                           l\end{array}\right)\frac{(-1)^lr_{j,m_j}^{k*}f_{j,m_j-s-l}^{k*}}{(\nu_j^{k*}+\nu_p^k)^{l+q}},\hspace{0.5cm}j,p=1,2;\\
&I_\sigma=\left(\begin{array}{ccc}
                 I_{\sigma_1}\\
                 &\vdots&\\
                 &&I_{\sigma_N}
                  \end{array}\right),\hspace{0.5cm}I_{\sigma_k}=\left(\begin{array}{cccc}
                                                                      I\\
                                                                      &I\\
                                                                      &&-I\\
                                                                      &&&-I
                                                                      \end{array}\right)_{4n_k\times 4n_k},\hspace{0.1cm}k=1,...,N.
\end{align}
\end{subequations}
\end{theorem}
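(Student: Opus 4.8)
The plan is to run the single high-order pole argument of Theorem 3 essentially verbatim, replacing the single quartet of order $N$ by the $N$ distinct quartets generated by $z_1,\dots,z_N$. Writing $\nu_1^k=z_k$ and $\nu_2^k=q_0^2/z_k^*$, the symmetries (\ref{symmetry1})--(\ref{symmetry3.1}) force the discrete spectrum attached to $z_k$ to be $\{\pm\nu_1^k,\pm\nu_2^k,\pm\nu_1^{k*},\pm\nu_2^{k*}\}$, each point an $n_k$th order pole of the relevant entry of $M$. First I would posit the global Laurent representation dictated by this configuration together with the normalization (\ref{asymbehv1})--(\ref{asymbehv2}),
\begin{align*}
M_{11}&=1+\sum_{k=1}^N\sum_{j=1}^2\sum_{s=1}^{n_k}\Big(\frac{F_{j,s}^k}{(z-\nu_j^{k*})^s}+\frac{H_{j,s}^k}{(z+\nu_j^{k*})^s}\Big),\\
M_{12}&=-\frac{i}{z}q_-+\sum_{k=1}^N\sum_{j=1}^2\sum_{s=1}^{n_k}\Big(\frac{G_{j,s}^k}{(z-\nu_j^k)^s}+\frac{L_{j,s}^k}{(z+\nu_j^k)^s}\Big),
\end{align*}
the coefficient functions now carrying the block index $k$.

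Next, for each fixed pole I would expand the analytic-continuation identity $M_{12}=e^{2i\theta}r(z)M_{11}+\mu_{-,12}$ in powers of $(z\mp\nu_j^k)$ and match the coefficient of $(z-\nu_j^k)^{-s}$; this reproduces the formulas (\ref{G1})--(\ref{L1}) for $G_{j,s}^k,L_{j,s}^k$ and, from the continuation near $\nu_j^{k*}$, the formulas (\ref{F1})--(\ref{H1}) for $F_{j,s}^k,H_{j,s}^k$, all decorated with the extra superscript $k$. These express the four coefficient families as convolutions of the residue data $r_{j,m_j}^k$, the exponential Taylor coefficients $f_{j,l}^k$, and the local Taylor coefficients $\mu_{j,l}^k,\zeta_{j,l}^k$ of $M_{11},M_{12}$.

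The step that genuinely departs from the single-pole case is the back-substitution that closes the system. Differentiating the global ansatz and evaluating at $z=\nu_j^k$ and $z=\nu_j^{k*}$ expresses each $\mu_{j,l}^k,\zeta_{j,l}^k$ as a sum over \emph{all} pole families $(p,h)$, and it is precisely here that the cross-pole denominators $(\nu_j^{k*}\mp\nu_p^h)^{l+q}$ coupling distinct blocks appear. Feeding these back into (\ref{G1})--(\ref{H1}) collapses the relations into a single closed linear system for $|K_2\rangle=(|G\rangle,|L\rangle)$ coupled to $|K_1\rangle=(|F\rangle,|H\rangle)$, whose coefficient matrix is the block matrix $\Omega=([\omega_{kh}])_{k,h=1}^N$ with $4n_k\times 4n_h$ blocks assembled from the stated $\omega_{kh}^{jp}$, while the $q_-$-laden sources and the $\xi$-sources package into $|\alpha_1\rangle$ and $|\alpha_2\rangle$. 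I expect the main obstacle to be purely combinatorial: verifying that the quadruple sums over $(m_j,l,p,q)$ together with the two families $j=1,2$ repackage exactly into the claimed block entries, with the signs $(-1)^{s+1}$, $(-1)^l$ landing in the correct slots of $I_\sigma=\mathrm{diag}(I_{\sigma_1},\dots,I_{\sigma_N})$.

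Finally I would eliminate $|K_1\rangle$ exactly as in Theorem 3 to obtain $|K_2\rangle=(I_\sigma+\Omega^*\Omega)^{-1}|\alpha_2\rangle-\Omega^*(I_\sigma+\Omega^*\Omega)^{-1}|\alpha_1\rangle$, and reconstruct the potential through $q=\lim_{z\to\infty}izM_{12}$, the large-$z$ form of (\ref{q11}). Since $M_{12}=-\frac{i}{z}q_-+\langle Y|K_2\rangle$ and $(z\mp\nu)^{-s}\sim z^{-1}$ only for $s=1$, the limit returns $q=q_-+i\langle Y_0|K_2\rangle$, where $\langle Y_0|$ retains the first-order components of $\langle Y|$. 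Applying the identity $\det(I_m+AB)=\det(I_n+BA)$ to each of the two resulting scalar products, precisely as in the proof of Theorem 3, converts $\langle Y_0|K_2\rangle$ into the quotient of determinants in the statement. All analytic inputs---unique solvability of the Jost integral equations, the symmetry reductions, and the $z\to\infty$, $z\to0$ asymptotics---are already furnished by the propositions preceding the RH problem, so the argument requires no new estimates, only careful block bookkeeping.
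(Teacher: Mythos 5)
Your proposal is correct and takes essentially the same route as the paper: the paper gives no separate argument for this theorem, stating only that it is obtained ``by the similar method in above,'' i.e., by rerunning the single high-order pole derivation with the block index $k$ attached to every quantity, which is exactly what you do. Your handling of the cross-pole couplings, the elimination of $|K_1\rangle$, the large-$z$ limit yielding $q=q_-+i\langle Y_0|K_2\rangle$, and the rank-one determinant identity all match the mechanism of the paper's Theorem 3 proof.
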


We then give the figures of one-soliton solution with one second-order pole
\begin{figure}[H]
    \centering
    \subfigure{
    \includegraphics[width=0.46\textwidth]{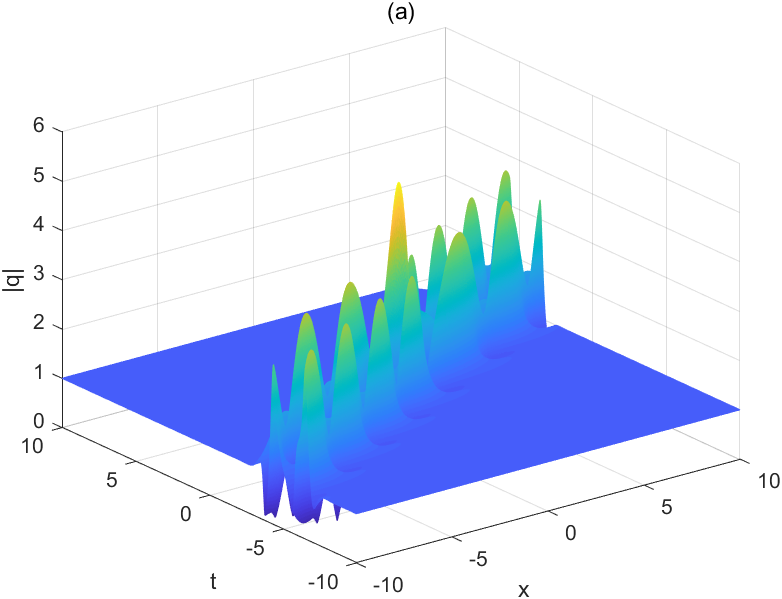}
   }
    \subfigure{
    \includegraphics[width=0.4\textwidth]{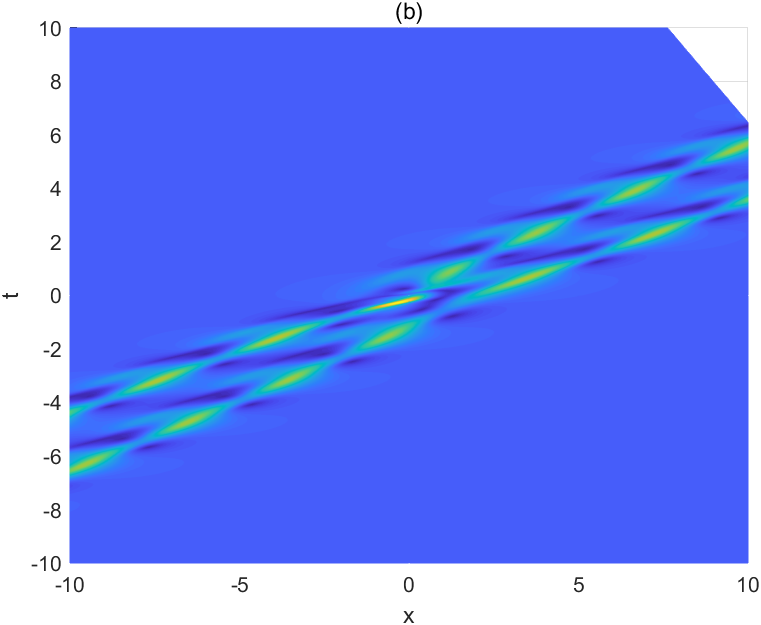}
    }
\flushleft{\footnotesize {\bf Figure 6.} One-soliton with one second-order pole, here taking  parameters $r_{11}=1$, $r_{12}=2$,$r_{21}=4$, $r_{22}=3$, $q_0=1$, $z_0=2e^{\frac{\pi i}{4}}$. (a): The three-dimensional graph. (b): The contour of the wave.

 }
    \end{figure}

\noindent\textbf{Acknowledgements}

This work is supported by  the National Science
Foundation of China (Grant No. 11671095,  51879045).

\hspace*{\parindent}
\\


\begin{thebibliography}{10}



\bibitem{RN19}
Gardner  C.S., Greene  J.M., Kruskal  M.D., Miura  R.M.,
\newblock {Method for solving the Korteweg-deVries equation}.
\newblock {\em Physical Review Letters}, 19(1967), 1095-1097.


\bibitem{RN16}
Ablowitz  M.J., Clarkson  P.A.,
\newblock {\em Soliton, Nonlinear Evolution Equations and Inverse Scattering}.
\newblock Cambridge University Press, Cambridge, 1991.

\bibitem{RN20}
Yang  J.K.,
\newblock {\em Nonlinear Waves in Integrable and Nonintegrable Systems}.
\newblock Philadelphia: Soc Indus Appl Math, 2010.

\bibitem{RN15}
Deift  P., Zhou  X.,
\newblock {A steepest descent method for oscillatory Riemann-Hilbert problems},
\newblock  {\em Annals of Mathematics}, 137(1993),  295-368.

\bibitem{RN17}
Biondini  G., Kova$\breve{c}$i$\breve{c}$  G.,
\newblock {Inverse scattering transform for the focusing nonlinear Schr\"{o}dinger equation with nonzero boundary conditions},
\newblock {\em Journal of Mathematical Physics}, 55(2014), 1-22.



\bibitem{RN18}
Pichler  M., Biondini  G.,
\newblock On the focusing non-linear Schr\"{o}dinger equation with non-zero boundary conditions and double poles.
\newblock {\em IMA Journal of Applied Mathematics}, 82(2017), 131-151.



\bibitem{RN21}
Biondini   G., Kraus  D.,
\newblock Inverse scattering transform for the defocusing Manakov system with nonzero boundary conditions.
\newblock {\em SIAM Journal on Mathematical Analysis}, 47(2015), 706-757.

\bibitem{RN22}
Prinari  B., Ablowitz  M.J.,  Biondini  G.,
\newblock Inverse scattering transform for the vector nonlinear Schr\"{o}dinger equation with nonvanishing boundary conditions.
\newblock {\em Journal of Mathematical Physics}, 47(2006), 1-32.

\bibitem{RN23}
Biondini  G., Kraus  D.K., Prinari  B.,
\newblock {The three-component defocusing nonlinear Schr\"{o}dinger equation with nonzero boundary conditions}.
\newblock {\em Communications in Mathematical Physics}, 348(2016), 475-533.





\bibitem{RN24}
Bilman D.,  Buckingham R.,
\newblock Large-Order Asymptotics for Multiple-Pole Solitons of the Focusing Nonlinear Schr$\ddot{o}$dinger Equation.
\newblock {\em Journal of Nonlinear Science}, 29(2019), 2185-2229.

\bibitem{RN25}
Zakharov V.E.,  Shabat A.B.,
\newblock Exact Theory of Two-dimensional Self-focusing and One-dimensional Self-modulating Waves in Nonlinear Media.
\newblock {\em Soviet Physics JETP}, 34(1972), 62-69.

\bibitem{RN26}
Kodama Y.,  Hasegawa A.,
\newblock Nonlinear Pulse Propagation in a Monomode Dielectric Guide.
\newblock {\em IEEE Journal of Quantum Electrics}, 23(1987), 510-524.

\bibitem{RN27}
Kodama Y.,  Hasegawa A.,
\newblock Fission of Optical Solitons Induced by Stimulated Raman Effect.
\newblock {\em Optics Letters}, 13(1988), 392-394.

\bibitem{RN28}
Tsuru H.,   Wadati M.,
\newblock The Multiple Pole Solutions of the Sine-Gordon Equation.
\newblock {\em Journal of the Physical Society of Japan}, 53(1984), 2908-2921.

\bibitem{RN29}
Wadati M.,  Ohkuma K.,
\newblock Multiple-Pole Solutions of the Modified Korteweg-de Vries Equation.
\newblock {\em Journal of the Physical Society of Japan}, 51(1982), 2029-2035.

\bibitem{RN30}
Zhang YS.,  Tao XX.,  Xu SW.,
\newblock The Bound-State Soliton Solutions of the Complex Modified KdV Equation.
\newblock {\em Inverse Problems}, 36(2020).


\bibitem{zy2019}
Zhang Y.S., Rao J.G., Cheng Y. and He J.S.: Riemann-Hilbert Method for the Wadati-Konno-Ichikawa Equation: N Simple Poles and One Higher-Order Pole, Phys. D, \textbf{399}, 173-185 (2019).

\bibitem{zys2020}
Zhang Y.S., Tao X.X., Yao T.T. and He J.S.: The Regularity of the Multiple Higher-Order Poles Solitons of the NLS Equation, Stud. Appl. math., (September 2020).


\bibitem{RN1}
Zakharov   V.E.,   Shabat   A.B.,
\newblock Exact theory of two-dimensional self-focusing and one-dimensional self-modulation of waves in nonlinear media.
\newblock {\em  Sov. Phys. JETP}, 34(1972), 62-69.






\bibitem{RN2}
Demontis  F.,  Prinari  B., van der Mee  C., Vitale  F.,
\newblock The inverse scattering transform for the defocusing nonlinear schr\"{o}dinger equations with nonzero boundary conditions.
\newblock {\em Studies in Applied Mathematics}, 131(2013), 1-40.

\bibitem{RN3}
Kakei  S., Sasa  N., Satsuma  J.,
\newblock Bilinearization of a generialized derivative nonlinear Schr\"{o}dinger equation.
\newblock {\em Journal of The Physical Society of Japan},
  64(1995), 1519-1523.


\bibitem{RN5}
Kundu  A.,
\newblock Exact solutions to higher-order nonlinear equations through gauge transformation.
\newblock {\em Physica D}, 25(1987), 399-406.

\bibitem{EN}
Fan  E.G.,
\newblock A family of completely integrable multi-Hamiltonian systems explicitly related to some celebrated equations.
\newblock {\em Journal Mathematical Physics}, 42(2001), 4327-4344.


\bibitem{RN8}
Kaup  D.J., Newell  A.C.,
\newblock An exact solution for a derivative nonlinear Schr\"{o}dinger equation.
\newblock {\em Journal of Mathematical Physics}, 19(1978),  798-801.

\bibitem{RN6}
Chen  H.H., Lee  Y.C., Liu  C.S.,
\newblock Integrability of nonlinear Hamiltonian systems by inverse scattering method.
\newblock {\em Physica Scripta}, 20(1979), 490-492.

\bibitem{RN7}
Gerdjikov  V.S., Ivanov  I.,
\newblock A quadratic pencil of general type and nonlinear evolution equations.\uppercase\expandafter{\romannumeral2}.Hierarchies of Hamiltonian structures.
\newblock {\em Bulgarian Journal of Physics}, 10(1983), 130-143.

\bibitem{APPLI1}
Tzoar  N., Jain  M.,
\newblock {Self-phase modulation in long-geometry optical waveguides}.
\newblock{\em Physical Review A}, 23(1981), 1266-1270.

\bibitem{APPLI2}
Kodama  Y.,
\newblock{Optical solitons in a monomode fiber}.
\newblock{\em Journal of Statistical Physics}, 39(1985), 597-614.


\bibitem{APPLI4}
Rogister  A.,
\newblock{Parallel Propagation of Nonlinear Low-Frequency Waves in High-$\beta$ Plasma}.
\newblock{\em Physics of Fluids}, 14(1971), 2733-2739.

\bibitem{APPLI5}
Nakatsuka H., Grischkowsky  D., Balant A.C.,
\newblock{Nonlinear Picosecond-Pulse Propagation through Optical Fibers with Positive Group Velocity Dispersion}.
\newblock {\em Physical Review Letters}, 47(1981), 910-913.



\bibitem{APPLI6}
Einar M.,
\newblock{Nonlinear alfv$\acute{e}$n waves and the dnls equation: oblique aspects}.
\newblock {\em Physica Scripta}, 40(1989), 227-237.

\bibitem{APPLI7}
Agrawal  G.P.,
\newblock{Nonlinear fiber optics}.
\newblock Academic Press, Boston, 2007.




\bibitem{RN9}
Fan  E.G.,
\newblock Darboux transformation and solion-like solutions for the Gerdjikov-Ivanov equation.
\newblock {\em Journal of Physics A}, 33(2000), 6925-6933.



\bibitem{RN11}
Dai  H.H., Fan  E.G.,
\newblock Variable separation and algebro-geometric solutions of the Gerdjikov-Ivanov equation.
\newblock {\em Chaos, Solitons and Fractals}, 22(2004), 93-101.

\bibitem{RN12}
Hou  Y., Fan  E.G., Zhao  P.,
\newblock Algebro-geometric solutions for the Gerdjikov-Ivanov hierarchy.
\newblock {\em Journal of Mathematical Physics}, 54(2013), 1-30.

\bibitem{RN13}
He  B., Meng  Q.,
\newblock Bifurcations and new exact travelling wave solutions for the Gerdjikov-Ivanov equation.
\newblock {\em Communications in Nonlinear Science and Numerical Simulation}, 15(2010), 1783-1790.

\bibitem{RN14}
Kakei  S., Kikuchi  T.,
\newblock Solutions of a derivative nonlinear Schr\"{o}dinger hierarchy and its similarity reduction.
\newblock {\em Glasgow Mathematical Journal}, 47(2005), 99-107.

\bibitem{NZG}
 Nie   H.,  Zhu  J.Y.,   Geng  X.G.,
\newblock {Trace formula and new form of N-soliton to the Gerdjikov-Ivanov equation}.
\newblock {\em Anal. Math. Phys.},  8(2018),  415-426.



\bibitem{RN31}
Zhang ZC.,  Fan EG.,
\newblock Inverse scattering transform for the Gerdjikov-Ivanov equation with nonzero boundary conditions.
\newblock {\em Z.  Angew.   Math.  Phys.}, 71(2020).




\end{thebibliography}
\end{document}